\tikzstyle{rstate}=[state,ellipse]
\tikzset{>={latex}}
\newtheorem{theorem}{Theorem}[section]
\newtheorem{corollary}[theorem]{Corollary}
\newtheorem{proposition}[theorem]{Proposition}
\newtheorem{lemma}[theorem]{Lemma}
\newtheorem{examplethm}[theorem]{Example}
\newenvironment{example}{\begin{examplethm}\em}
	{\qed\end{examplethm}}
\newcommand{\eat}[1]{}
\newenvironment{citedtheorem}[1]
{\begin{theorem}\hskip-0.2em\e{\cite{#1}}\,\,}
	{\end{theorem}}
\def\Wone{\mathrm{W[1]}}
\def\NP{\mathrm{NP}}
\newcommand{\str}[1]{\mathbf{#1}} 
\newcommand{\strs}{\str{s}} 
\newcommand{\strr}{{\str{r}}} 
\DeclareMathOperator*{\btie}{\Bowtie}
\def\join{\mathbin{\btie}}
\DeclareMathOperator*{\bjoin}{\mathlarger{\mathlarger{\mathlarger{\join}}}}
\newcommand{\sel}{\zeta^=}
\newcommand{\proj}{\pi}
\newcommand{\spnr}[1]{[\![ {#1} ]\!]} 
\newcommand{\rc}[1]{\mathsf{#1}}
\newcommand{\svars}{\rc{Vars}}
\newcommand{\spn}[1]{[#1\rangle}
\newcommand{\vop}[1]{\mathop{#1{\vdash}}}
\newcommand{\vcl}[1]{\mathbin{{\dashv}#1}}
\newcommand{\xalphabet}{\boldsymbol{\Gamma}}
\newcommand{\xalphabetv}{\xalphabet_{V}}
\newcommand{\clean}{\mathsf{clr}} 
\newcommand{\bind}[2]{\mathop{#1\{#2\}}}
\newcommand{\ror}{\vee}
\newcommand{\cdoto}{\mathop{\cdot}}
\newcommand{\rlang}{\mathcal{R}}
\newcommand{\lang}{\mathcal{L}}
\newcommand{\refl}{\mathsf{Ref}}
\newcommand{\ta}{\mathtt{a}}
\newcommand{\tb}{\mathtt{b}}
\newcommand{\tc}{\mathtt{c}}
\newcommand{\df}{:=}
\newcommand{\emptyword}{\epsilon}
\def\e#1{\emph{#1}}
\def\dl{\mathrel{{:}{-}}}
\def\set#1{\{#1\}}
\def\atoms{\mathsf{atoms}}
\def\tup#1{\mathbf{#1}}
\newcommand{\penum}{\mathbf{E}}
\newcommand{\memstat}{\mathcal{V}}
\newcommand{\memf}{\vec{c}}
\newcommand{\mems}{\vec{c}} 
\newcommand{\Memf}{\vec{C}}
\newcommand{\mo}{\mathtt{o}}
\newcommand{\mc}{\mathtt{c}}
\newcommand{\mw}{\mathtt{w}}
\newcommand{\eclos}{\mathcal{E}}
\newcommand{\visiblespace}{\text{\textvisiblespace}}
\newcommand{\Konfset}{\mathcal{K}}
\newcommand{\korder}{<_{\Konfset}}
\newcommand{\rorder}{<_{\mathsf{r}}}
\newcommand{\undefined}{\bot}
\newcommand{\minlet}{\mathsf{minLetter}}
\newcommand{\nextlet}{\mathsf{nextLetter}}
\newcommand{\minword}{\mathsf{minString}}
\newcommand{\nextword}{\mathsf{nextString}}
\newcommand{\enumalgo}{\mathsf{enumerate}}
\newcommand{\Opr}{\mathsf{O}}
\newcommand{\rep}{\varrho}
\newcommand{\RGX}{\mathsf{RGX}}
\newcommand{\VA}{\mathsf{VA}}
\newcommand{\vset}{\mathsf{set}}
\newcommand{\VAset}{\mathsf{\VA_{\vset}}}
\newcommand{\puj}{{\{\pi,\cup,\bowtie\}}} 
\newcommand{\pujd}{{\{\pi,\cup,\bowtie\setminus\}}} 
\newcommand{\core}{{\{\pi,\sel,\cup,\bowtie\}}}
\newcommand{\Ajoin}{A_{\mathsf{join}}}	
\newcommand{\Aeq}{A_{\mathsf{eq}}}
\newenvironment{repeatresult}[2]
{\vskip0.5em\par\textsc{#1 #2.}\em}
{\vskip1em}
\newenvironment{repproposition}[1]{\begin{repeatresult}{Proposition}{#1}}{\end{repeatresult}}
\newenvironment{reptheorem}[1]{\begin{repeatresult}{Theorem}{#1}}{\end{repeatresult}}
\newenvironment{replemma}[1]{\begin{repeatresult}{Lemma}{#1}}{\end{repeatresult}}
\def\alphabet{\boldsymbol{\Sigma}}
\def\wild{^*}
\def\partitle#1{\vskip0.5em\noindent\textbf{#1}\,\,}
\newenvironment{citemize}
{\begin{compactitem}}
	{\end{compactitem}}
\newenvironment{cenumerate}
{\begin{compactenum}}
	{\end{compactenum}}
\def\footnotes#1{\footnote{\small #1}}
\begin{document}
	
	\conferenceinfo{??}{??}

	\pagestyle{plain}
	\pagenumbering{arabic}
	
	\title{Joining Extractions of Regular Expressions}  
	\numberofauthors{3} 
	\author{
		\alignauthor Dominik D.~Freydenberger
		\affaddr{Bayreuth University,}\\
		\affaddr{Bayreuth, Germany} \email{ddfy@ddfy.de}
		\alignauthor  Benny Kimelfeld\\
		\affaddr{Technion}\\
		\affaddr{Haifa 32000, Israel}
		\email{bennyk@cs.technion.ac.il}
		\alignauthor Liat Peterfreund\\
		\affaddr{Technion}\\
		\affaddr{Haifa 32000, Israel} \email{liatpf@cs.technion.ac.il} }
	
	\maketitle
	\begin{abstract}
          Regular expressions with capture variables, also known as
          ``regex formulas,'' extract relations of spans (interval
          positions) from text. These relations can be further
          manipulated via Relational Algebra as studied in the context
          of document spanners, Fagin et al.'s formal framework for
          information extraction. We investigate the complexity of
          querying text by Conjunctive Queries (CQs) and Unions of CQs
          (UCQs) on top of regex formulas. We show that the lower
          bounds (NP-completeness and W[1]-hardness) from the
          relational world also hold in our setting; in particular,
          hardness hits already single-character text!  Yet, the upper
          bounds from the relational world do not carry over. Unlike
          the relational world, acyclic CQs, and even gamma-acyclic
          CQs, are hard to compute. The source of hardness is that it
          may be intractable to instantiate the relation defined by a
          regex formula, simply because it has an exponential number
          of tuples.  Yet, we are able to establish general
          upper bounds. In particular, UCQs can be evaluated with
          polynomial delay, provided that every CQ has a bounded
          number of atoms (while unions and projection can be
          arbitrary). Furthermore, UCQ evaluation is solvable with FPT
          (Fixed-Parameter Tractable) delay when the parameter is the
          size of the UCQ.
	\end{abstract}



	\section{Introduction}
	
	Information Extraction (IE) conventionally refers to the task of
	automatically extracting structured information from text.  While
	early work in the area focused largely on military
	applications~\cite{GrishmanS96}, this task is nowadays pervasive in a
	plethora of computational challenges (especially those associated with
	Big Data), including social media
	analysis~\cite{DBLP:conf/acl/BensonHB11}, healthcare
	analysis~\cite{DBLP:journals/jamia/XuSDJWD10}, customer relationship
	management~\cite{DBLP:conf/www/AjmeraANVCDD13}, information
	retrieval~\cite{DBLP:conf/www/ZhuRVL07}, machine log
	analysis~\cite{DBLP:conf/icdm/FuLWL09}, and general-purpose Knowledge
	Base
	construction~\cite{DBLP:journals/pvldb/ShinWWSZR15,DBLP:conf/www/SuchanekKW07,DBLP:journals/ai/HoffartSBW13,DBLP:conf/naacl/YatesBBCES07}.
	
	One of the major commercial systems for rule-based IE is IBM's
	SystemT\footnotes{SystemT is the IE engine of IBM BigInsights,
		available for download as Quick Start Edition Trial at
		\url{http://www-03.ibm.com/software/products/en/ibm-biginsights-for-apache-hadoop}
		as of 12/2016.} that exposes an SQL-like declarative language named
	\e{AQL} (Annotation Query Language), along with a query plan
	optimizer~\cite{DBLP:conf/icde/ReissRKZV08} and development
	tooling~\cite{DBLP:journals/pvldb/LiuCCJR10}. Conceptually, AQL
	supports a collection of ``primitive'' extractors of relations from
	text (e.g., tokenizer, dictionary lookup, regex matcher and
	part-of-speech tagger), along with an algebra for relational
	manipulation (applied to the relations obtained from the primitive
	extractors). A similar approach is adopted by
	Xlog~\cite{DBLP:conf/vldb/ShenDNR07}, where user-defined functions,
	playing the role of primitive extractors, are manipulated by
	non-recursive
	Datalog. DeepDive~\cite{DBLP:journals/pvldb/ShinWWSZR15,DBLP:journals/sigmod/SaRR0WWZ16}
	exposes a declarative language for rules and features, which are
	eventually translated into the factors of a statistical model where
	parameters (weights) are set by machine learning. There, the primitive
	extractors are defined in a scripting language, and the generated
	relations are again manipulated by relational rules.
	
	Fagin et al.~\cite{fag:doc} proposed the framework of \e{document
		spanners} (or just \e{spanners} for short) that captures the
	relational philosophy of the aforementioned systems. Intuitively, a
	spanner extracts from a document $\str s$ (which is a string over a
	finite alphabet) a relation over the spans of $\str s$. A \e{span} of
	$\str s$ represents a substring of $\str s$ that is identified by the
	start and end indices.  An example of a spanner representation is a
	\e{regex formula}: a regular expression with embedded capture
	variables that are viewed as relational attributes. A \e{regular}
	spanner is one that can be expressed in the closure of the regex
	formulas under relational operators projection, union, and join.
	
	\newcommand{\asub}{\alpha_{\mathsf{sub}}}
	As an example, it is common to apply sentence boundary detection by
	evaluating a regex formula~\cite{Walker01sentenceboundary}, which we
	denote as $\alpha_{\mathsf{sen}}[x]$. When evaluating
	$\alpha_{\mathsf{sen}}[x]$ over a string $\strs$ (representing text in
	natural language), the result is a relation with a single attribute,
	$x$, where each tuple consists of the left and right boundaries of a
	sentence. The following regular spanner detects sentences that contain
	both an address in Belgium and the substring \texttt{police} inside
	them.  It assumes a regex formula $\alpha_{\mathsf{adr}}[y,z]$ that
	extracts (annotates) spans $y$ that represent addresses (e.\,g.,
	\texttt{Place de la Nation 2, 1000 Bruxelles, Belgium}) with the
	country being $z$ (e.g., the span of \texttt{Belgium}), the regex
	formulas $\alpha_{\mathsf{blg}}[z]$ and $\alpha_{\mathsf{plc}}[w]$
	that extract spans that are tokens with the words \texttt{Belgium} and
	\texttt{police}, respectively, and the regex formula
	$\asub[y,x]$ that extracts all pairs of spans $(x,y)$
	such that $y$ is a subspan of (i.e., has boundaries within) $x$.
	\begin{align}
		\pi_{x}\big(&
		\label{eq:intro-cq}
		\alpha_{\mathsf{sen}}[x]
		\join
		\alpha_{\mathsf{adr}}[y,z]
		\join
		\asub[y,x]\\
		\notag
		&\join
		\alpha_{\mathsf{blg}}[z]
		\join
		\alpha_{\mathsf{plc}}[w]
		\join
		\asub[w,x]
		\big)
	\end{align}
	For example, $\asub[y,x]$ can be represented as the regex
	formula
	$
	\alphabet\wild\cdot\bind{x}{\alphabet\wild\cdot\bind{y}{\alphabet\wild}\cdot\alphabet\wild}\cdot\alphabet\wild
	$
	where, as we later explain in detail, $\alphabet\wild$ matches every
	possible string.  A key construct in the framework of Fagin et
	al.~\cite{fag:doc} is the \e{variable-set automaton} that was proved
	that capture precisely the expressive power of regular spanners. A
	\e{core} spanner is defined similarly to a regular spanner, but it
	also allows the string-equality selection predicate on spans (an example can be found further down in this section).  Later
	developments of the framework include the exploration of the
	complexity of static-analysis tasks on core spanners~\cite{fre:doc},
	incorporating inconsistency repairing in
	spanners~\cite{DBLP:journals/tods/FaginKRV16}, a logical
	characterization~\cite{fre:log}, and a uniform model for relational
	data and IE~\cite{DBLP:conf/webdb/NahshonPV16}.
	
	In this paper we explore the computational complexity of evaluating
	Conjunctive Queries (CQs) and more generally Unions of CQs (UCQs) in
	the (regular and core) spanner framework. Such queries are defined
	similarly to the relational-database world, with two
	differences. First, the input is not a relational database, but rather
	a string. Second, the atoms are not relational symbols, but rather
	regex formulas. Therefore, the \e{relations} on which the query is
	applied are implicitly defined as those obtained by evaluating each
	regex formula over the input string. We refer to these queries as
	\e{regex CQs} and \e{regex UCQs}, respectively.  As an example, the
	above query in~\eqref{eq:intro-cq} is a regex CQ.  Under \e{data
		complexity}, where the query is assumed fixed, query evaluation is
	always doable in polynomial time. Here we focus on \e{combined
		complexity} where both the query and the data (string) are given as
	input. One might be tempted to claim that the literature on UCQ
	evaluation to date should draw the complete picture on complexity, by
	what we refer to as the \e{canonical relational} evaluation: evaluate
	each indvidual regex formula on the input string, and compute the UCQ
	as if it were query on an ordinary relational database. There are,
	however, several problems with this claim.
	
	The first problem is that lower bounds on UCQ evaluation do not carry
	over immediately to our setting, since the input is not an arbitrary
	relational database, but rather a very specific one: every relation is
	obtained by applying a regex to the (same) input string. But, quite
	expectedly, we show that the standard lower bound of
	$\NP$-completeness~\cite{DBLP:conf/stoc/ChandraM77} for Boolean CQs
	(we well as $\Wone$-hardness for some parameters) remain in our
	setting.  The more surprising finding is that hardness holds even if
	the input string is a single character!  Yet, a more fundamental
	problem with the canonical relational approach to evaluation is that
	it may be infeasible to materialize the relation defined by a regex
	formula, as the number of tuples in that relation may be exponential
	in the size of the input. This problem provably increases the
	complexity, as we show that Boolean regex-CQ evaluation is
	$\NP$-complete even on acyclic CQs, and even on the more restricted
	\e{gamma-acyclic} CQs. In contrast, acyclic CQs (and more generally
	CQs of \e{bounded hypertree
		width}~\cite{DBLP:journals/jacm/GottlobMS09}) admit polynomial-time
	evaluation. Finally, even if we were guaranteed a polynomial bound on
	the result of each atomic regex formula, it would not necessarily mean
	that we can actually materialize the corresponding relation in
	polynomial time.
	
	Yet, in spite of the above daunting complexity we are able to
	establish some substantial upper bounds. Our upper bounds are based on
	a central algorithm that we devise in this paper for evaluating a
	variable-set automaton (\e{vset-automaton} for short) over a string.
	More formally, recall that a vset-automaton $A$ represents a spanner,
	which we denote as $\spnr{A}$. When evaluating $A$ on a string
	$\strs$, the result is a relation $\spnr{A}(\strs)$ over the spans of
	$\strs$. The number of tuples in $\spnr{A}(\strs)$ can be exponential
	in the size of the input ($\strs$ and $A$). Our algorithm takes as
	input a string $\strs$ and a vset-automaton $A$ (or, more precisely, a
	\e{functional} vset-automaton~\cite{fre:doc}) and enumerates the
	tuples of $\spnr{A}(\strs)$ with \e{polynomial
		delay}~\cite{DBLP:journals/ipl/JohnsonP88}. This is done by a
	nontrivial reduction to the problem of enumerating all the words of a
	specific length accepted by an NFA~\cite{ack:eff}.  Our central
	algorithm implies several upper bounds, which we establish in two
	approaches: (a) via what we call the \e{canonical relational
		evaluation}, and (b) via \emph{compilation to automata}.
	
	The first approach utilizes known algorithms for relational UCQ
	evaluation, by materializing the relations defined by the regex
	formulas. For that, we devise an efficient compilation of a regex
	formula into a vset-automaton, and establish that a regex formula can
	be evaluated in polynomial total time (and even polynomial delay). In
	particular, we can efficiently materialize the relations of each atom
	of a regex UCQ, whenever we have a polynomial bound on the cardinality
	of this relation. Hence, there is no need for a specialized algorithm
	for each cardinality guarantee---one algorithm fits all. Consequently,
	under such cardinality guarantees, canonical relational evaluation is
	efficient whenever the underlying UCQ is tractable (e.g., each CQ is
	acyclic).
	
	In the second approach, compilation to automata, we compile the entire
	regex UCQ into a vset-automaton. Combining our polynomial-delay
	algorithm with known results~\cite{fag:doc,fre:log}, we conclude that
	regex UCQs can be evaluated with Fixed-Parameter Tractable (FPT) delay
	when the size of the UCQ is the parameter. Moreover, we prove that the
	compilation is efficient \e{if} every disjunct (regex CQ) has a
	bounded number atoms. In particular, we show that every join of a
	bounded number of vset-automata can be compiled in polynomial time
	into a single vset-automaton, and every projection and union (with no
	bounds) over vset-automata can be compiled in polynomial time into a
	single vset-automaton. Hence, we establish that for every fixed $k$,
	the evaluation of regex $k$-UCQs (where each CQ has at most $k$ atoms)
	can be performed with polynomial delay.
	
	Finally, we generalize our results to allow regex-UCQ to include
	string-equality predicates, which are expressions of the form ``$x$
	and $y$ span the same substring, possibly in different locations''.
	For example, the following regex CQ with a string equality finds
	sentences that have the address as expressed in our previous example
	spanner (see~\eqref{eq:intro-cq} above), but they are not necessarily
	the same sentences (and in particular they may exclude the word
	\texttt{police}).
	\[\pi_{x'}
	\sel_{y,y'}
	\big(
	Q[w,x,y,z]
	\join
	\alpha_{\mathsf{sen}}[x']
	\join
	\asub[y',x']
	\big)
	\]
	where $Q[w,x,y,z]$ is the join expression inside the projection
	of~\eqref{eq:intro-cq} above.  As shown in~\cite{fre:doc}, adding an
	unbounded number of string equalities can make a trac\-table regex UCQ
	intractable, even if that regex UCQ is a single regex formula. We
	prove that now we no longer retain the above FPT delay, as the problem
	is $\Wone$-hard when the parameter is the size of the query. 
	
	Nevertheless, much of the two evaluation approaches generalize to
	string equalities. While this generalization is immediate for
	the first approach, the second faces a challenge---it is
	\e{impossible} to compile string equality into a
	vset-automaton~\cite{fag:doc}. Yet, we show that our compilation
	techniques allow to compile in string equality \e{for the specific
		input string} at hand (that is, not statically but rather at
	runtime). Hence, we conclude that regex $k$-UCQs with a bounded number
	of string equalities can be evaluated with polynomial delay.
	
	The rest of the paper is organized as follows. In
	Section~\ref{sec:preliminaries} we recall the framework of document
	spanners, and set the basic notation and terminology.  In
	Section~\ref{sec:complexity} we give our complexity results for regex
	UCQs. We describe our polynomial-delay algorithm for evaluating
	vset-automata in Section~\ref{sec:delay}. In Section~\ref{sec:streq}
	we generalize our complexity results to UCQs with string
	equalities. Finally, we conclude in Section~\ref{sec:conclusions}.

	\section{Basics of Document Spanners}\label{sec:preliminaries}
	We begin by recalling essentials of \e{document
		spanners}~\cite{fag:doc,DBLP:journals/tods/FaginKRV16,fre:doc}, a
	formal framework for Information Extraction (IE) inspired by
	declarative IE systems such as Xlog~\cite{DBLP:conf/vldb/ShenDNR07}
	and IBM's
	SystemT~\cite{DBLP:journals/sigmod/KrishnamurthyLRRVZ08,DBLP:conf/acl/LiRC11,DBLP:conf/icde/ReissRKZV08}.
	
	\subsection{Strings, Spans and Spanners}
	Throughout this paper, we fix a finite alphabet $\alphabet$. Unless
	explicitly stated, we assume that $|\alphabet|\geq 2$.  A \e{string}
	is a sequence $\sigma_1 \sigma_2 \cdots \sigma_{\ell}$ of symbols from
	$\alphabet$.  The set of all strings is denoted by
	$\alphabet\wild$. We use bold (non-italic) letters, such as $\tup s$
	and $\tup t$, to denote strings, and $\emptyword$ to denote the empty
	string. The length $\ell$ of a string $\str{s} \df \sigma_1 \sigma_2
	\cdots \sigma_{\ell}$ is denoted by $|\str s|$.
	
	Let $\str{s} \df \sigma_1 \sigma_2 \cdots \sigma_{\ell}$ be a string.
	A \e{span} of $\str{s}$ represents an interval of characters in $\str
	s$ by indicating the bounding indices. Formally, a span of $\str s$ is
	an expression of the form $\spn{i,j}$ with $1 \leq i \leq j \leq
	\ell+1$. For a span $\spn{i,j}$ of $\str{s}$, we denote by
	$\str{s}_{\spn{i,j}}$ the string $\sigma_i\cdots \sigma_{j-1}$. Note
	that two spans $\spn{i_1,j_1}$ and $\spn{i_2,j_2}$ are \e{equal} if
	and only if both $i_1=i_2$ and $j_1=j_2$ hold. In particular,
	$\strs_{\spn{i_1,j_1}}=\strs_{\spn{i_2,j_2}}$ does not necessarily
	imply that $\spn{i_1,j_1}=\spn{i_2,j_2}$.
	
	\begin{example}
		Our examples assume that $\alphabet$ consists of the Latin
		characters and some common additional symbols such as punctuation
		and commercial at ($\mathtt{@}$). In addition, the symbol
		$\visiblespace$ stands for whitespace.  
		
		Let $\strs$ be the string $\mathtt{chocolate\visiblespace
			cookie}$. Then $|\strs|=16$. The strings $\strs_{\spn{4,6}}$ and
		$\strs_{\spn{11,13}}$ are both equal (to the string $\mathtt{co}$),
		and yet, $\spn{4,6}\neq \spn{11,13}$. Likewise, we have
		$\strs_{\spn{1,1}}=\strs_{\spn{2,2}}=\emptyword$, but $\spn{1,1}\neq
		\spn{2,2}$. Finally, observe that $\strs$ is the same as
		$\strs_{\spn{1,17}}$.
	\end{example}

	We assume an infinite set $\svars$ of variables, disjoint from
	$\alphabet$. For a finite $V \subset \svars$ and $\strs \in
	\alphabet\wild$, a \emph{$(V,\strs)$-tuple} is a function $\mu$ that
	maps each variable in $V$ to a span of $\strs$. If context allows, we
	write $\strs$-tuple instead of $(V,\strs)$-tuple. A set of
	$(V,\strs)$-tuples is called a \emph{$(V,\strs)$-relation}.  A
	\emph{spanner} is a function $P$ that is a associated with a finite
	variable set $V$, denoted $\svars(P)$, and that maps every string
	$\strs \in \alphabet\wild$ to a $(V,\strs)$-relation $P(\strs)$.  
	

	A spanner $P$ is \emph{Boolean} if $\svars(P)=\emptyset$. If $P$ is
	Boolean, then either $P(\strs)=\emptyset$ or $P(\strs)$ contains only
	the empty $(\emptyset,\strs)$-tuple; we interpret these two cases as
	\e{false} and \e{true}, respectively.

	\subsection{Spanner Representations}
	This paper uses two models as basic building blocks for spanner
	representations \e{regex formulas} and  \e{vset-automata}. These can
	be understood as extensions of regular expressions and NFAs,
	respectively, with variables. Both models were introduced by Fagin et
	al.~\cite{fag:doc}, and following Freydenberger~\cite{fre:log} we
	define the semantics of these models using so-called
	\emph{ref-words}~\cite{sch:cha} (short for \emph{reference-words}).
	\subsubsection{Ref-words}\label{sec:refwords}
	For a finite variable set $V\subset \svars$, ref-words are defined
	over the extended alphabet $\alphabet\cup \xalphabetv$, where $\xalphabetv$
	consists of two symbols, $\vop{x}$ and $\vcl{x}$, for each variable
	$x\in V$. We assume that $\alphabet$ and $\xalphabetv$ are
	disjoint. Intuitively, the letters $\vop{x}$ and $\vcl{x}$ represent
	opening or closing a variable $x$. Hence, ref-words extend terminal
	strings with an encoding of variable operations. As we shall see,
	treating these variable operations as letters allows us to adapt
	techniques from automata theory.
	
	A ref-word $\strr\in(\alphabet\cup\xalphabetv)^*$ is
	\e{valid for $V$} if each variable of $V$ is opened and then closed
	exactly once, or more formally, for each $x\in\svars(A)$ the string
	$\strr$ has precisely one occurrence of $\vop{x}$, precisely one
	occurrence of $\vcl{x}$, and the former occurrence takes place before
	(i.e., to the left of) the latter.
	\begin{example}
		Let $V\df\{x\}$, and define the ref-words $\strr_1\df \tc\vop{x}\mathtt{oo}\vcl{x} \mathtt{ie}$, $\strr_2\df \vop{x}\vcl{x}$, $\strr_3\df \vcl{x}\ta\vop{x}$, and $\strr_4\df \vop{x}\ta\vcl{x}\vop{x}\ta\vcl{x}$. 
		Then $\strr_1$ and $\strr_2$ are valid for $V$, but $\strr_3$ and $\strr_4$ are not. Note that $\strr_1$ and $\strr_2$ are not valid for $V'$ with $V'\supset V$, as all variables of $V'$ must be opened and closed.
	\end{example}
	If $V$ is clear from the context, we simply say that a ref-word is
	valid.  To connect ref-words to terminal strings and later to
	spanners, we define a morphism $\clean\colon
	(\alphabet\cup\xalphabetv)\wild\to \alphabet\wild$ by $\clean(\sigma)\df
	\sigma$ for $\sigma\in\alphabet$, and $\clean(a)\df \emptyword$
	for $a\in\xalphabetv$. For $\strs\in\alphabet\wild$, let
	$\refl(\strs)$ be the set of all valid ref-words
	$\strr\in(\alphabet\cup\xalphabetv)\wild$ with $\clean(\str r)=\strs$.  By
	definition, every $\str r\in \refl(\strs)$ has a unique factorization
	$\str r = \strr_x'\cdoto \vop{x}\cdoto \strr_{x}\cdoto \vcl{x}\cdoto
	\strr_x''$ for each $x\in V$. With these factorizations, we interpret
	$\str r$ as a $(V,\strs)$-tuple $\mu^{\strr}$ by defining
	$\mu^\strr(x) \df \spn{i,j}$, where $i\df |\clean(\strr'_x)|+1$ and
	$j\df i+|\clean(\strr_{x})|$.  Intuitively, $\clean(\strr_{x})$
	contains $\strs_{\mu^\strr(x)}$, while $\clean(\strr'_{x})$ contains
	the prefix of $\strs$ to the left.
	
	An alternative way of understanding $\mu^{\strr}=\spn{i,j}$ is that
	$i$ is chosen such that $\vop{x}$ occurs between the positions in
	$\str r$ that are mapped to $\sigma_{i-1}$ and $\sigma_{i}$, and
	$\vcl{x}$ occurs between the positions that are mapped to
	$\sigma_{j-1}$ and $\sigma_{j}$ (assuming that $\strs = \sigma_1
	\cdots \sigma_{|\strs|}$, and slightly abusing the notation to avoid a
	special distinction for the non-existing positions $\sigma_{0}$ and
	$\sigma_{|\strs|+1}$).
	\begin{example}
		Let $\strs\df \mathtt{cookie}$, and define 
		$\strr_1 \df \tc\vop{x}\mathtt{oo}\vcl{x}\mathtt{kie}$, and let $\strr_2 \df
		\mathtt{cookie}\vop{x}\vcl{x}$. Let $V\df\{x\}$. Then $\strr_1,\strr_2\in\refl(\strs)$, with $\mu^{\strr_1}(x)\df\spn{2,4}$ and
		$\mu^{\strr_2}(x)\df\spn{6,7}$.
	\end{example}
	\subsubsection{Regex Formulas}
	A \emph{regex formula (over $\alphabet$)} is a regular expression that
	may also include variables (called \e{capture variables}). Formally,
	we define the syntax with the recursive rule
	\[\alpha \df
	\emptyset\mid \emptyword \mid \sigma \mid (\alpha\ror\alpha) \mid
	(\alpha\cdot\alpha) \mid \alpha\wild \mid \bind{x}{\alpha}\] where
	$\sigma\in\alphabet$ and $x\in \svars$. We add and omit parentheses
	and concatenation ($\cdot$) symbols freely, as long as the meaning
	remains clear, and use $\alpha^+$ as shorthand for
	$\alpha\cdot\alpha\wild$, as well as $\alphabet$ as shorthand for
	$\bigvee_{\sigma\in\alphabet}\sigma$.  The set of variables that occur
	in $\alpha$ is denoted by $\svars(\alpha)$.  The \e{size} of $\alpha$,
	denoted $|\alpha|$, is naturally defined as the number of symbols in
	$\alpha$.
	
	We interpret each regex formula $\alpha$ as a generator of a ref-word
	language $\rlang(\alpha)$ over the extended alphabet
	$\alphabet\cup\xalphabet_{\svars(\alpha)}$. If $\alpha$ is of the form
	$\bind{x}{\beta}$, then $\rlang(\alpha)\df
	\vop{x}\rlang(\beta)\vcl{x}$. Otherwise, $\rlang(\alpha)$ is defined
	like the language $\lang(\alpha)$ of a regular expression; for
	example, $\rlang(\alpha\cdot\beta)\df \rlang(\alpha)\cdot
	\rlang(\beta)$. We let $\refl(\alpha)$ denote the set of all ref-words
	in $\rlang(\alpha)$ that are valid for $\svars(\alpha)$; and for every
	string $\strs \in \alphabet\wild$, we define
	$\refl(\alpha,\strs)=\rlang(\alpha)\cap \refl(\strs)$. In other words,
	$\refl(\alpha,\strs)$ contains exactly those valid ref-words from
	$\rlang(\alpha)$ that $\clean$ maps to $\strs$.
	
	Finally, the spanner $\spnr{\alpha}$ is the one that defines the following
	$(\svars(\alpha),\strs)$-relation for every string
	$\strs\in\alphabet\wild$:
	\[\spnr{\alpha}(\strs)\df
	\{\mu^\strr \mid \strr\in \refl(\alpha,\strs)\}
	\]

	We say that a regex-formula $\alpha$ is \emph{functional} if
	$\rlang(\alpha)=\refl(\alpha)$; that is, every ref-word in
	$\rlang(\alpha)$ is valid.  Fagin et al.~\cite{fag:doc} gave an
	algorithm for testing whether a regex formula is functional. By
	analyzing the complexity of their construction, we conclude the
	following.
	
	\begin{citedtheorem}{fag:doc}
		Whether a regex formula $\alpha$ with $v$ variables is functional
		can be tested in $O(|\alpha|\cdot v)$ time.
	\end{citedtheorem}

	Following the convention from~\cite{fag:doc}, we assume
	that regex formulas are functional unless explicitly noted.
	
	\begin{example}
		Consider the following regex formula~$\alpha$.
		\[ \alphabet\wild \bigl( (\bind{x}{\mathtt{foo}} \alphabet\wild
		\bind{y}{\mathtt{bar}})\ror(\bind{y}{\mathtt{bar}} \alphabet\wild
		\bind{x}{\mathtt{foo}}) \bigr) \alphabet\wild\] Then
		$\svars(\alpha)=\set{x,y}$. For $\strs\in\alphabet\wild$, the $(\svars(\alpha),\strs)$-relation
		$\spnr{\alpha}(\strs)$ contains every $(\svars(\alpha),\strs)$-tuple
		$\mu$ that satisfies $\strs_{\mu(x)}=\mathtt{foo}$ and
		$\strs_{\mu(y)}=\mathtt{bar}$. Now, consider the regex formula
		$\beta$ defined as follows.
		$$\alphabet\wild \visiblespace \bind{x_{\textsf{mail}}}{ 
			\bind{x_{\textsf{user}}}{\gamma}\mathtt{@}\bind{x_{\textsf{domain}}}{\gamma
				\mathtt{.} \gamma} }\visiblespace\alphabet\wild,$$ where
		$\gamma\df (\ta \ror \cdots \ror \mathtt{z})\wild$, and, as
		previously said, $\visiblespace$ denotes whitespace. This
		regex formula identifies (simplified) email addresses, where
		the variable $x_{\textsf{mail}}$ contains the whole address,
		and $x_{\textsf{user}}$ and $x_{\textsf{domain}}$ the
		user and domain parts. Both $\alpha$ and $\beta$ are
		functional. In contrast, $\bind{x}{\ta}\bind{x}{\ta}$
		and $\bind{x}{\ta}\ror\bind{y}{\ta}$ are not functional.
	\end{example}
	
	\subsubsection{Variable-Set Automata}\label{sec:vset}
	While the current paper is mostly motivated by spanners that are constructed using regex formulas, our upper bounds for these are obtained by converting regex formulas to the following automata model.
	
	A \e{variable-set automaton} (vset-automaton for short) with variables
	from a finite set $V\subset \svars$ can be understood as an
	$\emptyword$-NFA (i.e., an NFA with epsilon transitions allowed) that
	is extended with edges labeled with variable operations $\vop{x}$ or
	$\vcl{x}$ for $x\in V$.  Formally, a \e{vset-automaton} is a tuple
	$A\df (V,Q,q_0,q_f,\delta)$, where $V$ is a finite set of variables,
	$Q$ is the set of \e{states}, $q_0,q_f\in Q$ are the \e{initial} and
	the \e{final} states, respectively, and $\delta\colon
	Q\times(\alphabet\cup\{\emptyword\}\cup\xalphabetv)\to 2^{Q}$ is the
	\e{transition function}. By $\svars(A)$ we denote the set $V$.
	
	The vset-automaton $A\df (V,Q,q_0,q_f,\delta)$ can be interpreted as a
	directed graph, where the nodes are the states, and every $q\in
	\delta(p,a)$ is represented by an edge from $p$ to $q$ with the label
	$a$. To define the semantics of $A$, we first interpret $A$ as an
	$\emptyword$-NFA over the terminal alphabet $\alphabet\cup\xalphabetv$,
	and define its \e{ref-word language} $\rlang(A)$ as the set of all
	ref-words $\str r\in (\alphabet\cup\xalphabetv)\wild$ such that some path
	from $q_0$ to $q_f$ is labeled with $\tup r$.
	
	Let $A$ be a vset-automaton. Analogously to regex formulas, we denote by $\refl(A)$ the set of all
	ref-words in $\rlang(A)$ that are valid for $V$, and define $\refl(A,\strs)$ and $\spnr{A}(\strs)$ accordingly for every $\strs\in\alphabet\wild$.  
	Likewise, we say that $A$ is \emph{functional} if
	$\refl(A)=\rlang(A)$; i.\,e., every accepting run of $A$ generates a valid ref-word.
	Two vset-automata $A_1,A_2$ are \emph{equivalent} if $\spnr{A_1}=\spnr{A_2}$. 
	\newcommand{\Afun}{A_{\mathsf{fun}}}
	\begin{example}\label{ex:funNotFun}
		Let $A$ be the following vset-automaton:  \begin{center}
			\begin{tikzpicture}[on grid, node distance =1.5cm,every loop/.style={shorten >=0pt}]
			\node[state,initial text=,initial by arrow,accepting] (q0) {};
			\path[->]
			(q0) edge[loop right] node[right] {$\vop{x},\ta,\vcl{x}$} (q0)
			;
			\end{tikzpicture}
		\end{center}
		Then, the following hold.
		\begin{align*}
			\rlang(A)=&\set{\vop{x},\ta,\vcl{x}}\wild\\
			\refl(A)=&\set{\ta^i\vop{x} \ta^j \vcl{x} \ta^k \mid i,j,k\geq 0}
		\end{align*}
		Hence, $A$ is not functional, since $\rlang(A)$ contains invalid
		ref-words such as $\emptyword$, $\vop{x}$, $\vcl{x}\ta \vop{x}$, and
		$\vop{x}\ta \vcl{x}\vcl{x}$. Now consider the vset-automaton $\Afun$:
		\begin{center}
			\begin{tikzpicture}[on grid, node distance =1.5cm,every loop/.style={shorten >=0pt}]
			\node[state,initial text=,initial by arrow] (q0) {};
			\node[state,right= of q0] (q1) {};
			\node[state,accepting,right=of q1] (q2) {};
			\path[->]
			(q0) edge[loop above] node[right,very near end] {$\ta$} (q0)
			(q0) edge node[below] {$\vop{x}$} (q1)
			(q1) edge[loop above] node[right,very near end] {$\ta$} (q1)
			(q1) edge node[below] {$\vcl{x}$} (q2)
			(q2) edge[loop above] node[right,very near end] {$\ta$} (q2)
			;
			\end{tikzpicture}
		\end{center}
		Then $\refl(\Afun)=\rlang(\Afun)=\refl(A)$. Hence, $\Afun$ is
		functional, and $\Afun$ and $A$ are equivalent. 
		
		For all $\strs\in\alphabet\wild \setminus \{\ta\}\wild$, $\spnr{A}(\strs)=\emptyset$; and for all $\strs\in \{\ta\}\wild$, $\spnr{A}(\strs)$ contains all possible $(\{x\},\strs)$-tuples.
	\end{example}
	While $\refl(A_1)=\refl(A_2)$ is a sufficient criterion for the equivalence of $A_1$ and $A_2$, it is only characteristic if the automata have at most one variable. For example, consider $\strr_1\df \vop{x}\vop{y}\vcl{x}\vcl{y}$ and $\strr_2\df \vop{y}\vcl{y}\vop{x}\vcl{x}$. Both are valid ref-words that define the same $(\{x,y\},\emptyword)$-tuple $\spn{1,1}$, but $\strr_1\neq \strr_2$.
	
	Example~\ref{ex:funNotFun} suggests that vset-automata can be
	converted into equivalent functional vset-automata; but
	Freydenberger~\cite{fre:log} showed that although this is possible
	with standard automata constructions, the resulting blow-up may be
	exponential in the number of variables.
	
	As shown in Lemma~\ref{lem:regexToAutomaton}, every functional regex formula can be converted into an equivalent functional vset-automaton; and we use this connection throughout the paper. We shall see that functional vset-auto\-ma\-ta are a convenient tool for working with regex formulas. In contrast to this,  vset-auto\-ma\-ta in general can be quite inconvenient (e.\,g., even deciding emptiness of $\spnr{A}(\emptyword)$ is NP-complete if $A$ is not functional, see~\cite{fre:log}).
	
	Freydenberger~\cite{fre:log} established the complexity of testing
	whether a given vset-automaton is functional.
	\begin{citedtheorem}{fre:log}\label{thm:funtest}
		Whether a given vset-automaton $\alpha$ with $n$ states, $m$
		transitions and $v$ variables is functional can be tested in
		$O(vm+n)$ time.
	\end{citedtheorem}
	A special property of functional vset-automata is that each state
	implicitly stores which variables have been opened and closed. We
	discuss it in detail in Section~\ref{sec:delayalgo}.

	\subsubsection{Spanner Algebras}
	Let $P$, $P_1$ and $P_2$ be spanners. The algebraic operators
	\emph{union}, \emph{projection}, \emph{natural join}, and
	\emph{selection} are defined as follows:
	
	\partitle{Union:} If $\svars(P_1) = \svars(P_2)$, their \emph{union}
	$(P_1 \cup P_2)$ is defined by $\svars(P_1 \cup P_2) \df \svars(P_1)$
	and $(P_1 \cup P_2)(\strs) \df P_1(\strs) \cup P_2(\strs)$ for all
	$\strs\in\alphabet\wild$.
	
	\partitle{Projection:} Let $Y \subseteq \svars(P)$. The
	\emph{projection} $\pi_Y P$ is defined by $\svars(\pi_Y P) \df Y$ and
	$\pi_Y P(\strs) \df {P|}_{Y}(\strs)$ for all $\strs \in
	\alphabet\wild$, where ${P|}_{Y}(\strs)$ is the restriction of all
	$\mu\in P(\strs)$ to $Y$.
	
	\partitle{Natural join:} Let $V_i \df \svars(P_i)$ for $i \in
	\{1,2\}$. The \emph{(natural) join} $(P_1 \join P_2)$ of $P_1$ and
	$P_2$ is defined by $\svars(P_1 \join P_2) \df \svars(P_1) \cup
	\svars(P_2)$ and, for all $\strs \in \alphabet\wild$, $(P_1\join
	P_2)(\strs)$ is the set of all $(V_1 \cup V_2, \strs)$-tuples~$\mu$
	for which there exist $\mu_1\in P_1(\strs)$ and $\mu_2\in P_2(\strs)$
	with ${\mu|}_{V_1}(\strs) = \mu_1(\strs)$ and ${\mu|}_{V_2}(\strs) =
	\mu_2(\strs)$.

	\partitle{Selection:} Let $R \subseteq (\alphabet\wild)^k$ be a $k$-ary
	relation over $\alphabet\wild$. The \emph{selection operator}
	$\zeta^R$ is parameterized by $k$ variables $x_1,\dots,x_k \in
	\svars(P)$, written as $\zeta^R_{x_1,\dots,x_k}$. The \emph{selection}
	$\zeta^R_{x_1,\dots,x_k} P$ is defined by
	$\svars(\zeta^R_{x_1,\dots,x_k} P) \df \svars(P)$ and, for all
	$\strs\in\alphabet\wild$, $\zeta^R_{x_1,\dots,x_k} P(\strs)$ is the
	set of all $\mu \in P(\strs)$ for which $\left(\strs_{\mu(x_1)},
	\dots, \strs_{\mu(x_k)}\right) \in R$.  Following Fagin et
	al.~\cite{fag:doc}, we almost exclusively consider binary string
	equality selections, $\sel_{x,y} P$, which selects all $\mu\in
	P(\strs)$ where $\strs_{\mu(x)}=\strs_{\mu(y)}$.
	
	\partitle{}Note that the join operator joins tuples that have identical spans in
	their shared variables. In contrast, the selection operator compares
	the substrings of $\strs$ that are described by the spans, and does
	not distinguish between different spans that span the same substrings.
	
	Following Fagin et al.~\cite{fag:doc}, we refer to regex formulas and
	vset-automata as \emph{primitive spanner representations}, and use
	$\RGX$ and $\VAset$ to denote the sets of all functional regex
	formulas and vset-automata, respectively.  A \emph{spanner algebra} is
	a finite set of spanner operators. If $\Opr$ is a spanner algebra and
	$C$ is a class of primitive spanner representations, then $C^{\Opr}$
	denotes the set of all \emph{spanner representations} that can be
	constructed by (repeated) combination of the symbols for the operators
	from $\Opr$ with spanners from $C$. For each spanner representation of
	the form $o\rep$ or $\rep_1 \mathbin{o} \rep_2$, where
	$o\in\Opr$ and $\rep,\rep_1,\rep_2\in C$, we define $\spnr{o\rep}=o\spnr{\rep}$
	and $\spnr{\rep_1 \mathbin{o} \rep_2}=\spnr{\rep_1}
	\mathbin{o} \spnr{\rep_2}$. Furthermore, $\spnr{C^{\Opr}}$ is the
	closure of $\spnr{C}$ under the spanner operators in $\Opr$.
	
	Fagin et al.~\cite{fag:doc} refer to the elements of $\RGX^{\puj}$ as
	\e{regular spanner representations}, and to the elements of
	$\RGX^{\core}$ as \emph{core spanner representations} (as these form
	the core of the query language AQL~\cite{DBLP:conf/acl/LiRC11}). One of
	the main results of~\cite{fag:doc} is that
	$\spnr{\RGX^{\puj}}=\spnr{\VAset^{\puj}}=\spnr{\VAset}$. Hence,
	$\VAset^{\core}$ has the same expressive power as core spanner
	representations. A \e{regular spanner} is a spanner that can be
	represented in a regular spanner representation, and a \e{core
		spanner} is a spanner that can be represented in a core spanner
	representation.  Fagin et al.~\cite{fag:doc} also showed the class of
	regular spanners is closed under the \e{difference} operator (i.e.,
	$\spnr{\RGX^{\puj}}=\spnr{\RGX^{\pujd}}$), but the
	class of core spanners is not.
	
	
	\subsection{(Unions of) Conjunctive Queries}
	
	In this paper, we consider Conjunctive Queries (CQs) over regex
	formulas. Such queries are defined as the class of regular spanner
	representations that can be composed out of natural join and
	projection. In addition, we explore the extension obtained by adding
	string-equality selections.
	
	More formally, a \emph{regex CQ} is an regular spanner representation
	of the form
	$q\df \pi_Y \left(\alpha_1\join\dots\join\alpha_k\right)$
	where each $\alpha_i$ is a regex formula.  A \e{regex CQ with string
		equalities} is a core spanner representation of the form
	\[q\df \pi_Y \left(\zeta^=_{x_1,x_2}\dots \zeta^=_{x_l,x_{l+1}} \left(\alpha_1\join\dots\join\alpha_k\right)\right)\]
	for some $k\geq 1$ and $l\geq 0$.
	For clarity of notation, if $Y=\{y_1,\cdots, y_m\}$, we sometimes write $q(y_1,\ldots,y_m)$ when using $q$ to make $\svars(q)$ more explicit.
	Each regex formula $\alpha_i$ and each selection $\sel_{x_j,y_j}$ is
	called an \e{atom}, where the former is a \e{regex atom} and the latter
	an \e{equality atom}. We
	denote by $\atoms(q)$ the set of atoms of $q$. For each equality atom $\sel_{x_j,y_j}$, we define $\svars(\sel_{x_j,y_j})\df\{x_j,y_j\}$.  Note that our definitions imply that every
	variable that occurs in an equality atom also appears in at least one regex atom. 
	
	In the traditional relational model (see, e.g.,~Abiteboul et
	al.~\cite{abi:fou}), a CQ is phrased over a collection of relation
	symbols (called \e{signature}), each having a predefined
	arity. Formally, a \e{relational CQ} is an expression of the form
	$Q(y_1,\dots,y_m)\dl \varphi_1,\dots,\varphi_k$ where each $y_i$ is
	variable in $\svars$ and each $\varphi_i$ is an \e{atomic relational
		formula} (or simply \e{atom}), that is, an expression of the form
	$R(x_1,\dots,x_m)$ where $R$ is an $m$-ary relation symbol and each
	$x_j$ is a variable. Similarly to regex CQs, we denote by $\atoms(Q)$
	the set of atoms of $Q$, and by $\svars(\gamma)$ the set of variables
	that occur in an atom $\gamma$.
	
	Let $q(y_1,\dots,y_m)$ be a regex CQ (with string equalities), and let
	$Q(y_1,\dots,y_m)$ be a relational CQ. We say that $q$ \e{maps to} $Q$
	if all of the following hold.
	\begin{citemize}
		\item No relation symbol occurs more than once in $Q$ (that is, $Q$
		has no self joins).  
		\item There is a bijection $\mu:\atoms(q)\rightarrow\atoms(Q)$ that
		preserves the sets of variables, that is, for each
		$\gamma\in\atoms(q)$ we have $\svars(\gamma)=\svars(\mu(\gamma))$.
	\end{citemize}
	
	Let $q$ be a regex CQ with string equalities. We say that $q$ is
	\e{acyclic} if it maps to an acyclic (or \e{alpha-acyclic}) relational
	CQ, and \e{gamma-acyclic} is it maps to a gamma-acyclic relational CQ.
	(See e.g.~\cite{abi:fou, fag:acy} for the definitions of acyclicity.)
	Recall that gamma-acyclicity is strictly more restricted than
	acyclicity (that is, every gamma-acyclic CQ is cyclic, and
	there are acyclic CQs that are not gamma-acyclic).

	A \e{Union of regex CQs}, or \e{regex UCQ} for short, is a regular spanner  $q$
	of the form $q\df \bigcup_{i=1}^{k} q_i$ where each $q_i$ is a
	regex CQ (recall that, by definition, $\svars(q_i)=\svars(q_j)$ must hold).  In a regex UCQ \e{with string equalities}, each $q_i$ can
	be a UCQ with string equalities. The following theorem follows quite
	easily from the results of Fagin et al.~\cite{fag:doc}.
	
	\newcommand{\UCQvsCoretheorem}{The following hold.
		\begin{citemize}
			\item The class of spanners expressible as regex UCQs is that of the
			regular spanners.
			\item The class of spanners expressible as regex UCQs with string
			equalities is that of the core spanners.
		\end{citemize}}
		\begin{theorem}\label{thm:UCQvsCore}
			\UCQvsCoretheorem
		\end{theorem}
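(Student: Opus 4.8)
The plan is to establish each of the two bullet points by a pair of inclusions, relying heavily on the fact (already cited from Fagin et al.~\cite{fag:doc}) that $\spnr{\RGX^{\puj}}=\spnr{\VAset}$ gives the regular spanners, and that $\RGX^{\core}$ gives the core spanners. The key observation is that a regex UCQ is, by its very definition, a spanner representation built from regex formulas using only projection, join, and (a restricted form of) union; and a regex UCQ with string equalities additionally uses the string-equality selection $\sel$. So the two classes of queries sit syntactically inside $\RGX^{\puj}$ and $\RGX^{\core}$, respectively, and the task reduces to showing that the \emph{restricted} use of these operators in (U)CQs loses no expressive power.

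For the first bullet, the easy direction is that every regex UCQ is a regular spanner: a regex UCQ is a finite union of expressions of the form $\pi_Y(\alpha_1\join\cdots\join\alpha_k)$, each of which is visibly an element of $\RGX^{\puj}$, and the outer union keeps us inside $\RGX^{\puj}$; hence by \cite{fag:doc} the spanner is regular. The converse---that every regular spanner is expressible as a regex UCQ---is where the real content lies. The plan is to take an arbitrary regular spanner representation in $\RGX^{\puj}$ and push it into the normal form ``union of (projections of joins)''. Concretely, I would argue by structural induction that $\spnr{\cdot}$ distributes appropriately: union commutes past projection ($\pi_Y(P_1\cup P_2)=\pi_YP_1\cup\pi_YP_2$ when $\svars(P_1)=\svars(P_2)$), and join distributes over union ($(P_1\cup P_2)\join P=(P_1\join P)\cup(P_2\join P)$), so that every nesting of $\pi,\cup,\join$ over regex atoms can be rewritten as a single outermost union of terms each of which is a projection applied to a join of regex atoms. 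The remaining subtlety is to ensure that after distribution each disjunct has the \emph{same} variable set $Y$ (as required by the definition of union of spanners); this is handled by choosing $Y$ to be the projection target of the whole expression and verifying the variable-set bookkeeping is consistent across disjuncts.

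For the second bullet the structure is identical, with $\sel$ added to the operator set. Again one direction is immediate: a regex UCQ with string equalities lies syntactically in $\RGX^{\core}$, hence is a core spanner. For the converse I would start from an arbitrary core spanner representation in $\RGX^{\core}$ and again apply distributive laws to reach a union-of-disjuncts normal form, now with each disjunct of the shape $\pi_Y\sel_{x_1,x_2}\cdots\sel_{x_l,x_{l+1}}(\alpha_1\join\cdots\join\alpha_k)$. The additional commutation facts needed are that selection commutes with union ($\sel_{x,y}(P_1\cup P_2)=\sel_{x,y}P_1\cup\sel_{x,y}P_2$) and that selections can be pulled inward past joins and outward past projections (as long as the variables $x,y$ remain available, i.e.\ are not projected away before the selection is applied); so a single family of string-equality selections can be gathered immediately inside the projection and outside the join, matching the required syntactic form.

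\textbf{Main obstacle.} The technically delicate step, in both bullets, is the \emph{variable-scope bookkeeping} during distribution rather than the distributive identities themselves. Two issues must be watched: first, the definition of a regex CQ requires \emph{every} variable occurring in an equality atom to occur in some regex atom, and the definition of union requires all disjuncts to share a common variable set---so after distributing we may need to pad disjuncts (for instance by joining in a trivial regex formula such as $\bind{x}{\alphabet\wild}$ guarded by $\alphabet\wild\cdots\alphabet\wild$, or by carefully choosing the projection set) so that the syntactic constraints of the target form are met without changing the spanner. Second, when commuting $\pi$ and $\sel$ we must make sure the selection variables are not projected out prematurely; this is the point where the commutations are only \emph{conditionally} valid, and the induction must track which variables are still in scope. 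I expect the proof to be ``quite easy'' exactly as the paper claims, once these normal-form manipulations and the single invocation of Fagin et al.'s equivalence $\spnr{\RGX^{\puj}}=\spnr{\VAset}$ (for regular) and the definition of core spanners (for the second bullet) are assembled; the bulk of the work is verifying that the distributive rewriting preserves $\spnr{\cdot}$ and respects the syntactic side-conditions of (U)CQs.
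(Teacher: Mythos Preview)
Your approach is workable but differs from the paper's, and it has one genuine gap.

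\textbf{How the paper argues.} Rather than normalising an arbitrary $\RGX^{\puj}$-expression by distributive laws, the paper simply cites two results of Fagin et al.~\cite{fag:doc}. For the first bullet it invokes (the proof of) their Lemma~4.13, which already shows that every vset-automaton---hence every regular spanner---is equivalent to a union of projections of joins of regex formulas, i.e., a regex UCQ. For the second bullet it invokes the core-simplification lemma (Lemma~4.19 in~\cite{fag:doc}): every core spanner can be written as $\pi_Y\,S\,A$ with $A$ a vset-automaton and $S$ a sequence of string-equality selections; one then applies the first bullet to $A$ to obtain a regex UCQ $\bigcup_i q_i$ and pushes $\pi_Y$ and $S$ into each disjunct. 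This is shorter than your route because the heavy lifting (the UCQ normal form) is already done in~\cite{fag:doc}.

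\textbf{The gap in your induction.} The two distributive laws you list---$\pi_Y$ over $\cup$, and $\join$ over $\cup$---do not by themselves reach the normal form. The problematic inductive case is $E_1\join E_2$ when each $E_i$ is already a UCQ. After distributing $\join$ over $\cup$ you are left with disjuncts of the form $\pi_{Y_1}(\alpha_1\join\cdots\join\alpha_r)\,\join\,\pi_{Y_2}(\beta_1\join\cdots\join\beta_s)$, and this is \emph{not} yet a projection of a single join: in general $\pi_{Y_1}(P_1)\join\pi_{Y_2}(P_2)\neq \pi_{Y_1\cup Y_2}(P_1\join P_2)$, because a variable projected away on one side may coincide with a variable on the other side and impose a spurious join condition. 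The standard fix, which you should state explicitly, is to rename in each $\alpha_j$ every variable not in $Y_1$ to a fresh one (and symmetrically for the $\beta_j$'s and $Y_2$); regex formulas admit such renaming, and after it the desired identity holds since the only shared variables lie in $Y_1\cap Y_2$. Your ``padding'' remark addresses a different issue (equalising the head variable sets across disjuncts of a union) and does not cover this. With that fix your direct induction becomes a correct alternative proof---more self-contained than the paper's, at the price of redoing work that~\cite{fag:doc} already supplies.
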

		In the relational world, a \e{relational UCQ} is a query of the form
		$\bigcup_{i=1}^{l}Q_i$, where each
		$Q_i$ is a relational CQ (and $\svars(Q_i)=\svars(Q_j)$ holds).  Given a regex UCQ (with or without string equalities) $q\df \bigcup_{i=1}^{k} q_i(y_1,\dots, y_m)$  and a relational UCQ $Q\df\bigcup_{i=1}^{l}Q_i(y_1,\dots,y_m)$, we say that $q$ \e{maps to}  $Q$ if $k=l$ and
		each $q_i$ maps to $Q_i$.
		
		A family $\mathbf{P}$ of regex UCQs (with string equalities) \e{maps
			to} a family $\mathbf{Q}$ of relational UCQs if each UCQ in
		$\mathbf{P}$ maps to one or more CQ in $\mathbf{P}$.


	
\section{Complexity of UCQ Evaluation}\label{sec:complexity}
In this section we give our main complexity results for the evaluation
of regex UCQs. We remark that in this section we do not allow string
equalities; these will be discussed in Section~\ref{sec:streq}. We
begin with a description of the complexity measures we adopt.

\subsection{Complexity Measures}
Under the measure of \e{data complexity}, where the UCQ $q$ at hand is
assumed fixed (and the string $\strs$ is given as input), it is a
straightforward observation that query evaluation can be done in
polynomial time. Hence, our main measure of complexity is that of
\e{combined complexity} where both $q$ and $\strs$ are given as
input. 

The task of evaluating a regex query $q$ over an input $\str s$
requires the solver algorithm to produce all tuples in
$\spnr{\gamma(\str s)}$ for each regex formula $\gamma$ in $q$ (unless
execution is terminated before completion). In the worst case there
could be exponentially many tuples, and so \e{polynomial time} is not
a proper yardstick of efficiency. For such problems, Johnson,
Papadimitriou and Yannakakis~\cite{DBLP:journals/ipl/JohnsonP88}
introduced several complexity guarantees, which we recall here. An
\e{enumeration problem} $\penum$ is a collection of pairs $(x,Y)$
where $x$ is an \e{input} and $Y$ is a finite set of \e{answers} for
$x$, denoted by $\penum(x)$.  In our case, $x$ has the form
$(q,\strs)$ and $\penum(x)$ is $\spnr{q}(\strs)$.  A \e{solver} for an
enumeration problem $\penum$ is an algorithm that, when given an input
$x$, produces a sequence of answers such that every answer in
$\penum(x)$ is printed precisely once. We say that a solver $S$ for an
enumeration problem $\penum$ runs in \emph{polynomial total time} if
the total execution time of $S$ is polynomial in $(|x|+|\penum(x)|)$;
and in \emph{polynomial delay} if the time between every two
consecutive answers produced is polynomial in $|x|$.

We also consider \e{parameterized
	complexity}~\cite{DowneyF-FPT,Grohe-FPT-BOOK} for various parameters
determined by $q$. Formally, a \e{parameterized problem} is a decision
problem where the input consists of a pair $(x,k)$, where $x$ is an
ordinary input and $k$ is a parameter (typically small, relates to a
property of $x$). Such a problem is \e{Fixed-Parameter Tractable}
(\e{FPT}) if there is a polynomial $p$, a computable function $f$ and
a solver $S$, such that $S$ terminates in time $f(k)\cdot p(|x|)$ on
input $(x,k)$.  We similarly define \e{FPT-delay} for a parameterized
enumeration algorithm: the delay between every two consecutive answers
is bounded by $f(k)\cdot p(|x|)$. A standard lower bound is
\e{$\Wone$-hardness}, and the standard complexity assumption is that a
$\Wone$-hard problem is not FPT~\cite{Grohe-FPT-BOOK}.

Whenever we give an upper bound, it applies to a general UCQ, and
whenever we give a lower bound, it applies to Boolean CQs. When we
give asymptotic running times, we assume the unit-cost RAM-model,
where the size of each machine word is logarithmic in the size of the
input. Regarding $\alphabet$, our lower bounds and asymptotic upper
bounds assume that it is fixed with at least two characters; our
``polynomial'' upper bounds hold even if $\alphabet$ is given as part
of the input.

\subsection{Lower Bounds}
We begin with lower bounds. Recall that Boolean CQ evaluation is
$\NP$-complete~\cite{DBLP:conf/stoc/ChandraM77}. This result does not
extend directly to regex UCQs, since relations are not given directly
as input (but rather extracted from an input string using regex
formulas). But quite expectedly, the evaluation of Boolean regex CQs
indeed remains $\NP$-complete. What is less expected is that
this  holds even for strings that consist of a single
character. 

\def\thmBooleanRegexeCqNPhard{Evaluation of Boolean regex CQs is
	$\NP$-complete, and remains $\NP$-hard even under both of the following
	assumptions.
	\begin{cenumerate}
		\item Each regex formula is of bounded size
		\item The input string is of length one.
	\end{cenumerate}}
	
	\begin{theorem}
		\label{thm:BooleanRegexeCqNPhard}
		\thmBooleanRegexeCqNPhard
	\end{theorem}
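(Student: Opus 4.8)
The plan is to establish $\NP$-membership first and then reduce a known $\NP$-hard problem to Boolean regex CQ evaluation, engineering the reduction so that it simultaneously satisfies both restrictions (bounded-size regex formulas and a single-character input string). Membership in $\NP$ is routine: given a Boolean regex CQ $q\df\pi_\emptyset(\alpha_1\join\dots\join\alpha_k)$ and a string $\strs$, a witness is one $(\svars(q),\strs)$-tuple $\mu$ together with, for each atom $\alpha_i$, a witnessing ref-word $\strr_i\in\refl(\alpha_i,\strs)$ whose induced tuple $\mu^{\strr_i}$ agrees with $\mu$ on $\svars(\alpha_i)$. Each such ref-word has length $O(|\strs|+|\alpha_i|)$, so the witness is polynomial, and checking that each $\strr_i\in\rlang(\alpha_i)$ is valid and clean-maps to $\strs$ is polynomial; hence the problem is in $\NP$.

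For hardness I would reduce from a classic $\NP$-complete problem whose instances are naturally expressed as a conjunction of binary constraints over a small domain — the natural candidate is \textbf{$3$-colorability} of a graph $G=(V_G,E_G)$ (equivalently one could use the standard CQ/homomorphism encoding behind the Chandra--Merlin $\NP$-completeness~\cite{DBLP:conf/stoc/ChandraM77}). The key idea for forcing a single-character input is to let the \emph{positions} of the unary input string $\strs=\ta^n$ encode colors, rather than letting the alphabet carry information. Concretely, take $\strs=\ta\ta\ta$ (length $3$), with the three positions standing for the three colors. I introduce one capture variable $x_v$ per vertex $v\in V_G$; a regex atom of the form $\bind{x_v}{\ta}$ (a bounded-size formula) forces $x_v$ to be mapped to a single-character span $\spn{i,i+1}$ with $i\in\{1,2,3\}$, i.e.\ to a color. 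For each edge $\{u,v\}\in E_G$ I include an edge-gadget atom, again a regex formula of bounded size, whose spanner relation is exactly the set of $(\{x_u,x_v\},\strs)$-tuples in which $x_u$ and $x_v$ are single-character spans at \emph{different} positions; this is expressible by a fixed-size disjunction over the six ordered pairs of distinct positions, using $\alphabet\wild$ to skip the intervening characters. Taking the (Boolean) join of all these atoms yields a CQ that is satisfiable on $\strs=\ta^3$ if and only if $G$ is $3$-colorable, and each atom has size bounded by a constant independent of $G$, so both conditions~(1) and~(2) hold simultaneously.

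The step I expect to be the main obstacle is the \textbf{edge gadget}: I must design a single bounded-size functional regex formula over the unary string whose span-relation encodes the ``different position'' (inequality) constraint using only position information, since the alphabet contributes nothing. The trick is that a span $\spn{i,i+1}$ of $\ta^3$ records $i$ purely through how many $\ta$'s precede the variable binding, so an expression such as $\alphabet\wild\bind{x_u}{\ta}\alphabet\wild\bind{x_v}{\ta}\alphabet\wild$ captures all pairs with $x_u$ strictly left of $x_v$; unioning this with the symmetric form and carefully excluding equal positions (e.g.\ by listing the admissible position pairs) yields inequality. I must double-check three points: that each gadget is \emph{functional} (so Theorem~\ref{thm:funtest}'s convention applies and the semantics are the intended one), that the gadgets have size independent of $|V_G|$ and $|E_G|$ (the only growth is in the \emph{number} of atoms, not their individual sizes), and that the whole construction is computable in polynomial time. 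Once the gadget is verified, correctness of the reduction — satisfying assignment $\Leftrightarrow$ consistent join tuple $\Leftrightarrow$ proper $3$-coloring — follows directly from the definitions of join and of $\spnr{\alpha}(\strs)$.
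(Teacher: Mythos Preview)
Your overall strategy---encode a finite domain by span \emph{positions} on a constant-length string and express each constraint as a bounded-size regex disjunction---is exactly the right idea, and it is the same mechanism the paper uses. However, there is a concrete gap with respect to condition~(2): the theorem asserts hardness for an input string of length \emph{one}, while your reduction takes $\strs=\ta\ta\ta$ of length three. As written, you prove hardness for a fixed constant-length string, but not the sharper statement.

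The paper closes this gap by reducing from 3-SAT instead of 3-colorability, over the single-letter string $\strs\df\ta$. The point is that a length-one string already has two distinct \emph{empty} spans, $\spn{1,1}$ and $\spn{2,2}$, which serve as the Boolean values $0$ and $1$. Each clause $C_i$ on three variables becomes one regex atom $\gamma_i$ that is the disjunction of its seven satisfying assignments; an assignment such as $\tau(x)=\tau(y)=0$, $\tau(z)=1$ is rendered as $\bind{x}{\bind{y}{\emptyword}}\cdot\ta\cdot\bind{z}{\emptyword}$, placing the bound variables before or after the single terminal according to their truth value. This yields bounded-size atoms and a length-one string simultaneously.

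Your 3-colorability route is salvageable along the same lines: a length-one string has exactly three spans $\spn{1,1}$, $\spn{1,2}$, $\spn{2,2}$, enough to encode three colors, and the edge gadget can be written as a bounded disjunction over the six unequal span pairs. One further minor slip: the atom $\bind{x_v}{\ta}$ alone matches only the whole string $\ta$, so on $\strs=\ta^3$ it yields the empty relation; you would need $\alphabet\wild\cdot\bind{x_v}{\ta}\cdot\alphabet\wild$ (and in fact the edge gadgets already force single-character spans, so separate vertex atoms are unnecessary). Your $\NP$ upper-bound argument is fine.
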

	\begin{proof}
		The upper bound is obvious (even for core spanners, see~\cite{fre:doc}). For the lower bound, we construct a reduction from 3CNF-satisfia\-bi\-li\-ty to the evaluation
		problem of Boolean regex CQs.  The input to 3CNF is a formula $\psi$ with
		the free variables $x_1,\ldots, x_n$ such that $\psi$ has the form
		$C_0\wedge \cdots \wedge C_m$ where each $C_j$ is a clause.  Each
		clause is a conjunction of three literals from the set $\{x_i, \neg
		x_i \mid 1\leq i \leq n\}$. The goal is to determine whether there is an
		assignment $\tau\colon \{x_1,\ldots, x_n\} \to \{0,1\}$ that
		satisfies $\psi$.  Given a 3CNF-formula $\psi$, we construct a regex CQ
		$q$ and an input string $\strs$ such that there is a satisfying
		assignment for $\psi$ if and only if  $\spnr{q}(\strs) \ne
		\emptyset$.  We define $\strs := \mathtt{a}$. 
		To construct
		$q$, we associate each variable $x$ with a corresponding capture
		variable $x$ and each assignment $\tau$ with a regex formula that
		assigns $x$ the span $\spn{1,1}$ if $\tau(x)=0$ and $\spn{2,2}$
		if $\tau(x)=1$. For instance, given the assignment
		$\tau$ such that $\tau(x)=\tau(y)=0$ and $\tau(z)=1$ its
		corresponding regex is $x\{ y\{ \emptyword \} \} \cdot \mathtt{a}
		\cdot z \{ \emptyword \}$.  Note that since each clause $C_i$
		contains three variables, it has exactly seven satisfying assignments
		(out ot all possible eight assignments to its variables).  We denote
		these assignments by $\tau_i^1, \ldots, \tau_i^{7}$ and their
		corresponding regex formulas by $\gamma_i^1, \ldots, \gamma_i^{7}$.
		We then define $\gamma_i = \bigvee_{j=1}^7 \gamma_i^j$ and $q:=
		\proj_{\emptyset}\bjoin_{i=1}^m \gamma_i$.
		
		It is straightforward to show that if there exists a satisfying assignment $\tau$ for $\psi$ then there is at least one $\mu\in\spnr{q}(\strs)$. This $\mu$ is obtained from $\tau$ by defining  $\mu(x)\df\spn{1,1}$ if $\tau(x)=0$ and $\mu(x)\df\spn{2,2}$ if $\tau(x)=1$. 
		The other direction is shown analogously: If there is at least one  $\mu\in\spnr{q}(\strs)$, then a satisfying assignment $\tau$ for $\psi$  is obtained by defining $\tau(x)\df0$ if $\mu(x)=\spn{1,1}$ and $\tau(x)\df 1$ if $\mu(x)=\spn{2,2}$.
	\end{proof}
	
	One might be tempted to think that the evaluation of regex CQs over a string $\strs$ is
	tractable as long as the regex CQ $q\df \proj_{Y} (\alpha_1 \join \dots \join \alpha_k)$
	maps to a relational CQ of a tractable class (e.g., acyclic CQs where
	evaluation is in polynomial total
	time~\cite{DBLP:conf/vldb/Yannakakis81}), by applying what we refer to
	as the \e{canonical relational} evaluation: 
	\begin{enumerate}
		\item[(a)] Evaluate each regex formula: $r_i\df \spnr{\alpha_i}(\str
		s)$. 
		\item[(b)] Evaluate $\pi_Y(r_1\join\dots\join r_k)$ (as a relational CQ).
	\end{enumerate}
	There are, though, two problems with the canonical relational evaluation. The first
	(and main) problem is that $r_i$ may already be too large (e.g.,
	exponential number of tuples). The second problem is that, even if
	$r_i$ is of manageable size, it is not clear that it can be
	efficiently constructed. In the next section we will show that the
	second problem is solvable: we can evaluate $\alpha_i$ over $\strs$ in
	polynomial total time. However, the first problem remains. In fact,
	the following theorem states that the evaluation of regex CQs is
	intractable, even if we restrict to ones that map to acyclic CQs, and
	even the more restricted gamma-acyclic CQs!
	%
	In addition, we can show $\Wone$-hardness with respect to the number of variables or regex formulas.
	\def\thmacyclicWone{ 
		Evaluation of gamma-acyclic Boolean regex CQs is $\NP$-complete. The problem is also $\Wone$-hard with
		respect to the number of \e{(a)} variables, \e{and (b)} atoms.
	}
	\begin{theorem}
		\label{thm:aacyclicW1}
		\thmacyclicWone
	\end{theorem}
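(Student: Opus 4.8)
The plan is to establish $\NP$-membership by the same guess-and-check argument as in Theorem~\ref{thm:BooleanRegexeCqNPhard} (a candidate tuple has polynomial size, and membership in each $\spnr{\alpha_i}(\strs)$ is checkable in polynomial time), and to obtain all three hardness statements from a \emph{single} reduction from $k$-\textsc{Clique}, which is $\NP$-complete for unbounded $k$ and $\Wone$-hard with parameter $k$. The central idea is to push the ``cyclic'' content of a clique instance entirely into the regex atoms, so that the join structure stays trivially acyclic: I let every atom range over the \emph{same} set of capture variables $\{x_1,\dots,x_k\}$. Then $q$ is in essence an intersection of regex-defined relations, and it maps to a relational CQ in which all atoms carry identical variable sets --- a hypergraph with a single repeated hyperedge, which I will argue is gamma-acyclic.

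Given a graph $G=(V,E)$ and parameter $k$, I first build the input string as a concatenation of $k$ blocks, $\strs \df B_1\cdots B_k$, where each $B_l$ lists the vertices of $V$ so that every vertex $v$ corresponds to a \emph{unique} span inside $B_l$ (e.g.\ a unit span over a dedicated marker). The variable $x_l$ is meant to capture the token of the vertex chosen for the $l$-th clique member inside $B_l$. For every pair $i<j$ I construct a regex atom $\alpha_{ij}$ over $\{x_1,\dots,x_k\}$ as a disjunction over the edges of $G$: for each $(u,v)\in E$ there is one alternative that walks left-to-right through all $k$ blocks, pins $x_i$ to the token of $u$ in $B_i$ and $x_j$ to the token of $v$ in $B_j$, and lets every other $x_l$ range over all valid vertex tokens of $B_l$ (each alternative also forces every variable to capture a legal vertex token). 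Each $\alpha_{ij}$ opens and closes each variable exactly once along every alternative, hence is functional and of polynomial size. The query is the Boolean $q\df \proj_{\emptyset}\bjoin_{i<j}\alpha_{ij}$.

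For correctness I will show $\spnr{q}(\strs)\neq\emptyset$ iff $G$ has a $k$-clique. The natural join forces each shared $x_l$ to receive \emph{the same span} in all atoms, and since distinct vertices correspond to distinct spans this pins down a single vertex per $x_l$; a common tuple then exists exactly when, for every pair $i<j$, the vertices chosen for $x_i$ and $x_j$ are adjacent, i.e.\ the chosen vertices form a clique. For the structural claim, every atom has variable set $\{x_1,\dots,x_k\}$, so the mapped relational CQ (with a fresh relation symbol per atom, hence no self-joins) is the hypergraph whose $\binom{k}{2}$ hyperedges all equal $\{x_1,\dots,x_k\}$; no gamma-cycle can exist, because a gamma-cycle requires an intermediate cycle-vertex lying in exactly its two consecutive cycle-edges, whereas here every vertex lies in all $\binom{k}{2}$ edges. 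Hence $q$ is gamma-acyclic. Finally, the reduction is polynomial-time and uses $k$ variables and $\binom{k}{2}$ atoms, both bounded by functions of $k$; it is therefore an FPT-reduction for the parameter ``number of variables'' and (since $k$ and $\binom{k}{2}$ are mutually bounded) also for ``number of atoms,'' giving $\Wone$-hardness for both, while taking $k$ as part of the input yields $\NP$-hardness.

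The main obstacle I anticipate is the regex engineering in the second step: I must realize, as a \emph{single} functional regex read once from left to right, the combination of ``pin $x_i,x_j$ to the endpoints of a fixed edge'' with ``let the remaining $k-2$ variables roam over all valid vertex tokens,'' and I must choose the block encoding so that distinct vertices always yield distinct spans, since it is precisely the span-equality enforced by the join that makes the vertex chosen for each $x_l$ consistent across the $\binom{k}{2}$ atoms. A secondary point requiring care is verifying gamma-acyclicity against the exact definition, i.e.\ confirming that the ``every vertex meets every edge'' property of the repeated-hyperedge instance blocks the intermediate-vertex condition of a gamma-cycle.
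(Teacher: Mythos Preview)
Your approach is correct and genuinely different from the paper's. Both reduce from $k$-\textsc{Clique}, but they distribute the work between string, atoms, and variables in opposite ways. The paper encodes the edge set of $G$ in $\strs$, uses $O(k^2)$ variables $x_{i,j},y_{i,j}$ (one pair per clique edge), and $O(k)$ atoms: a single ``big'' atom $\gamma$ that picks out $\binom{k}{2}$ edges from $\strs$, together with atoms $\delta_1,\dots,\delta_{k}$ that enforce, for each clique-slot $l$, that all the variables meant to carry the $l$-th vertex agree; gamma-acyclicity holds because the $\delta_l$ have pairwise disjoint variable sets, all contained in $\svars(\gamma)$. You instead encode $k$ copies of the vertex set in $\strs$, use only $k$ variables $x_1,\dots,x_k$, and $\binom{k}{2}$ atoms $\alpha_{ij}$, each a disjunction over $E$ that hard-wires one edge between slots $i$ and $j$; gamma-acyclicity is then trivial because every atom carries the full variable set, so the hypergraph has (after deduplication) a single hyperedge.

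What each buys: your construction makes the gamma-acyclicity argument a one-liner and uses fewer variables, at the price of larger regex atoms (each $\alpha_{ij}$ has $|E|$ alternatives, each threading through all $k$ blocks) and more atoms. The paper keeps the atoms smaller and fewer (the $\delta_l$ are disjunctions over the $n$ vertices, not the edges), but pays with $O(k^2)$ variables and a slightly less obvious acyclicity check. For the $\Wone$-hardness claims both work equally well, since in either construction both the variable count and the atom count are bounded by computable functions of $k$. One small point worth making explicit in your write-up: since $x_i$ lives in block $B_i$ and $x_j$ in block $B_j$, distinctness of the chosen \emph{vertices} (not just spans) follows because $\alpha_{ij}$ only has alternatives for genuine edges $\{u,v\}$ with $u\neq v$; you rely on this for the ``$\Rightarrow$ clique'' direction.
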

\newcommand{\enum}{m}
\newcommand{\lep}{\vop{}}
\newcommand{\rip}{\mathop{{\dashv}}}
\newcommand{\sepp}{\mathop{{\#}}}
\begin{proof}
	The $\NP$-upper bound was already discussed in the proof of Theorem~\ref{thm:BooleanRegexeCqNPhard}.
	We prove both lower bounds at the same time by defining a polynomial time FPT-reduction from the $k$-clique problem. 
	Given an undirected graph $G:=(V,E)$ and a $k\geq 2$, this problem asks whether $G$ contains a clique with $k$ nodes.
	Let  $\alphabet:=\{\ta,\tb,\lep , \sepp , \rip\}$ (the proof can  be adapted to a binary $\alphabet$ with standard  techniques). We assume  $V=\{v_1,\dots, v_n\}$, $n=|V|$, and associate each $v_i\in V$ with a unique string $\str{v}_i\in\{\ta,\tb\}^*$ such that  $|\str{v}_i|$ is $O(\log n)$.
	
	For all $1\leq i < j \leq n$, let  $\str{e}_{i,j}=\emptyword$ if $\{v_i,v_j\}\notin E$, and $\str{e}_{i,j}\df\lep\str{v}_i \sepp \str{v}_j \vcl{}$ if $\{v_i,v_j\}\in E$. Finally, we define $\strs \df (\str{e}_{1,2} \cdots \str{e}_{1,n}) \cdot (\str{e}_{2,3} \cdots \str{e}_{2,n}) \cdots (\str{e}_{n-1,n})$. Thus, $\str{s}$ encode $E$, such that an edge $\{v_i,v_j\mid i<j\}$ precedes an edge $\{v_{i'},v_{j'}\mid i'<j'\}$   if $i<i'$, or $i=i'$ and $j<j'$.
	
	Next, we construct $q$ such that $\spnr{q}(\str s)
	\ne \emptyset $ if and only if there is a $k$-clique in $G$.
	Note that a $k$-clique has $k$ nodes $v_{c(1)},\dots,v_{c(k)}$, and we assume that $i<j$ implies $c(i)<c(j)$. For each $v_{c(l)}$,  $q$ shall contain the $(k-1)$ 
	variables $y_{1,l}$ to  $y_{l-1,l}$ and $x_{l,l+1}$ to $x_{l,k}$. The intuition is that each pair $x_{i,j}$ and $y_{i,j}$ of variables corresponds to  an edge  $\{v_{c(i)}, v_{c(j)}\}$. In particular, $x_{i,j}$ and $y_{i,j}$ shall represent the node with the smaller and larger index,  respectively.
	To this end, we define 
	$\gamma:= \gamma_1\cdots \gamma_{k-1}$ with 
	$\gamma_i:= \gamma_{i,i+1}\cdots \gamma_{i,k}$ and 
	$$	\gamma_{i,j}:=
	\alphabet^* \lep \bind{x_{i,j}}{ (\ta\ror\tb)^*} \sepp  \bind{y_{i,j}}{ (\ta\ror\tb)^* } \rip\alphabet^* 
	$$
	for all $1\leq i < j \leq k$. This also reflects that the edges of the clique occur in $\strs$ in the order $\{v_{c(1)},v_{c(2)}\}$,\ldots, $\{v_{c(k-1)},v_{c(k)}\}$.	We now want to ensure that for each $1\leq l\leq k$, all $y_{i,l}$ and all $x_{l,j}$  with $1\leq i < l < j \leq k$ have to be matched to the same substring. To ensure this, for each $1\le l \le k$,  we define the regex formula
	$
	\delta_l:=  \bigvee_{i=1}^{n} \delta_{l,\str{v}_i}$
	where 
	\begin{multline*}
		\delta_{l,\str{v}} := 
		\alphabet^* \sepp  \bind{y_{1,l}}{\str{v}} \rip \alphabet^*
		\cdots 
		\alphabet^* \sepp  \bind{y_{l-1,l}}{\str{v}} \rip \alphabet^*\\
		\cdot\alphabet^* \lep \bind{x_{l,l+1}}{\str{v}} \sepp  \alphabet^*
		\cdots	
		\alphabet^* \lep \bind{x_{l,k}}{\str{v}} \sepp  \alphabet^*.
	\end{multline*}
	Finally we define $q$ to be the query 
	$$
	q\df \pi_{\emptyset} \left( \gamma \join 
	\bjoin_{1\le i \le k-1} \delta_i \right)
	$$
	
	Note that $q$ contains $O(k)$ atoms and $O(k^2)$
	variables.
	Additionally, $q$ contains no gamma-cycles since each two different $\delta_l$ have no common variables.
	Moreover, as $|\gamma|$ is $O(k^2)$, and each $|\delta_l|$ is $O(kn\log n)$, $|q|$ is $O(k^2 + k^2 n\log n)=O(k^2 n\log n)$. Furthermore,  $|\strs|$ is $O(|E|)$. Hence, $q$ and $\strs$ can be constructed in polynomial time, and the construction is FPT with respect to the number of variables and atoms.
	
	All that is left to show that the reduction is correct; that is 
	$\spnr{q}(\strs) \ne \emptyset$ iff.\ $G$ contains a $k$-clique.
	Assume that $G$ contains 
	a $k$-clique $\{v_{c(1)},\dots, v_{c(k)}\}$
	where $c(i)<c(j)$ whenever $i<j$.
	Let $\mu$ be an $\strs$-tuple that is defined as follows: For all $1\leq i < j \leq k$, find the substring $\lep \str{v}_{c(i)}\sepp \str{v}_{c(j)}\rip$. Then map $x_{i,j}$ to the span that corresponds to $\str{v}_{c(i)}$ in this substring,  and  $y_{i,j}$ to the span that corresponds to $\str{v}_{c(j)}$. Then $\mu\in\spnr{\gamma}(\strs)$, since each two nodes in the clique are connected and since the encoding of $E$ in $\strs$ is ordered. Moreover, for each $l$, the restriction of $\mu$ to $\svars(\delta_l)$ is in $\spnr{\delta_l}(\strs)$, since the strings spanned by the $y_{i,l}$ and the $x_{l,j}$ are equal, and $\mu$ respects the order of the variables in~$\delta_l$. 
	
	Now assume  $\mu\in\spnr{q}(\strs)$. We can now derive the nodes $v_{c(1)},\ldots,v_{c(k)}$ of the clique directly from the  variables $x_{i,j}$ and $y_{i,j}$, as for each $l$, $\mu\in\spnr{\delta_{l}}(\strs)$ ensures that there is a unique $c(l)$ such that $\strs_{\mu(y_{i,l})} = \strs_{\mu(x_{l,j})}=\str{v}_{c(l)}$ for all $1\leq i < l < j \leq k$. Furthermore, $\mu\in\spnr{\gamma}(\strs)$ ensures that for all $i<j$, $\mu(x_{i,j})$ and $\mu(y_{i,j})$ map to the encoding of the edge $\{v_{c(i)},v_{c(j)}\}$ in $\strs$. Hence, $\{v_{c(1)},\ldots,v_{c(k)}\}$ is a $k$-clique in $G$. 
\end{proof}
	Note that the we did not show $\Wone$-hardness with respect to the length of the query.
	
	%
	
	
	\subsection{Upper Bounds}
	Next, we give positive complexity results. We begin with the central
	result of this paper.
	
	\subsubsection{Evaluation of Variable-Set-Automata}
	Our central complexity result states the vset-automata can be
	evaluated with polynomial delay.
	
	\def \thmdelayAlgorithm {Given a functional vset-automaton $A$ with
		$n$ states and $m$ transitions, and a string $\strs$, one can
		enumerate $\spnr{A}(\str s)$ with polynomial delay
		of $O(n^2|\strs|)$, following a polynomial preprocessing of
		$O(n^2|\strs|+mn)$.}
	\begin{theorem}\label{thm:delayAlgorithm}
		\thmdelayAlgorithm
	\end{theorem}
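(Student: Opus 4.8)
The plan is to reduce the enumeration of $\spnr{A}(\strs)$ to the problem of listing all words of a fixed length accepted by an NFA, for which polynomial-delay algorithms are known~\cite{ack:eff}. Write $V\df\svars(A)$ and $\ell\df|\strs|$. Since $A$ is functional, every $\strr\in\refl(A,\strs)$ is valid and satisfies $\clean(\strr)=\strs$; hence $\strr$ carries exactly $\ell$ letters from $\alphabet$ together with exactly one $\vop{x}$ and one $\vcl{x}$ per $x\in V$, so all these ref-words share the single length $L\df\ell+2|V|$. This suggests simply feeding an NFA recognizing $\refl(A,\strs)$ to the word-enumeration procedure. The obstacle is that $\strr\mapsto\mu^{\strr}$ is badly non-injective: different ref-words such as $\vop{x}\vop{y}\vcl{x}\vcl{y}$ and $\vop{y}\vcl{y}\vop{x}\vcl{x}$ encode the same tuple, and, in addition, one ref-word may arise from several accepting runs. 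Taming this non-injectivity is the heart of the argument.

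To force a bijection I would fix a linear order $\prec$ on $\xalphabetv$ with $\vop{x}\prec\vcl{x}$ for every $x$, and call a ref-word \emph{canonical} if, inside each maximal block of $\xalphabetv$-letters lying between two consecutive $\alphabet$-letters (a \emph{gap}), its variable operations occur in $\prec$-sorted order. Permuting the operations inside a gap never changes the induced tuple, while $\prec$-sorting always respects validity, so each tuple of $\spnr{A}(\strs)$ has a unique canonical ref-word. The key structural tool is the property stated above and detailed in this section: in a functional vset-automaton each state $q$ records the set of variables opened and the set closed on any run reaching it. Consequently, whenever $q$ is reachable from $p$ by a path using only $\emptyword$- and variable-transitions (i.e.\ within a single gap), the \emph{set} of operations read along that path is completely determined by the labels of $p$ and $q$, even though neither its order nor the intermediate states are.

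Using this, I would reduce as follows. First precompute the gap-reachability relation $R$, the transitive closure of the $\emptyword$- and variable-transition edges of $A$, by one graph search from each state. Then build an NFA $N$ over $\alphabet\cup\xalphabetv$ whose states are pairs $(q,i)$ with $q$ a state of $A$ and $0\le i\le\ell$: from $(p,i)$, for each $q$ with $(p,q)\in R$, a single fresh chain reads in $\prec$-order the operation set determined by the labels of $p$ and $q$ and arrives at a marked state, whose only moves are to read the letter $\sigma_{i+1}$ of $\strs$ (advancing to some $(q',i+1)$) or, if $i=\ell$ and $q=q_f$, to accept. This layering guarantees that each gap is spelled by exactly one canonical chain, so every word accepted by $N$ is canonical and cleans to $\strs$; conversely, lifting any accepting run of $A$ on $\strs$ through $R$ shows that the canonical ref-word of every tuple in $\spnr{A}(\strs)$ is accepted. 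Since distinct tuples differ in some gap and hence yield distinct canonical words, the length-$L$ words of $N$ are in bijection with $\spnr{A}(\strs)$; multiple accepting runs spelling the same canonical word are harmless, because the procedure of~\cite{ack:eff} lists distinct \emph{words}. I would then run that procedure on $N$ for length $L$, converting each emitted word to its tuple by one linear scan.

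It remains to account for the running time. Computing the functional labels and the relation $R$ costs $O(mn+n^2)$, matching the $mn$ term of the preprocessing bound; assembling $N$ and the auxiliary data structures of the word-enumeration algorithm contributes the $O(n^2|\strs|)$ term, and the per-answer work---one enumeration step plus the linear-time word-to-tuple conversion---yields the claimed $O(n^2|\strs|)$ delay. The step I expect to be most delicate is precisely this duplicate elimination: proving rigorously that canonicalization by sorted gaps, together with the label-determined operation sets, makes $N$ accept exactly one word per tuple and no spurious ref-words, \emph{while} keeping the size of $N$ within the stated quadratic bound rather than incurring the exponential blow-up that an uncontrolled reordering of variable operations would produce.
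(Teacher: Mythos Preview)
Your reduction is sound and your diagnosis of the central difficulty---that $\strr\mapsto\mu^{\strr}$ is many-to-one---is exactly right. Canonicalising the ref-words by $\prec$-sorting each gap, and using functionality to pin down the \emph{set} of operations in a gap from the endpoint configurations, does yield a bijection between accepted words and tuples. So the approach works and gives polynomial delay.

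It is, however, a genuinely different route from the paper's. The paper takes the abstraction one level higher: instead of picking a canonical ref-word per tuple, it observes that a $(V,\strs)$-tuple is already uniquely encoded by the sequence $\memf_1,\ldots,\memf_{\ell+1}$ of variable configurations (one snapshot before each terminal and one after the last). It then treats the configurations themselves as letters of an alphabet $\Konfset=\{\memf_q\mid q\in Q\}$ of size at most $n$, and builds an NFA $A_G$ over $\Konfset$ whose accepted strings of length $\ell+1$ are precisely these sequences. The gap-ordering problem vanishes, because two ref-words that permute operations inside a gap yield the \emph{same} configuration string; no canonical form is needed. The NFA $A_G$ has $O(n\ell)$ states and $O(n^2\ell)$ edges, and a simplified Ackerman--Shallit on it gives the stated $O(n^2\ell)$ delay directly.

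By contrast, your $N$ works over $\alphabet\cup\xalphabetv$, produces words of length $\ell+2|V|$, and needs a fresh chain of up to $2|V|$ states for every pair $(p,q)\in R$ at every level. That can push the state count to order $n^2|V|\ell$ rather than $n\ell$, and the word length to $\ell+2|V|$ rather than $\ell+1$; both threaten the precise $O(n^2|\strs|)$ delay and the $O(n^2|\strs|+mn)$ preprocessing you are asked to match. Your construction gives polynomial delay, but nailing the \emph{stated} constants would require either a sharper accounting of the chains or a move toward the paper's idea of collapsing each gap into a single symbol---at which point you have essentially reinvented the configuration alphabet.
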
 
	We discuss the
	algorithm and other details of the proof in Section~\ref{sec:delay}.
	In the remainder of this section, we explain the implications of this theorem for both evaluation
	approaches.
	
	\subsubsection{Canonical Relational Approach}\label{sec:naiveEvaluation}
	It was already shown by Fagin et al.~\cite{fag:doc} that every regex
	formula can be converted into an equivalent vset-automaton (where a
	vset-automaton $A$ and a regex formula $\alpha$ are said to be
	\e{equivalent} if $\spnr{A} = \spnr{\alpha}$). It is probably not at
	all surprising that this is possible in a way that is efficient and
	results in functional vset-automata (recall that we assume regex
	formulas to be functional by convention).  \def\lemmaregexToAutomaton{
		Given a regex formula $\alpha$, one can construct in time
		$O(|\alpha|)$ a functional vset-auto\-ma\-ton $A$ with
		$\spnr{A}=\spnr{\alpha}$.}
	\begin{lemma}\label{lem:regexToAutomaton}
		\lemmaregexToAutomaton \end{lemma}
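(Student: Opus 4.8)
The plan is to proceed by structural induction on the regex formula $\alpha$, following the Thompson-style construction that converts a regular expression into an $\emptyword$-NFA, but tracking the variable-operation letters $\vop{x}$ and $\vcl{x}$ as ordinary input symbols over the extended alphabet $\alphabet\cup\xalphabetv$. Recall that $\rlang(\alpha)$ is, by definition, exactly the language of $\alpha$ read as a regular expression over $\alphabet\cup\xalphabet_{\svars(\alpha)}$, with the single extra rule $\rlang(\bind{x}{\beta})\df \vop{x}\rlang(\beta)\vcl{x}$. Since $\spnr{\alpha}$ is determined entirely by $\refl(\alpha)$, and $\refl(\alpha)=\rlang(\alpha)$ because $\alpha$ is functional by convention, it suffices to build a vset-automaton $A$ with $\rlang(A)=\rlang(\alpha)$; functionality of $A$ then follows from functionality of $\alpha$, and $\spnr{A}=\spnr{\alpha}$ is immediate.

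First I would set up the induction with the invariant that for each subexpression I produce a vset-automaton with a single initial state $q_0$ and a single final state $q_f$ (distinct, with no incoming edge to $q_0$ and no outgoing edge from $q_f$), and with variable set $\svars(\alpha)$. The base cases $\emptyset$, $\emptyword$, and $\sigma\in\alphabet$ are the standard two-state gadgets; the only genuinely new base behaviour is handled in the variable-binding case. For the inductive cases I use the usual Thompson gadgets: for $\alpha_1\ror\alpha_2$ add a fresh initial and final state joined by $\emptyword$-transitions to the two sub-automata; for $\alpha_1\cdot\alpha_2$ connect $q_f$ of the first to $q_0$ of the second by an $\emptyword$-transition; for $\alpha\wild$ add the standard Kleene loop with fresh endpoints. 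For the binding case $\bind{x}{\beta}$, given the automaton for $\beta$ I add two fresh states $q_0',q_f'$ together with edges $q_0'\xrightarrow{\vop{x}} q_0$ and $q_f\xrightarrow{\vcl{x}} q_f'$, which realizes $\rlang(\bind{x}{\beta})=\vop{x}\rlang(\beta)\vcl{x}$ and extends the variable set by $x$. In each case the correctness equation $\rlang(A)=\rlang(\alpha)$ follows exactly as for Thompson's construction, treating the variable symbols as terminals.

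The linear time bound is the quantitative heart of the claim, and this is where I would be most careful. Each of the above gadgets adds only a constant number of fresh states and transitions per syntactic symbol of $\alpha$ (one or two new states and a constant number of edges for each operator, base symbol, or binding), so the total number of states and transitions is $O(|\alpha|)$ and the whole construction runs in $O(|\alpha|)$ time, provided I assemble the automaton with the right data structures. The main obstacle is precisely this running-time bookkeeping: a naive recursion that copies or renames the partially built automaton at each operator would give a quadratic blow-up. To avoid this I would build the automaton in place, representing it as an adjacency-list graph and passing up only the pair of endpoint states $(q_0,q_f)$ from each recursive call, so that each composition step merely allocates the constant number of new states and splices in the constant number of new edges in $O(1)$ time, never traversing or duplicating the sub-automata already constructed. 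With this representation the total work is proportional to the number of nodes of the parse tree of $\alpha$, which is $O(|\alpha|)$.

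Finally I would confirm that the output meets the remaining requirements of the statement. Functionality of $A$ is inherited from that of $\alpha$: every accepting run of $A$ spells a ref-word in $\rlang(A)=\rlang(\alpha)=\refl(\alpha)$, which is valid by assumption, so $\refl(A)=\rlang(A)$. Equivalence $\spnr{A}=\spnr{\alpha}$ then holds because $\spnr{A}(\strs)=\{\mu^{\strr}\mid \strr\in\refl(A,\strs)\}$ and $\refl(A,\strs)=\rlang(A)\cap\refl(\strs)=\rlang(\alpha)\cap\refl(\strs)=\refl(\alpha,\strs)$ for every $\strs$. This closes the argument.
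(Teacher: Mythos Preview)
Your proposal is correct and takes essentially the same approach as the paper: treat the variable operations $\vop{x},\vcl{x}$ as terminal letters, apply the Thompson construction to obtain an $\emptyword$-NFA with $\rlang(A)=\rlang(\alpha)$, and then inherit functionality and semantic equivalence from the assumption that $\alpha$ is functional. The only cosmetic difference is that the paper first rewrites $\bind{x}{\beta}$ syntactically into the concatenation $\vop{x}\cdot\beta\cdot\vcl{x}$ and then invokes Thompson as a black box, whereas you unfold the induction explicitly; your more careful discussion of the $O(|\alpha|)$ bookkeeping is a welcome elaboration of what the paper leaves implicit.
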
 The proof uses the 
	Thompson construction for converting regular expressions into NFAs;
	but as the proof operates via ref-words, other method for converting
	regular expressions to NFAs could be chosen as well. The same result
	was shown independently in~\cite{mor:eng}.
	
	We note here that the complexity of the preprocessing stated in
	Theorem~\ref{thm:delayAlgorithm} holds  for a general functional
	vset-automaton. If, however, the automaton is derived from a regex
	formula $\alpha$ by the construction we use for proving
	Lemma~\ref{lem:regexToAutomaton}, then the time of this preprocessing
	drops to $O(n^2 |\strs|)$, where $n$ is $O(|\alpha|)$.
	
	As a consequence, we conclude that the canonical relational approach
	to evaluation is actually efficient for UCQs, under two
	conditions. The first condition is that the regex CQs map to a
	tractable class of relational CQs. More formally, by \e{tractable
		class} of relational CQs we refer to a class $\mathbf{Q}$ of CQs
	that can be evaluated in polynomial total time,\footnotes{We could also
		use other yardsticks of enumeration efficiency, such as polynomial
		delay and \e{incremental polynomial
			time}~\cite{DBLP:journals/ipl/JohnsonP88}. Then, every occurrence
		of ``polynomial total time'' in this part would be replaced with the
		other yardstick.} such as acyclic CQs, or more generally CQs with
	\e{bounded hypertree
		width~\cite{DBLP:journals/jacm/GottlobMS09}}. But, as shown in
	Theorem~\ref{thm:aacyclicW1}, this
	condition is not enough. The second condition is that there is a
	polynomial bound on the number of tuples of each regex formula. More
	formally, we say that a class $\mathbf{A}$ of regex formulas is
	\e{polynomially bounded} if there exists a positive integer $d$ such
	that for every regex formula $\alpha\in\mathbf{A}$ and string $\str s$
	we have $|\spnr{\alpha}(\strs)|$ is $O(|\str s|^d)$.
	
	Clearly, if every regex formula in $\mathbf{A}$ can be evaluated in
	polynomial time, then $\mathbf{A}$ is polynomially bounded. From
	Theorem~\ref{thm:delayAlgorithm} we conclude that the other direction
	also hold. Hence, from Theorem~\ref{thm:delayAlgorithm} and
	Lemma~\ref{lem:regexToAutomaton} we establish the following theorem.
	
	\begin{theorem}\label{thm:boundedNumberOfResults}
		Let $\mathbf{P}$ be a class of regex UCQs. If the regex
		formulas in the UCQs of $\mathbf{P}$ belong to a polynomially
		bounded class, and
		$\mathbf{P}$ maps to a tractable class of relational UCQs, then
		UCQs in $\mathbf{P}$ can be evaluated in polynomial total time.
	\end{theorem}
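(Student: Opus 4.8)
The plan is to assemble Theorem~\ref{thm:boundedNumberOfResults} from the two ingredients already in hand: Lemma~\ref{lem:regexToAutomaton}, which compiles each regex formula into an equivalent functional vset-automaton of linear size, and Theorem~\ref{thm:delayAlgorithm}, which enumerates $\spnr{A}(\strs)$ with polynomial delay (hence in polynomial total time). The overall strategy is simply to show that the \emph{canonical relational evaluation} of Section~\ref{sec:naiveEvaluation} runs in polynomial total time under the two stated hypotheses, and then invoke the assumed relational UCQ algorithm as a black box.

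First I would fix a UCQ $q\df\bigcup_{i=1}^k q_i$ from $\mathbf{P}$, where each $q_i\df\pi_{Y}(\alpha_{i,1}\join\dots\join\alpha_{i,k_i})$, together with an input string $\strs$. The first step is the materialization phase: for each regex atom $\alpha_{i,j}$ occurring in $q$, convert it to a functional vset-automaton $A_{i,j}$ via Lemma~\ref{lem:regexToAutomaton} in time $O(|\alpha_{i,j}|)$, and then run the enumeration algorithm of Theorem~\ref{thm:delayAlgorithm} to produce the entire relation $r_{i,j}\df\spnr{\alpha_{i,j}}(\strs)$. Here the polynomial-boundedness hypothesis is essential: since $\mathbf{A}$ is polynomially bounded, there is a fixed $d$ with $|r_{i,j}|=O(|\strs|^d)$, so the enumeration halts after polynomially many answers; combined with the polynomial delay, each relation is produced in total time polynomial in $|\strs|$ and $|\alpha_{i,j}|$ (using the note in the excerpt that for automata arising from the Lemma the preprocessing is $O(n^2|\strs|)$ with $n=O(|\alpha_{i,j}|)$). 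Summing over the $O(|q|)$ atoms keeps the whole materialization phase within polynomial total time in $(|q|+|\strs|)$.

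The second step is the relational phase. Because $\mathbf{P}$ maps to a tractable class $\mathbf{Q}$ of relational UCQs, there is for each $q_i$ a relational CQ $Q_i$ with a variable-preserving bijection between their atoms; I would interpret the materialized spans as values in an ordinary database, feed the relations $r_{i,j}$ under the corresponding relation symbols of $Q_i$, and run the assumed polynomial-total-time algorithm for $\mathbf{Q}$ on this instance to enumerate $\spnr{q_i}(\strs)$. The database has size polynomial in $|\strs|$ (by polynomial boundedness), so the relational evaluation runs in time polynomial in its input plus output, i.e.\ polynomial total time. Taking the union over $i=1,\dots,k$ and applying the projection $\pi_Y$ produces $\spnr{q}(\strs)$; the final output set is again of polynomial size, so the union adds only polynomial overhead and the whole procedure is polynomial total time.

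The main obstacle, and the step deserving the most care, is reconciling the two notions of ``polynomial'' that meet at the interface between the phases. The relational tractability hypothesis guarantees polynomial \emph{total} time in the size of the relational instance plus its answer set, whereas the spanner materialization delivers relations whose size is only polynomially bounded in $|\strs|$ (not in $|q|$). I would make explicit that polynomial boundedness is exactly what converts the delay guarantee of Theorem~\ref{thm:delayAlgorithm} into a polynomial bound on materialization cost, and that the size of the constructed relational database is therefore polynomial in the combined input $(|q|+|\strs|)$; this is what lets the two polynomial-total-time guarantees compose. A secondary subtlety is that a relational CQ is required to be self-join-free (the definition of ``maps to'' forbids a relation symbol occurring twice), so if the same regex formula appears as several atoms of a $q_i$ one must materialize a separate copy of the relation per atom---harmless for the complexity bound since the number of atoms is part of $|q|$, but worth noting so the reduction to the relational algorithm is syntactically legitimate.
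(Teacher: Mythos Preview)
Your proposal is correct and follows essentially the same approach as the paper: the paper's argument (given in the text immediately preceding the theorem) is precisely to combine Lemma~\ref{lem:regexToAutomaton} and Theorem~\ref{thm:delayAlgorithm} to materialize each regex atom in polynomial total time under the polynomial-boundedness assumption, and then hand the resulting relations to the assumed tractable relational UCQ algorithm via the canonical relational evaluation. Your write-up is considerably more detailed than the paper's terse justification, and your remarks on composing the two polynomial-total-time guarantees and on the self-join-free requirement are sound clarifications that the paper leaves implicit.
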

	
	Examples of
	polynomially bounded classes of regex formulas are the following.
	\begin{citemize}
		\item The class of regex formulas with at most $k$ variables, for some
		fixed $k$. 
		\item The class of regex formulas $\alpha$ with a \e{key
			attribute}, that is, a variable $x\in\svars(\alpha)$ with the
		property that for all strings $\strs$ and tuples $\mu$ and $\mu$ in
		$\spnr{\alpha}(\strs)$, if $\mu(x)=\mu'(x)$ then $\mu=\mu'$.
	\end{citemize}
	A key attribute implies a polynomial bound as the number of spans of a
	string $\strs$ is quadratic in $|\strs|$. Interestingly, its existence
	can be tested in polynomial time.
	
	\def\propkeyProperty{Given a functional
		vset-automaton $A$ with $n$ states, and a variable $x\in\svars(A)$,
		it can be decided in time $O(n^4)$ whether $x$ is a key attribute. } 
	
	\begin{proposition}\label{prop:keyProperty}
		\propkeyProperty \end{proposition}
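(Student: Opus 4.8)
The plan is to reduce the decision to a non-emptiness test for a product automaton that searches for two accepting runs of $A$ witnessing a violation of the key property. First I would restate the property in terms of ref-words: $x$ fails to be a key attribute iff there is a string $\strs$ and two ref-words $\strr_1,\strr_2\in\refl(A,\strs)$ with $\mu^{\strr_1}(x)=\mu^{\strr_2}(x)$ but $\mu^{\strr_1}\neq\mu^{\strr_2}$. Since $A$ is functional, every $q_0$-to-$q_f$ path yields a valid ref-word, so it suffices to look for two accepting runs of $A$ over the \emph{same} terminal string whose induced tuples agree on $x$ yet differ on some other variable.

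I would then build a product automaton $B$ over the state set $Q\times Q$, augmented with a single sticky flag bit. Terminal transitions are taken \emph{synchronously}: from $(p,q)$ one moves to $(p',q')$ on $\sigma\in\alphabet$ only if $p'\in\delta(p,\sigma)$ and $q'\in\delta(q,\sigma)$; this guarantees $\clean(\strr_1)=\clean(\strr_2)$. The $\emptyword$-transitions and the variable transitions are taken \emph{asynchronously}, advancing one component at a time, since the relative order of variable operations inside a single gap (i.e.\ between two consecutive terminals) is irrelevant to the resulting spans. The crucial use of functionality is the property announced before the statement (and developed in Section~\ref{sec:delayalgo}): each state of $A$ carries a well-defined configuration recording which variables have been opened and which have been closed. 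I would precompute this configuration for every state, after first trimming $A$ to the states lying on some accepting run. Then, at each synchronized terminal transition—which for both components marks the boundary between two gaps—$B$ inspects the two current configurations: it \emph{blocks} the transition unless $p$ and $q$ agree on the open/closed status of $x$ (forcing $\vop{x}$ and $\vcl{x}$ into the same gaps, hence $\mu^{\strr_1}(x)=\mu^{\strr_2}(x)$), and it \emph{sets the flag} whenever the full configurations of $p$ and $q$ differ.

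The correctness argument rests on the observation that, for two valid ref-words with the same $\clean$-image, the induced tuples coincide iff their cumulative sets of variable operations agree at every gap boundary; a mismatch at the boundary after the $g$-th terminal pinpoints a variable whose opening or closing sits in different gaps, i.e.\ a genuine difference of spans. Because functionality makes the cumulative set readable off exactly from the current state at each boundary, the flag is set along a run of $B$ iff the two ref-words it generates define different tuples, while the $x$-blocking guarantees equal $x$-spans. Hence $B$ has an accepting run reaching $(q_f,q_f)$ with the flag set iff $x$ is \emph{not} a key attribute, so $x$ is a key attribute iff $B$ is empty. Testing non-emptiness (reachability of an accepting flagged state from $(q_0,q_0)$) is linear in the size of $B$; as $B$ has $O(n^2)$ states and $O(n^4)$ transitions, the whole procedure runs in $O(n^4)$ time.

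I expect the main obstacle to be precisely the interleaving freedom of variable operations: a naive letter-by-letter product would wrongly flag two runs that place a variable's markers in the same gap but in a different order relative to the other markers. The resolution is to never compare the raw interleavings but only the per-gap configurations at the synchronized terminal boundaries, which is exactly what functionality enables. Making this comparison simultaneously sound (no spurious witnesses) and complete (every genuine difference is caught at some boundary) is the heart of the proof.
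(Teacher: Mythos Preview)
Your proposal is correct and follows essentially the same approach as the paper: both build a product automaton on $Q\times Q$ with a sticky witness bit, use the (functionality-guaranteed) per-state variable configurations to enforce agreement on $x$ and to detect disagreement on some other variable at gap boundaries, and reduce the key-attribute test to reachability in a graph of size $O(n^4)$. The only implementation difference is that the paper first replaces all variable transitions by $\emptyword$-transitions and eliminates $\emptyword$-transitions (so the product has only synchronized terminal moves and the configuration comparison happens at destinations), whereas you keep the variable/$\emptyword$-transitions and handle them asynchronously in the product, comparing configurations at the sources of terminal moves; both realizations are standard and yield the same bound.
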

	
	
	\subsubsection{Compilation to Automata}\label{sec:compile}
	We now discuss the second evaluation approach: compiling the regex UCQ
	to a functional vset-automaton, and then applying
	Theorem~\ref{thm:delayAlgorithm}. An immediate consequence of a
	combination with past results is as follows. Fagin et
	al.~\cite{fag:doc} showed a computable conversion of spanners in a
	regular representation into a
	vset-automaton. Freydenberger~\cite{fre:log} showed a computable
	conversion of a vset-automaton into a functional
	vset-automaton. Hence, we get the following.
	
	\begin{corollary}\label{cor:fpt}
		Spanners $q$ in a regular representation (e.g., regex UCQs) can be
		evaluated with FPT delay for the parameter $|q|$.
	\end{corollary}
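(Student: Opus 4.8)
The plan is to chain the two computable compilations recalled just above with our polynomial-delay algorithm of Theorem~\ref{thm:delayAlgorithm}. Given a spanner representation $q$ in a regular representation (in particular, a regex UCQ), I would first invoke the conversion of Fagin et al.~\cite{fag:doc} to obtain a vset-automaton $A$ with $\spnr{A}=\spnr{q}$. Since this conversion is computable and operates purely on the representation $q$, never inspecting the input string, the number of states and transitions of $A$ is bounded by some computable function of $|q|$ alone.

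Next, I would apply Freydenberger's functionalization~\cite{fre:log} to turn $A$ into an equivalent \emph{functional} vset-automaton $A'$, so that $\spnr{A'}=\spnr{A}=\spnr{q}$. As recalled after Example~\ref{ex:funNotFun}, this step may incur a blow-up that is exponential in the number of variables; crucially, however, this blow-up still depends only on $|A|$, and hence only on $|q|$. Writing $n'$ and $m'$ for the numbers of states and transitions of $A'$, we therefore have $n',m'\leq h(|q|)$ for some computable function $h$ that is independent of $\strs$.

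Finally, I would feed $A'$ and the input string $\strs$ to the algorithm of Theorem~\ref{thm:delayAlgorithm}. Since $A'$ is functional, that theorem enumerates $\spnr{A'}(\strs)=\spnr{q}(\strs)$ with delay $O(n'^2|\strs|)$ after a preprocessing of $O(n'^2|\strs|+m'n')$. Substituting $n',m'\leq h(|q|)$, both the preprocessing and the delay take the form $f(|q|)\cdot p(|\strs|)$ for a computable $f$ and a polynomial $p$, which is exactly FPT delay with parameter $|q|$.

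The only subtlety to verify---and the step I expect to carry the real weight---is that the two static compilations never depend on the data, so that all of the (possibly exponential) blow-up is absorbed into the parameter. This is immediate from the fact that both conversions act on $q$ before the string is ever read: the sizes $n',m'$ are bounded by a function of $|q|$ with no hidden dependence on $\strs$, leaving only the polynomial factor $p(|\strs|)$ supplied by Theorem~\ref{thm:delayAlgorithm} to depend on the data. Hence the overall enumeration runs with FPT delay for the parameter $|q|$, as claimed.
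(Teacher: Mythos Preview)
Your proposal is correct and follows essentially the same approach as the paper: compile the regular representation into a vset-automaton via~\cite{fag:doc}, functionalize it via~\cite{fre:log}, and then invoke Theorem~\ref{thm:delayAlgorithm}, with the observation that both compilations depend only on $|q|$ so that their (possibly exponential) cost is absorbed into the parameter. The paper states exactly this chain of reductions in the paragraph preceding the corollary and leaves the details implicit; you have simply spelled them out.
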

	
	The corollary should be contrasted with the traditional relational
	case, where Boolean CQ evaluation is $\Wone$-hard when the size of the
	query is the parameter~\cite{PapadimitriouY-DB-FPT}. Hence, in that
	respect regex UCQ evaluation is substantially more tractable than UCQ
	evaluation in the relational model.

	Our next results are established by applying \e{efficient}
	compilations. Such compilations were obtained independently by
	Morciano et al.~\cite{mor:eng}; we discuss the differences in the approaches
	after each result (generally, both here and in~\cite{mor:eng}, the
	proofs are based on standard constructions, but ref-words allow us to
	take shortcuts). Furthermore, our proofs also discuss the
	constructions with respect to Theorem~\ref{thm:delayAlgorithm}.  We
	begin with the most straightforward result, the projection operator.
	
	\def\lemprojection {Given a functional vset-automaton $A$ and
		$Y\subseteq \svars(A)$, one can construct in linear time a
		functional vset-automaton $A_Y$ with $\spnr{A_Y} = \spnr{
			\pi_Y(A)}$.} \begin{lemma}\label{lem:projection}
		\lemprojection \end{lemma}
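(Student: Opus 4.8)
The plan is to realize the projection by a purely syntactic relabeling of $A \df (V,Q,q_0,q_f,\delta)$ that keeps the underlying graph intact. Concretely, I would set $A_Y \df (Y, Q, q_0, q_f, \delta_Y)$, where $\delta_Y$ agrees with $\delta$ on every terminal symbol $\sigma \in \alphabet$ and on every marker in $\xalphabet_Y$, and where each transition labeled $\vop{x}$ or $\vcl{x}$ with $x \in V \setminus Y$ is turned into an $\emptyword$-transition between the same two states (that is, such targets are added to $\delta_Y(p,\emptyword)$). Since this touches each transition of $A$ at most once and only requires a membership test ``$x \in Y$?'', it is clearly computable in time linear in the size of $A$.

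To argue correctness I would work, as everywhere in the paper, at the level of ref-words. Let $h\colon (\alphabet \cup \xalphabetv)\wild \to (\alphabet \cup \xalphabet_Y)\wild$ be the morphism that fixes $\alphabet$ and $\xalphabet_Y$ pointwise and maps $\vop{x}$ and $\vcl{x}$ to $\emptyword$ for every $x \in V \setminus Y$. Because $A_Y$ and $A$ share the same graph and the relabeling implements exactly $h$ on the sequence of labels read along any path, the first key fact is $\rlang(A_Y) = h(\rlang(A))$. The remaining facts concern a single valid ref-word $\strr$ of $A$: (i) $\clean(h(\strr)) = \clean(\strr)$, since the markers deleted by $h$ would anyway be erased by $\clean$; (ii) $h(\strr)$ is valid for $Y$, because $h$ leaves the (unique, correctly ordered) occurrences of $\vop{x},\vcl{x}$ for $x \in Y$ in place while discarding all other markers; and (iii) $\mu^{h(\strr)} = {\mu^{\strr}|}_Y$, which follows from the definition of $\mu^{\strr}$ together with (i), as deleting a non-$Y$ marker does not change the $\clean$-length of any prefix and hence does not move the boundaries recorded for the surviving variables.

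Functionality of $A_Y$ then comes for free: since $A$ is functional, $\rlang(A) = \refl(A)$, so every ref-word of $A$ is valid for $V$; by (ii), every element of $\rlang(A_Y) = h(\rlang(A))$ is valid for $Y$, that is, $\rlang(A_Y) = \refl(A_Y)$. Finally, combining $\rlang(A_Y) = h(\refl(A))$ with (i) and (iii) gives, for every $\strs$, that $\spnr{A_Y}(\strs) = \{\, {\mu^{\strr}|}_Y \mid \strr \in \refl(A),\ \clean(\strr) = \strs \,\} = \{\, {\mu|}_Y \mid \mu \in \spnr{A}(\strs)\,\} = \spnr{\pi_Y(A)}(\strs)$, as required. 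The one step demanding genuine care is (iii): one must check that erasing the markers of projected-out variables shifts none of the span boundaries of the retained variables, which is precisely what (i) guarantees; everything else is bookkeeping.
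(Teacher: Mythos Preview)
Your proposal is correct and follows essentially the same approach as the paper: both define $A_Y$ by relabeling the transitions for variables outside $Y$ as $\emptyword$-transitions, introduce the erasing morphism $h$ (the paper calls it $h_Y$), observe $\rlang(A_Y)=h(\rlang(A))$, and derive functionality of $A_Y$ and $\spnr{A_Y}=\spnr{\pi_Y(A)}$ from functionality of $A$. Your write-up is in fact slightly more explicit than the paper's, which argues functionality by contradiction and then simply asserts that $\refl(A_Y)=h_Y(\refl(A))$ ``also implies'' $\spnr{A_Y}=\spnr{\pi_Y(A)}$; your points (i)--(iii) spell out exactly the bookkeeping the paper leaves to the reader.
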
 This is shown by replacing all
	transitions for operations on variables that are not in $Y$ with
	$\emptyword$-transitions. One advantage of our proof is that it
	showcases a nice use of the ref-word semantics.  The situation is
	similar for the union operator.  \def\lemunion{Given functional
		vset-automata $A_1$,\dots, $A_k$ with $\svars(A_1)=\cdots
		=\svars(A_k)$, one can construct in linear time a functional
		vset-automaton $A$ with $\spnr{A}=\spnr{A_1\cup \cdots \cup A_k}$.
	} \begin{lemma}
	\label{lem:union}
	\lemunion
\end{lemma}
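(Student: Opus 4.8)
The plan is to use the standard $\emptyword$-NFA union construction, adapted to the ref-word semantics, and then to verify that functionality is preserved thanks to the shared variable set. First I would take the disjoint union of the state sets of $A_1,\dots,A_k$ (renaming states if necessary), introduce a fresh initial state $q_0$ and a fresh final state $q_f$, and add an $\emptyword$-transition from $q_0$ to the initial state of each $A_i$ and from the final state of each $A_i$ to $q_f$. Writing $V \df \svars(A_1) = \cdots = \svars(A_k)$, let $A$ be the resulting vset-automaton over $V$. All original transitions are retained, so the construction copies each state and transition a constant number of times and adds only $O(k)$ new states and edges; hence it runs in time linear in the total size of $A_1,\dots,A_k$.

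The core identity to establish is $\rlang(A) = \bigcup_{i=1}^{k} \rlang(A_i)$. This is immediate from the construction: the only edges incident to the fresh states are the $\emptyword$-edges described above, and there are no edges between distinct components, so every accepting path of $A$ begins with the $\emptyword$-edge from $q_0$ into some $A_i$, proceeds from the initial to the final state entirely within $A_i$, and ends with the $\emptyword$-edge from that final state to $q_f$; conversely, every accepting path of $A_i$ extends to one of $A$ by prepending and appending these two $\emptyword$-edges. Since $\clean$ and $\mu^\strr$ depend only on the ref-word itself, for every $\strs$ we obtain $\refl(A,\strs) = \rlang(A) \cap \refl(\strs) = \bigcup_i \refl(A_i,\strs)$, and therefore $\spnr{A}(\strs) = \bigcup_i \spnr{A_i}(\strs) = \spnr{A_1 \cup \cdots \cup A_k}(\strs)$, which is exactly the definition of the union operator.

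It remains to check that $A$ is functional, and this is the one step where the hypothesis $\svars(A_1) = \cdots = \svars(A_k)$ is essential. Because each $A_i$ is functional, every $\strr \in \rlang(A_i)$ is valid for $\svars(A_i) = V$; hence every ref-word in $\rlang(A) = \bigcup_i \rlang(A_i)$ is valid for $V$, giving $\rlang(A) \subseteq \refl(A)$ and thus $\rlang(A) = \refl(A)$, so $A$ is functional. Were the variable sets to differ, a ref-word valid for some $V_i$ need not open and close all variables of the combined set, and functionality could fail; the common-variable-set assumption is precisely what rules this out (and is also why the spanner-union operator is only defined when the operands agree on their variables). This preservation-of-validity argument is the only subtle point, the remaining steps being routine verifications from the ref-word definitions.
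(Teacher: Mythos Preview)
Your proof is correct and matches the paper's approach essentially line for line: both use the standard $\emptyword$-NFA union construction with fresh initial and final states, observe $\rlang(A)=\bigcup_i \rlang(A_i)$, and deduce functionality from the shared variable set. The only addition in the paper's version is a paragraph (beyond the lemma statement proper) analyzing how the disjoint-component structure of $A$ yields a sharper delay bound when Theorem~\ref{thm:delayAlgorithm} is applied, namely $O(\ell\sum_i n_i^2)$ rather than $O(\ell(\sum_i n_i)^2)$.
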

Like~\cite{mor:eng}, we prove this using the union construction for NFAs. In the proof, we also argue that the upper-bound for the worst case complexity of Theorem~\ref{thm:delayAlgorithm} is lower than the number of states of the constructed automaton suggests. Observe that in Lemma~\ref{lem:union}, the number of automata is not bounded. The situation is different for the join operator, which also uses the only construction that is not completely straightforward.
\def\lemjoinVset {Given two
	functional vset-automata $A_1$ and $A_2$, each with $O(n)$
	states and $O(v)$ variables, one can construct in time
	$O(vn^4)$ a functional vset-auto\-ma\-ton $A$ with $\spnr {A_1
		\join
		A_2} =\spnr{A}$.}
\begin{lemma}\label{lem:joinVset}
	\lemjoinVset
\end{lemma}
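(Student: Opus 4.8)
The plan is to carry out the whole argument at the level of ref-words and to realize the join by a product automaton. Write $V_1\df\svars(A_1)$, $V_2\df\svars(A_2)$ and $V\df V_1\cup V_2$. For a ref-word $\str r$ over $\alphabet\cup\xalphabet_V$ and $i\in\{1,2\}$, let $\rho_i(\str r)$ denote the ref-word obtained by deleting every marker $\vop{x},\vcl{x}$ with $x\notin V_i$. The two facts I would rely on are $\clean(\rho_i(\str r))=\clean(\str r)$ and $\mu^{\str r}|_{V_i}=\mu^{\rho_i(\str r)}$, since erasing foreign markers changes neither the terminals nor the text gaps in which the $V_i$-markers sit. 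My target is a functional vset-automaton $A$ whose reachable ref-words are exactly those $\str r$ with $\rho_1(\str r)\in\rlang(A_1)$ and $\rho_2(\str r)\in\rlang(A_2)$; together with the two facts above this immediately yields $\spnr{A}=\spnr{A_1\join A_2}$. First I would build $A$ as the product on $Q_1\times Q_2$ that synchronizes the two components on every terminal $\sigma\in\alphabet$ and on every shared marker ($\vop{x},\vcl{x}$ with $x\in V_1\cap V_2$), while letting each component move on its own $\emptyword$-transitions and on its private markers ($x\in V_i\setminus V_{3-i}$) with the other component frozen; the initial and final states are the pairs of initial and final states.

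The forward direction is then routine: a run of $A$ labelled $\str r$ projects onto a run of $A_1$ labelled $\rho_1(\str r)$ and a run of $A_2$ labelled $\rho_2(\str r)$, so membership in $\rlang(A)$ forces both conditions; functionality of $A$ follows because $A_1,A_2$ are functional, so each $\rho_i(\str r)$ is valid for $V_i$, and validity of the $V_i$-markers is inherited unchanged by $\str r$, making $\str r$ valid for $V$ (and $\str r$ carries only $V$-markers by construction). Consequently every $\mu^{\str r}$ with $\str r\in\refl(A,\strs)$ restricts into $\spnr{A_1}(\strs)$ and $\spnr{A_2}(\strs)$ and hence lies in $(\spnr{A_1}\join\spnr{A_2})(\strs)$.

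The hard part will be the converse inclusion, and this is exactly the step the statement hints is ``not completely straightforward''. Given $\mu\in(\spnr{A_1}\join\spnr{A_2})(\strs)$ I can pick witnesses $\str r_1\in\refl(A_1,\strs)$ and $\str r_2\in\refl(A_2,\strs)$ with $\mu^{\str r_i}=\mu|_{V_i}$, and I must interleave them into a single $\str r$ with $\rho_i(\str r)=\str r_i$. The terminals align because $\clean(\str r_1)=\clean(\str r_2)=\strs$, and a shared variable lands in the same text gap in both witnesses because its span is the same; the obstruction is that several shared markers falling into one gap may appear in different relative orders in $\str r_1$ and $\str r_2$, in which case no interleaving exists. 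I would remove this obstruction by first normalizing both automata so that, along any accepting run, the variable operations performed between two consecutive terminal reads occur in one fixed canonical order (all opens before all closes, each group sorted by a global order on $\svars$). This is sound because reordering operations within a single text gap does not change $\mu^{\str r}$, and it is achievable in polynomial time by exploiting functionality: as noted after Theorem~\ref{thm:funtest}, each state of a functional vset-automaton carries a well-defined set of opened and of closed variables, so the multiset of operations on any marker-only segment between two states is determined by its endpoints and can be replaced by a canonical-order gadget. After normalization the shared markers occur in the same order inside each gap of $\str r_1$ and $\str r_2$, the private markers of the two sides are mutually unconstrained, and the desired interleaving $\str r$ exists, completing the converse.

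Finally I would account for the cost. The product has $O(n^2)$ states; the synchronized transitions on a shared marker are pairs of like-labelled edges, at most $O(n^2)\cdot O(n^2)$ for each of the $O(v)$ shared variables, giving $O(vn^4)$, while terminal synchronizations add $O(|\alphabet|\,n^4)=O(n^4)$ and the private-marker and $\emptyword$-moves add only $O(vn^3)$. Since the normalization is itself polynomial, the whole construction runs in $O(vn^4)$, as claimed.
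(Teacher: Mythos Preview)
Your correctness argument is sound and you have put your finger on exactly the right obstruction: in a gap between two terminals, the shared markers may appear in different orders in $\str r_1$ and $\str r_2$, so some reordering device is unavoidable. Functionality of the product follows cleanly from your projection argument, and the interleaving lemma you sketch for the converse inclusion is correct once the inputs are normalized.

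The gap is in the complexity accounting. Your normalization \emph{does} add states: replacing each marker/$\emptyword$ segment between an original pair $(p,q)$ by a canonical-order gadget requires fresh intermediate states, and there are $\Theta(n^2)$ such pairs with gadgets of length up to $v$, so each normalized $A'_i$ has $\Theta(vn^2)$ states, not $O(n)$. Your product therefore lives on a state set of size $\Theta(v^2 n^4)$, and the asynchronous private-marker moves alone already contribute $\Theta(v^2 n^4)$ transitions. Hence the construction as you describe it yields $O(v^2 n^4)$, not the claimed $O(vn^4)$; the line ``the product has $O(n^2)$ states'' silently reuses $n$ for the \emph{original} state count while the product is taken over the \emph{normalized} automata.

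The paper avoids this extra factor of $v$ by doing the reordering \emph{inside} the product rather than as a preprocessing step on the inputs. It keeps the product state set as the consistent pairs in $Q_1\times Q_2$ (so genuinely $O(n^2)$ states), uses the variable configurations $\memf_{q_1},\memf_{q_2}$ to determine, for each pair of variable-$\emptyword$-closures, the \emph{set} of operations that must occur, and labels a single transition with that set. This gives $O(n^4)$ set-labeled transitions, and only the final linearization of each set into a path of length $O(v)$ contributes the factor $v$, for $O(vn^4)$ overall. In effect, normalizing at the level of state \emph{pairs} costs $O(v)$ per pair of pairs, whereas normalizing each input separately costs $O(v)$ per pair of \emph{single} states and then gets squared in the product. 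If you want to rescue your approach within the stated bound, move the canonical-order gadget from the inputs into the product: for each consistent $(p_1,p_2)$ and each consistent $(q_1,q_2)$ with $q_i\in\veclos_i(p_i)$, insert one fresh canonical path---that is precisely the paper's construction in disguise.
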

Both this proof and the one from~\cite{mor:eng} build on the standard construction for automata intersection; the key difference is that the our employs variable configurations. 
Note that joining $k$ automata leads to a time of $O(vn^{2k})$, which is only polynomial if $k$ is bounded. Due to Theorem~\ref{thm:aacyclicW1}, this is unavoidable under standard complexity theoretic assumptions.

This motivates the following definition. Let $k\geq 0$ be fixed. A \e{regex $k$-CQ} is a regex CQ with at
most $k$ atoms. A \e{regex $k$-UCQ} is a UCQ where each CQ is a $k$-CQ
(i.\,e., a disjunction of $k$-CQs). From
Lemmas~\ref{lem:regexToAutomaton} and~\ref{lem:joinVset} we conclude
that we can convert, in polynomial time, a join of $k$ regex formulas
into a single functional vset-automaton. Then,
Lemma~\ref{lem:projection} implies that projection can also be
efficiently pushed into the functional vset-automaton. Hence, in
polynomial time we can translate a $k$-CQ into a functional
vset-automaton. Then, with Lemma~\ref{lem:union} we conclude that this
translation extends to $k$-UCQs. Finally, by applying
Theorem~\ref{thm:delayAlgorithm} we arrive at the following main result.

\begin{theorem}\label{thm:k-ucq}
	For every fixed $k$, regex $k$-UCQs can be evaluated with polynomial delay.
\end{theorem}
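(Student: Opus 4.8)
The plan is to compile the entire regex $k$-UCQ into a single functional vset-automaton and then invoke the polynomial-delay enumeration of Theorem~\ref{thm:delayAlgorithm}. Write $q=\bigcup_{i=1}^{l} q_i$, where each disjunct is a regex $k$-CQ of the form $q_i=\pi_{Y}\bigl(\alpha_{i,1}\join\cdots\join\alpha_{i,k_i}\bigr)$ with $k_i\leq k$ and, by the definition of a regex UCQ, $\svars(q_i)=Y$ for every $i$. The compilation proceeds atom by atom, then disjunct by disjunct, and every intermediate automaton is kept functional so that the enumeration theorem applies at the end.

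First I would apply Lemma~\ref{lem:regexToAutomaton} to each regex atom $\alpha_{i,j}$, obtaining in linear time an equivalent functional vset-automaton $A_{i,j}$ with $O(|\alpha_{i,j}|)$ states. Next, for each fixed $i$, I would fold the at most $k$ automata $A_{i,1},\dots,A_{i,k_i}$ into a single functional vset-automaton by iterating the join construction of Lemma~\ref{lem:joinVset}. Since at most $k$ automata are joined and $k$ is a fixed constant, this stays polynomial: as already observed before the theorem, joining $k$ automata of $O(n)$ states costs $O(vn^{2k})$ and yields an automaton with $O(n^{k})$ states, which is polynomial in the input size for fixed $k$. I would then push the projection $\pi_Y$ into this automaton via Lemma~\ref{lem:projection}, at linear cost, arriving at a functional vset-automaton $B_i$ with $\spnr{B_i}=\spnr{q_i}$.

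Having a functional vset-automaton $B_i$ for each disjunct, I would combine them using the union construction of Lemma~\ref{lem:union}. Crucially, the number $l$ of disjuncts need not be bounded, yet that construction runs in linear time and produces a functional vset-automaton $A_q$ whose state count is the sum of those of the $B_i$ up to an additive constant, hence still polynomial in the input. At this point $\spnr{A_q}=\spnr{q}$ and $A_q$ is functional, so I would run the algorithm of Theorem~\ref{thm:delayAlgorithm} on $A_q$ and $\strs$. Writing $N$ for the (polynomially bounded) number of states of $A_q$, both the stated preprocessing $O(N^2|\strs|+MN)$ and the delay $O(N^2|\strs|)$ are polynomial in the combined input size; together with the polynomial-time compilation above, this yields polynomial delay for the whole evaluation.

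The one genuine obstacle is the join step: it is the sole construction whose output grows with the number of joined atoms, and for unbounded $k$ the $O(n^{k})$ blow-up in states would ruin both the preprocessing and the per-step delay. Fixing $k$ is precisely what confines this to polynomial size, and Theorem~\ref{thm:aacyclicW1} confirms that the restriction is essentially unavoidable under standard assumptions. A second point worth checking---though it is already guaranteed---is duplicate-free enumeration across disjuncts: a union automaton may admit several accepting runs that yield the same $(Y,\strs)$-tuple, but since $\spnr{A_q}(\strs)$ is a set and Theorem~\ref{thm:delayAlgorithm} prints each of its elements exactly once, no separate deduplication is required.
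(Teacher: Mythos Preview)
Your proposal is correct and follows essentially the same route as the paper: convert each regex atom to a functional vset-automaton via Lemma~\ref{lem:regexToAutomaton}, fold the at most $k$ atoms of each disjunct with Lemma~\ref{lem:joinVset}, apply Lemma~\ref{lem:projection}, take the union over all disjuncts with Lemma~\ref{lem:union}, and finally invoke Theorem~\ref{thm:delayAlgorithm}. Your additional remarks on why the join is the sole size-critical step and why duplicate elimination across disjuncts is already handled by the set semantics of Theorem~\ref{thm:delayAlgorithm} are accurate and welcome, though not strictly needed for the argument.
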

Hence, where Theorem~\ref{thm:k-ucq} applies, it is more powerful than Theorem~\ref{thm:boundedNumberOfResults}: The former holds for all regex $k$-UCQs, while  the latter has  additional requirements, even when limited to $k$-UCQs.

In the next section we describe the algorithm of
Theorem~\ref{thm:delayAlgorithm}. Then, in Section~\ref{sec:streq} we
extend the main theorems of this section,
Theorems~\ref{thm:boundedNumberOfResults} and~\ref{thm:k-ucq}, to
incorporate string equalities.

\section{Evaluating VSet-Automata}\label{sec:delay}
In this section, we give a high-level description of the proof of Theorem~\ref{thm:delayAlgorithm} and the resulting algorithm that, given a functional vset-automaton~$A$ and a string $\strs\in\alphabet^*$, enumerates $\spnr{A}(\strs)$ with polynomial delay. Before we proceed to the actual algorithm in Section~\ref{sec:delayalgo}, we first discuss the underlying notion of variable configurations.

\subsection{Variable Configurations}
In order to introduce the concept, we consider an arbitrary functional vset-automaton $A=(V,Q,q_0,q_f,\delta)$,  and ensure that all states are reachable from $q_0$, and that $q_f$ is reachable from every state.

Then, for each state $q\in Q$ and all $x\in V$, each ref-word $\strr\in (\xalphabetv\cup\alphabet)^*$ that takes $A$ from $q_0$ to $q$ satisfies exactly one of these mutually exclusive conditions:
\begin{cenumerate}
	\item $|\strr|_{\vop{x}}=|\strr|_{\vcl{x}}=0$,
	\item $|\strr|_{\vop{x}}=1$ and $|\strr|_{\vcl{x}}=0$, or
	\item $|\strr|_{\vop{x}}=|\strr|_{\vcl{x}}=1$, and $\vop{x}$ occurs before $\vcl{x}$.
\end{cenumerate} 
If neither of these conditions is met, $\strr$ contains $\vop{x}$ or $\vcl{x}$ twice, or the two symbols appear in the wrong order. Then we can choose any ref-word $\strr'\in(\xalphabetv\cup\alphabet)^*$ that takes $A$ from $q$ to $q_f$, and obtain a contradiction by observing that $\strr\cdot \strr'\in\rlang(A)$, although $\strr\cdot \strr'$ is not valid. Furthermore, if we compare two ref-words $\strr_1$ and $\strr_2$ that both take $A$ from $q_0$ to a common state $q$, we know that both must satisfy the same of these three conditions (as otherwise, $\strr_1 \cdot \strr'$ or $\strr_2 \cdot \strr'$ is not valid).

In other words, in each run of $A$, the information which variables have been opened or closed is stored implicitly in the states. To formalize this notion, we define the set  $\memstat$ of \emph{variable states} as $\memstat\df\{\mw,\mo, \mc\}$ (the symbols stand for $\mw$aiting, $\mo$pen, and $\mc$losed). A $\emph{variable configuration (for $V$)}$ is a function $\memf\colon V\to \memstat$ that assigns variables states to variables. For each state $q\in Q$, we define its variable configuration  $\memf_q\colon V\to \memstat$ as follows: Choose any ref-word $\strr$ that takes $A$ from $q_0$ to $q$. We define $\memf_q(x)\df \mc$ if $\strr$ contains $\vcl{x}$, $\memf_q(x)\df \mo$ if $\strr$ contains $\vop{x}$ but not $\vcl{x}$, and $\memf_q(x)\df \mw$ if $\strr$ contains neither of the two symbols (as explained above, $\memf_q$  is well-defined for each $q\in Q$). 

\begin{example}\label{ex:funConf}
	Recall the  functional vset-automaton $\Afun$ (from Example~\ref{ex:funNotFun}):
	\begin{center}
		\begin{tikzpicture}[on grid, node distance =1.5cm,every loop/.style={shorten >=0pt}]
		\node[state,initial text=,initial by arrow] (q0) {$q_0$};
		\node[state,right=of q0] (q1) {$q_1$};
		\node[state,right=of q1,accepting] (q2) {$q_f$};
		\path[->]
		(q0) edge node[above] {$\vop{x}$} (q1)
		(q1) edge node[above] {$\vcl{x}$} (q2)
		(q0) edge[loop above] node[right,near end] {$\ta$} (q0) 
		(q1) edge[loop above] node[right,near end] {$\ta$} (q1) 
		(q2) edge[loop above] node[right,near end] {$\ta$} (q2) 
		;
		\end{tikzpicture}
	\end{center}
	Then $\memf_{q_0}(x)=\mw$, $\memf_{q_1}(x)=\mo$, and $\memf_{q_f}(x)=\mc$.
\end{example}
Before we use this for the enumeration algorithm, we consider a more general view on variable configurations, that is independent of the automaton. As we shall see, the  enumeration algorithm relies on the fact that for each $\strs\in\alphabet^*$ ($\strs=\sigma_1\cdots\sigma_{\ell}$ with $\ell \geq 0$), each $(V,\strs)$-tuple can be interpreted as a sequence of $\ell+1$ variable configurations $\mems_1, \ldots, \mems_{\ell+1}$ in the following way: For $x\in V$, assume that $\mu(x)=\spn{i,j}$. For $1\leq l \leq \ell +1$, we define $\mems_l(x)\df \mw$ if $l<i$, $\mems_l(x)\df \mo$ if $i\leq l < j$, and $\mems_l(x)\df \mc$ if $l\geq j$. The idea is that each $\mems_l$ is the variable configuration immediately before reading $\sigma_l$. 

To illustrate this, consider ref-word $\strr$  with $\mu=\mu^{\strr}$ (see Section~\ref{sec:refwords}). Then $\vop{x}$ is read between $\mems_{i-1}$ and $\mems_{i}$, while $\vcl{x}$ is read between $\mems_{j-1}$ and $\mems_{j}$ (again ignoring the technicality that we do not define $\mems_0$).
\begin{example}\label{ex:tupleSeq}
	Let $V\df\{x\}$, and let $\strs\df\mathtt{aa}$. The following table contains all possible $(V,\strs)$-tuples, and the corresponding sequence $\mems_1(x),\mems_2(x),\mems_3(x)$:	
	\begin{center}
		\begin{tabular}{|l|c|c|}
			\hline 
			$\strr$ & $\mu^{\strr}(x)$ &$\memf_1(x),\memf_2(x),\memf_3(x)$ \\
			\hline 
			$\vop{x} \vcl{x} \ta\ta$ &  $\spn{1,1}$ &$\mc,\mc,\mc$ \\ 
			\hline 
			$\vop{x}\ta\vcl{x}\ta $& $\spn{1,2}$ & $\mo,\mc,\mc$\\
			\hline
			$\vop{x}\ta\ta\vcl{x} $& $\spn{1,3}$ & $\mo,\mo,\mc$\\
			\hline
			$\ta\vop{x}\vcl{x}\ta$& $\spn{2,2}$ & $\mw,\mc,\mc$\\
			\hline
			$\ta\vop{x}\ta\vcl{x}$& $\spn{2,3}$ & $\mw,\mo,\mc$\\
			\hline
			$\ta\ta\vop{x}\vcl{x}$& $\spn{3,3}$ & $\mw,\mw,\mc$\\
			\hline		
		\end{tabular} 
	\end{center}
	Note that this is exactly $\spnr{\Afun}(\strs)$, where $\Afun$ is the vset-automaton  from Example~\ref{ex:funConf}.
\end{example}
We say that a sequence of variable configurations for $V$ is \emph{valid} if it respects the order of variables states; i.\,e., $\mems_i(x)=\mc$ implies $\mems_{i+1}(x)=\mc$, and $\mems_i(x)=\mo$ implies $\mems_{i+1}(x)\in\{\mo,\mc\}$ for all $x\in V$. Obviously, each valid sequence of  $|\strs|+1$ variable configurations for $V$ can be interpreted as a $(V,\strs)$-tuple; and it is easy to see that this is a one-to-one correspondence.

To connect this point of view to the variable configurations of $A$, note that each $\strr\in\refl(A,\strs)$ can be written as $\strr = \strr_0\cdot \sigma_1\cdot \strr_1\cdot \sigma_2 \cdots \strr_{\ell-1}\cdot \sigma_{\ell}\cdot \strr_{\ell}$, where $\strr_i\in\xalphabetv^*$. For  $1\leq i \leq \ell+1$, we determine $\mems_i$ from $\strr_0 \cdot \strr_1 \cdots \strr_{i-1}$ (as in the definition of the $\memf_{q_i}$ above). This has the same effect as defining $\mems_i\df \memf_{q_i}$ for any $q_i$ that can be reached by processing $\strr_0\cdot \sigma_1\cdot \strr_1\cdot \sigma_2\cdots \strr_{i-1}$. In other words, in each run of $A$ on $\strr$, immediately before processing $\strs_i$, $A$ must be in a state $q_i$ with $\mems_i = \memf_{q_i}$.  Thus, the sequence $\mems_1,\ldots,\mems_{\ell+1}$ corresponds to the $(V,\strs)$-tuple $\mu^{\strr}$.

This observation allows us to solve the problem of enumerating $\spnr{A}(\strs)$ by enumerating the corresponding sequences of variable configurations.

Like ref-words, sequences of variable configurations can be understood as an abstraction of spanner behavior. In fact, both can be seen as successive steps of generalization: Ref-words hide the actual behavior of primitive spanner representations (i.\,e., the actual sequence of states in the vset-automaton, or which parts of the regex are mapped to which part of the input); they only express in which order variables are opened and closed. Sequences of variable configurations take this one step further, and compress successive variable operations (without terminals in-between) into a single step. This is exactly the level of granularity that is needed to distinguish different tuples.
\subsection{The Algorithm}\label{sec:delayalgo}
The algorithm enumerates the $(V,\strs)$-tuples of $\spnr{A}(\strs)$ by enumerating the corresponding sequences of $\ell+1$ variable configurations for $V$. In order to do so, we interpret each variable configuration as a letter of the alphabet $\Konfset\df\{\memf_q\mid q\in Q\}$ (note that while there are $3^{|V|}$ possible that might occur in $\Konfset$, its actual size is always bounded by $|Q|$). 
More specifically, the algorithm has the following two steps:
\begin{enumerate}
	\item Given $A$ and $\strs$, construct an NFA~$A_G$ over the alphabet $\Konfset$ such that $\lang(A_G)$ contains exactly those strings $\kappa_1\cdots \kappa_{\ell+1}$, $\kappa_i\in\Konfset$, that correspond to the elements of $\spnr{A}(\strs)$.
	\item Enumerate $\lang(A_G)$ with polynomial delay.
\end{enumerate}

The algorithm constructs the  NFA $A_G$ by first constructing a graph $G$ whose  nodes are tuples $(i,q)$, which encode that $A$ can be in state $q$ after reading $\sigma_1\cdots \sigma_i$. The edges are drawn accordingly: There is an edge from $(i,p)$ to $(i+1,q)$ if $A$ can reach $q$ from $p$ by reading $\sigma_{i+1}$ and then arbitrarily many variable operations. The NFA $A_G$ is then directly obtained from $G$ by interpreting every edge from $(i,p)$ to $(i+1,q)$ as a transition for the letter $\memf_{q}$.

We then use a modification of the algorithm by Ackerman and Shallit~\cite{ack:eff} that, given an NFA $M$ and an integer $l$, enumerates all strings in $\lang(M) \cap \alphabet^{l}$.  As $A_G$ is restricted type of NFA (all strings in $\lang(A_G)$ are of length $\ell+1$), we can ignore the intricacies of~\cite{ack:eff}; instead, we use a simplified version of the algorithm, which also reduces the running time.

In summary, the graph $G$ represents all runs of $A$ on $\strs$. By interpreting $(V,\strs)$-tuples as strings from $\Konfset^{\ell+1}$, we can treat $G$ as an NFA $A_G$, and enumerate $\spnr{A}(\strs)$ by enumerating $\lang(A_G)$. As there is a one-to-one correspondence between these two sets, the fact that $\lang(A_G)$ is enumerated without repetitions guarantees that no element of $\spnr{A}(\strs)$ is repeated.

\begin{example}\label{ex:matchNFA}
	Consider $\Afun$ (from Example~\ref{ex:funConf}) and $\strs\df \ta\ta$. From $\Afun$ and $\strs$, the algorithm constructs the NFA $A_G$ that is shown in Figure~\ref{fig:ex:matchNFA}. To see that $\lang(A_G)$ corresponds to $\spnr{\Afun}(\strs)$, take note that the table in Example~\ref{ex:tupleSeq} lists each element of $\spnr{\Afun}(\strs)$ together with its corresponding sequence of variable configurations.
	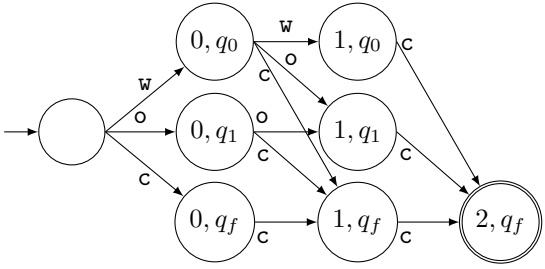
\begin{figure}
		\begin{center}
			\begin{tikzpicture}[on grid, node distance=12mm]
			\node[state] (q00) {$0,q_0$};
			\node[state, below=of q00] (q01)  {$0,q_1$};
			\node[state, below=of q01] (q02) {$0,q_f$};
			\node[state, left=of q01, xshift=-7mm,initial text=,initial by arrow] (q0) {};
			\node[state, right=of q00, xshift=7mm] (q10) {$1,q_0$};
			\node[state, below=of q10] (q11)  {$1,q_1$};
			\node[state, below=of q11] (q12) {$1,q_f$};
			\node[state, right=of q12, xshift=7mm,accepting] (q22) {$2,q_f$};
			
			\path[->]
			(q0.east) edge node[above, pos=0.5] {$\mw$} (q00)
			(q0.east) edge node[above, pos=0.5] {$\mo$} (q01)
			(q0.east) edge node[below, pos=0.5] {$\mc$} (q02) 	 	
			(q00.east) edge[above] node {$\mw$} (q10)
			(q00.east) edge[above] node {$\mo$} (q11)
			(q00.east) edge[below, very near start] node {$\mc$} (q12)
			
			(q01.east) edge[above, very near start] node {$\mo$} (q11)
			(q01.east) edge[below, very near start] node {$\mc$} (q12)
			(q02.east) edge[below, very near start] node {$\mc$} (q12)
			
			(q10.east) edge[above, very near start] node {$\mc$} (q22)	
			(q11.east) edge[below, very near start] node {$\mc$} (q22)
			(q12.east) edge[below, very near start] node {$\mc$} (q22)
			;
			\end{tikzpicture}
		\end{center}
		\caption{The NFA $A_G$ from Example~\ref{ex:matchNFA}. 	For convenience, we write each variable configuration $\memf_{q}$ as $\memf_{q}(x)$.}\label{fig:ex:matchNFA}
	\end{figure}
	A more detailed version of this example and a second example can be found in Section~\ref{app:delayExamples} in the Appendix.
\end{example}
Note that the NFA~$A_G$ in Example~\ref{ex:matchNFA} is deterministic, which means that enumerating $\lang(A_G)$ actually requires less effort than in the general case: As stated in Theorem~\ref{thm:delayAlgorithm}, if $A$ has $n$ states, we can enumerate $\spnr{A}(\strs)$ with a delay of $O(n^2|\strs|)$. But if $A_G$ is deterministic, this can be lowered to $O(|\strs|)$.

While there are automata where the worst case for the
algorithm is reached, the stated upper bounds only applies to
automata where the states have many outgoing transitions (or a
large number of $\emptyword$-transitions). For examples where
we can use it for better upper bounds, see the proofs of
Lemma~\ref{lem:union} and
Theorem~\ref{thm:enumKEqualitiesPolyDelay}.

As a final remark, we observe that the algorithm uses the variable operations on the transitions of $A$ only to compute the variable configurations. Afterwards, these transitions are treated as $\emptyword$-transitions instead. This  paper uses functional vset-automata to represent regex formulas. Instead, one could directly convert each regex formula into an $\emptyword$-NFA $A$ and a function that maps each state of $A$ to a variable configuration. The enumeration algorithm and the ``compilation lemmas'' from Section~\ref{sec:compile} would directly work with this model. Whether this actually allows constructions that are more efficient or significantly simpler remains to be seen.

\balance

	\section{String Equality}\label{sec:streq}
	Our results in the previous two sections only apply to regex UCQs. As these are not allowed to use string equality selections, these have the same expressive power as regular spanner representations.  In this section, we discuss how our results can be extended to include string equality selections, which allows us reach the full expressive power of core spanners.
	
	\subsection{Lower Bound}
	The main difficulty when dealing with string equality selections is
	that this operator quickly becomes computationally expensive, even
	without using joins. More specifically, Freydenberger and
	Holldack~\cite{fre:doc} showed that combined with string equalities, a
	single regex formula and a projection to a Boolean spanner already
	lead to an intractable evaluation problem.
	\begin{citedtheorem}{fre:doc}\label{thm:equalitiesNP}
		Evaluation of Boolean regex CQs with string equalities is
		$\NP$-complete, even if restricted to  queries of the form $\proj_{\emptyset} \sel_{x_1,y_1}\cdots \sel_{x_m, y_m} \alpha$.
	\end{citedtheorem}
	In other words, even a single regex formula already leads to $\NP$-hardness. 
	The proof
	from~\cite{fre:doc} uses a reduction from the membership problem for
	so-called pattern languages. It was shown by Fernau et
	al.~\cite{fer:par} that this membership problem is $\Wone$-complete
	for various parameters. We prove that the situation is analogous for Boolean regex
	CQs with string equalities.  \def \thmequalitiesNPWonehard
	{Evaluation of Boolean regex CQs $q$ with string equalities is
		$\Wone$-hard for the parameter $|q|$, even if restricted to queries of the form $\proj_{\emptyset} \sel_{x_1,y_1}\cdots \sel_{x_m, y_m} \alpha$.}
	\begin{theorem} \label{thm:equalitiesNPW1hard}
		\thmequalitiesNPWonehard
	\end{theorem}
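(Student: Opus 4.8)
The plan is to adapt the reduction from pattern-language membership that Freydenberger and Holldack used for Theorem~\ref{thm:equalitiesNP}, and to invoke the parameterized hardness of Fernau et al.~\cite{fer:par} in place of mere $\NP$-hardness. Recall that the \emph{membership problem} for pattern languages asks, given a pattern $\pi$ over $\alphabet\cup\svars$ and a word $w\in\alphabet\wild$, whether some substitution of the variables by terminal words turns $\pi$ into $w$; write $L(\pi)$ for the resulting pattern language. By \cite{fer:par} this problem is $\Wone$-hard for a parameter that also bounds the size of $\pi$ in the hard instances (the number of variables, for patterns whose length is linear in that number). I would reduce from this problem, arranging that its parameter controls $|q|$.

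Given an instance $(\pi,w)$ with $\pi=p_1\cdots p_\ell$, I set $\strs\df w$ and build $\alpha\df\beta_1\cdots\beta_\ell$, where $\beta_i\df p_i$ if $p_i\in\alphabet$ is a terminal, and $\beta_i\df\bind{z_i}{\alphabet\wild}$ for a fresh capture variable $z_i$ if $p_i$ is a variable occurrence (replacing $\alphabet\wild$ by $\alphabet^+$ if \cite{fer:par} uses non-erasing substitutions). For each pattern variable $x$ occurring at positions $i_1<\cdots<i_t$, I chain the string equalities $\sel_{z_{i_1},z_{i_2}},\dots,\sel_{z_{i_1},z_{i_t}}$; collecting all of these into one sequence and projecting away all variables yields $q\df\proj_{\emptyset}\big(\sel_{z_{i_1},z_{i_2}}\cdots\,\alpha\big)$, which is exactly of the required restricted form. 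Note that $\alpha$ is functional, since each $z_i$ occurs once in $\alpha$ and is thus opened and closed exactly once along every accepting run.

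The correctness argument is short. Because $\alpha$ is a plain concatenation, every ref-word in $\refl(\alpha,w)$ factors $w$ into the pieces matched by $\beta_1,\dots,\beta_\ell$ in order, so the captured spans tile $w$ with the literal terminals in their prescribed places. Hence a tuple $\mu\in\spnr{\alpha}(w)$ corresponds precisely to a way of writing $w$ by substituting each variable occurrence, and the string equalities force all occurrences of the same pattern variable to capture \emph{equal substrings} (comparing substrings, not spans, which is exactly variable consistency). As $\proj_{\emptyset}$ makes $q$ Boolean, we get $\spnr{q}(w)\neq\emptyset$ if and only if $w\in L(\pi)$.

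It remains to confirm that this is a genuine parameterized reduction for the parameter $|q|$, which I expect to be the only delicate point. Since $\alphabet$ is fixed, $|\alpha|=O(|\pi|)$ and the number of equality atoms is at most $|\pi|$, so $|q|=O(|\pi|)$, and the whole construction is computable in polynomial time. The reduction therefore transfers $\Wone$-hardness to the parameter $|q|$ exactly when the source is parameterized by a quantity that (i) makes membership $\Wone$-hard and (ii) bounds $|\pi|$; the number of variables serves this purpose in \cite{fer:par}, since there the pattern length is linear in the number of variables. The main obstacle is thus selecting this parameterization and checking condition (ii); a secondary technicality is that if \cite{fer:par} uses an alphabet larger than the fixed $\alphabet$, a standard block encoding of terminals into $\alphabet\wild$ handles it without affecting the parameter, as in the proof of Theorem~\ref{thm:aacyclicW1}.
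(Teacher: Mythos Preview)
Your approach is correct and is precisely the route the paper itself mentions as viable just before the theorem: combining the reduction of \cite{fre:doc} with the parameterized hardness of \cite{fer:par}. However, the paper explicitly declines this path, calling the encodings in \cite{fer:par} ``needlessly complicated'' for the present purpose. Instead, it gives a direct FPT-reduction from $k$-clique that modifies the proof of Theorem~\ref{thm:aacyclicW1}: it reuses the string $\strs$ (encoding the edge set of the input graph) and the single regex formula $\gamma$ (encoding the $\binom{k}{2}$ ordered edges of a hypothetical clique), but replaces the auxiliary regex atoms $\delta_l$---which enumerated all node labels and therefore had size depending on $|V|$---by sequences $S_l$ of string-equality selections chaining together the variables $y_{i,l}$ and $x_{l,j}$ for each $l$. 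This yields $q=\pi_{\emptyset}\, S_1\cdots S_k\,\gamma$ with $|q|=O(k^2)$, depending only on $k$.

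Your argument has the advantage of being almost immediate once condition~(ii) is verified, and you correctly flag that verification as the one delicate point. The paper's route buys self-containment: there is no need to open \cite{fer:par} and inspect their construction to confirm that the pattern length is bounded in the parameter; instead, the bound $|q|=O(k^2)$ is visible by inspection. It also makes the argument uniform with Theorem~\ref{thm:aacyclicW1}, so that the replacement of the $\delta_l$ by string equalities is the only new content in the proof.
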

	Hence, while a regex CQ that consists of a single regex formula can
	be evaluated efficiently (due to Lemmas~\ref{lem:regexToAutomaton}
	and~\ref{lem:projection} and Theorem~\ref{thm:delayAlgorithm}), even
	limited use of string equalities can become expensive. This
	result can be obtained by combining the reduction from~\cite{fer:par}
	with the reduction from~\cite{fre:doc}, the former proof requires
	encodings that are needlessly complicated for our purposes (this is
	caused by the comparatively low expressive power of pattern
	languages). Instead, we take the basic idea of an FPT-reduction from
	the $k$-clique problem, and give a direct proof that is less
	technical. In fact, the proof of Theorem~\ref{thm:equalitiesNPW1hard}
	is similar to that of Theorem~\ref{thm:aacyclicW1}, except that in
	the former the query $q$ constructed in the reduction is determined
	solely by the parameter $k$, and not the size of the graph as in the
	latter.
	
	The reader should contrast Theorem~\ref{thm:equalitiesNPW1hard} with
	Corollary~\ref{cor:fpt}. The combination of the two complexity
	results shows that, with respect to the parameter $|q|$, string
	equality considerably increases the parameterized complexity: Boolean
	regex CQ evaluation used to be FPT, and is now $\Wone$-hard.
	
	Finally, note that queries as in Theorems~\ref{thm:equalitiesNP} and~\ref{thm:equalitiesNPW1hard} are always acyclic, as the variables of each equality atom also occur in $\alpha$. Although they are not gamma acyclic, we can achieve this if we use $k$-ary string equalities, by contracting all $\sel_X$ and $\sel_Y$ with sets $X\cap Y\neq \emptyset$ into $\sel_{X\cup Y}$ until all equalities range over pairwise disjoint sets. The resulting query is equivalent, and gamma-acyclic.
	
	\subsection{Upper Bounds}
	We now examine the upper bounds for our two evaluation strategies on regex UCQs with string equalities. For the canonical relational approach, we observe that each equality atom corresponds to a relation that is of polynomial size. Hence, we can directly use  Theorem~\ref{thm:boundedNumberOfResults} to conclude the following.
	\begin{corollary}\label{}
		Let $\mathbf{P}$ be a class of regex UCQs with string equalities. 
		If the regex
		formulas in the UCQs of $\mathbf{P}$ belong to a polynomially
		bounded class, and
		$\mathbf{P}$ maps to a tractable class of relational UCQs, then
		UCQs in $\mathbf{P}$ can be evaluated in polynomial total time.
	\end{corollary}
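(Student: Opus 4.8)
The plan is to adapt the canonical relational evaluation behind Theorem~\ref{thm:boundedNumberOfResults}, treating each string-equality atom as one more relation that is materialized explicitly for the concrete input string $\strs$. Recall that a regex CQ with string equalities has the form $q = \pi_Y\bigl(\sel_{x_1,x_2}\cdots\sel_{x_l,x_{l+1}}(\alpha_1\join\cdots\join\alpha_k)\bigr)$, and that under the mapping to a relational CQ both the regex atoms and the equality atoms correspond to relation symbols. The guiding observation is that, although string equality cannot be compiled into a vset-automaton in general, for a fixed $\strs$ the equality atom $\sel_{x,y}$ defines the concrete binary relation
\[
R_{x,y} \df \{(\spn{i,j},\spn{i',j'}) \mid \strs_{\spn{i,j}} = \strs_{\spn{i',j'}}\}
\]
over the spans of $\strs$, with attributes $x$ and $y$.

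First I would bound the size and the construction cost of $R_{x,y}$. Since $\strs$ has $O(|\strs|^2)$ spans, this relation has at most $O(|\strs|^4)$ tuples, and each candidate pair is tested by comparing two substrings in $O(|\strs|)$ time; hence $R_{x,y}$ is materializable in polynomial time. Thus every equality atom is \emph{polynomially bounded} and constructible in polynomial time---precisely the guarantee that the polynomially-bounded hypothesis supplies for the regex atoms, which I would materialize in polynomial time via Lemma~\ref{lem:regexToAutomaton} and Theorem~\ref{thm:delayAlgorithm}.

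With all atoms (regex and equality) materialized as polynomial-size relations in polynomial time, I would then close the argument exactly as in Theorem~\ref{thm:boundedNumberOfResults}: the query has become an ordinary relational UCQ whose relations are given explicitly, and since $\mathbf{P}$ maps to a tractable class of relational UCQs, this relational UCQ is evaluable in polynomial total time, which yields the claim.

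I do not expect a genuine obstacle here, as the content of the corollary is precisely that string equalities do not complicate the canonical relational approach. The only step that deserves care is verifying that the mapping structure is respected: the relations $R_{x,y}$ must be placed on exactly the relational atoms that the equality atoms map to in the witnessing tractable relational UCQ, so that the tractability assumption on $\mathbf{P}$ transfers verbatim to the materialized instance.
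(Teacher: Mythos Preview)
Your proposal is correct and follows essentially the same approach as the paper: the paper's entire argument is the one-line observation that each equality atom corresponds to a relation of polynomial size, after which Theorem~\ref{thm:boundedNumberOfResults} applies directly. Your write-up simply spells out this observation in more detail (the $O(|\strs|^4)$ bound on $R_{x,y}$ and its polynomial-time construction), which is fine and in fact clearer than the paper's terse justification.
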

	The upper bound for  compilation to automata requires more effort. We first observe the following.
	
	\def \thmenumKEqualitiesPolyDelay
	{For every fixed $m\ge 1$ there is an algorithm, that given a functional vset-automaton $A$, a sequence $S\df \sel_{x_1,y_1}\cdots \sel_{x_m,y_m}$ of string equality selections over $\svars(A)$, and a string $\str s$, enumerates $\spnr{S A}(\str s)$ with polynomial delay.}
	\begin{theorem}
		\label{thm:enumKEqualitiesPolyDelay}
		\thmenumKEqualitiesPolyDelay
	\end{theorem}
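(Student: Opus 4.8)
The plan is to exploit the fact that, although string equality provably cannot be compiled into a vset-automaton \emph{statically}~\cite{fag:doc}, it becomes a \emph{regular} constraint once the input string $\strs$ is fixed: for a fixed $\strs=\sigma_1\cdots\sigma_\ell$, whether two spans $\spn{i_1,j_1}$ and $\spn{i_2,j_2}$ denote the same substring is merely a finite relation that can be precomputed. I would therefore reuse the machinery behind Theorem~\ref{thm:delayAlgorithm} and push the selections into the NFA over variable configurations that the delay algorithm already builds at runtime.

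Concretely, I would first run the preprocessing phase of the algorithm of Theorem~\ref{thm:delayAlgorithm} on $A$ and $\strs$ to obtain the NFA $A_G$ over the alphabet $\Konfset$ of variable configurations, whose language consists of the length-$(\ell+1)$ words $\kappa_1\cdots\kappa_{\ell+1}$ ($\kappa_i\in\Konfset$) that are in one-to-one correspondence with the tuples of $\spnr{A}(\strs)$. For each selection $\sel_{x_i,y_i}$ I would then construct an auxiliary NFA $B_i$ over the same alphabet $\Konfset$ that accepts exactly those configuration sequences whose induced tuple $\mu$ satisfies $\strs_{\mu(x_i)}=\strs_{\mu(y_i)}$. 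Reading a sequence left to right, $B_i$ recovers the spans of $x_i$ and $y_i$ by watching their $\mw/\mo/\mc$-entries change (exactly as described before Example~\ref{ex:tupleSeq}); its state records the partial span of each of the two variables as it is read off, costing $O(\ell^2)$ states per variable, and once both spans are determined it consults a table $\mathrm{EQ}[(i,j),(i',j')]$ — precomputed in polynomial time — recording whether $\strs_{\spn{i,j}}=\strs_{\spn{i',j'}}$. Hence each $B_i$ has $O(\ell^4)$ states.

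The spanner $\spnr{S A}(\strs)$ then corresponds exactly to the language of the product NFA $M\df A_G\times B_1\times\cdots\times B_m$, still over $\Konfset$ and still consisting only of words of length $\ell+1$. Because $m$ is fixed, $M$ has polynomially many states, so I can feed $M$ into the enumeration phase of Theorem~\ref{thm:delayAlgorithm} (the simplified Ackerman--Shallit procedure that enumerates the length-$(\ell+1)$ words of an NFA over $\Konfset$ without repetition). Since the configuration-sequence-to-tuple correspondence is a bijection and the subroutine emits each word once, this enumerates $\spnr{S A}(\strs)$ with no duplicates. For a sharp delay bound I would invoke the refined analysis noted after Theorem~\ref{thm:delayAlgorithm}: although $M$ is large, its out-degree is inherited essentially from $A_G$ (each $B_i$ updates its span memory deterministically on a given letter), so $M$ is a sparse automaton on which the delay is far below the generic $O(n^2|\strs|)$ estimate; and even the generic estimate is polynomial here, which already suffices for the qualitative claim.

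The main obstacle I expect is getting $B_i$ correct while keeping it polynomial, because the two compared intervals may be nested, overlapping, or occur in either order, and either may be empty, so the span bookkeeping and the handling of empty substrings must be done with care; the delay accounting itself is then routine given Theorem~\ref{thm:delayAlgorithm}. The conceptual crux — and what makes the construction possible despite the impossibility result of Fagin et al.~\cite{fag:doc} — is precisely that equality is resolved against the concrete string $\strs$ via the table $\mathrm{EQ}$, i.e.\ compilation occurs at runtime rather than statically. (An alternative route, which also yields polynomial delay for fixed $m$, is to guess one substring per equivalence class induced by the selections — only polynomially many guesses — to join $A$ with the corresponding span constraints via Lemmas~\ref{lem:regexToAutomaton} and~\ref{lem:joinVset}, and to enumerate the resulting functional vset-automata by Theorem~\ref{thm:delayAlgorithm}; the answer sets for distinct guesses are disjoint, so skipping the empty guesses costs only polynomial delay.)
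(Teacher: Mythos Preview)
Your proposal is correct and rests on the same key insight as the paper---string equality becomes a regular constraint once $\strs$ is fixed---but you implement it at a different layer. The paper stays at the vset-automaton level: it builds a functional vset-automaton $\Aeq$ (depending on $\strs$) with $\svars(\Aeq)=\{x_i,y_i\mid 1\le i\le m\}$ such that $\mu\in\spnr{\Aeq}(\strs)$ iff all the equalities hold; concretely $\Aeq$ is a disjoint union of $O(\ell^{3m})$ paths of length $O(\ell)$, one per admissible choice of a common length and two start positions for each selection. It then invokes Lemma~\ref{lem:joinVset} to obtain $\Ajoin$ with $\spnr{\Ajoin}=\spnr{A\join\Aeq}$ and applies Theorem~\ref{thm:delayAlgorithm} as a black box, exploiting in the refined complexity analysis that every non-initial state of $\Aeq$ has out-degree one. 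You instead jump ahead to the configuration-sequence NFA $A_G$ and intersect it there with per-selection deterministic filters $B_i$ over $\Konfset$. The paper's route is more modular (it reuses the join lemma verbatim and yields a standalone equality spanner), whereas yours is more direct and makes the sparsity argument transparent: since each $B_i$ is deterministic, all branching in the product $M$ comes from $A_G$, which is exactly the structural fact the paper uses for its sharper delay bound on $\Ajoin$. One minor technicality to watch: to record \emph{where} a variable opens or closes, each $B_i$ must also carry the current index $p$ (the letter $\kappa$ tells you only that the status changed), which either costs an extra factor of $\ell$ or is absorbed by collapsing the ``both closed'' states to an accept/reject sink via the $\mathrm{EQ}$ table; either way the bound remains polynomial. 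Your alternative route---guessing one witness substring per equivalence class of the selections---is also sound and is in fact close in spirit to the paper's path-per-choice construction of $\Aeq$.
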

	The main idea of the proof is to construct a vset-automaton $\Aeq$
	that defines exactly the string equalities on $\strs$.  Specifically,
	$\mu\in\spnr{\Aeq}(\strs)$ holds if and only if 
	$\strs_{\mu(x_i)}=\strs_{\mu(y_i)}$
	for all $x_i,y_i$ where $\sel_{x_i,y_i}$ is a selection in $S$.
	We then use
	Lemma~\ref{lem:joinVset} to construct a vset-automaton $\Ajoin$ with
	$\spnr{\Ajoin}=\spnr{A\join \Aeq}$, and use
	Theorem~\ref{thm:delayAlgorithm} to enumerate
	$\spnr{\Ajoin}(\strs)=\spnr{S A}(\strs)$. Note that the construction
	of $\Aeq$ depends on $\strs$; in fact, $\spnr{\Aeq}(\strs')=\emptyset$
	for all strings $\strs'\neq \strs$. This dependency on $\strs$ is
	unavoidable: Recall that as shown by Fagin et al.~\cite{fag:doc},
	regular spanners are strictly less expressive than core spanners
	(i.\,e., string equality adds expressive power). If we could construct
	an $\Aeq$ that worked on every string $\strs'$, this would immediately
	lead to a contradiction.
	
	Analogously to the join in Section~\ref{sec:compile}, the polynomial upper bound only holds for the construction if $m$ is fixed, and Theorem~\ref{thm:equalitiesNPW1hard} suggests that this cannot be overcome under standard assumptions.
	Hence,  for all fixed $k,m\geq 0$, we define the notion of a \emph{regex $k$-UCQ with up to $m$ string equalities} analogously to a regex $k$-UCQ, with the additional requirement that each of the CQs uses at most $m$ binary string equality selections.
	It follows from Theorem~\ref{thm:enumKEqualitiesPolyDelay} that for every constant $k$, adding a fixed number of string equalities to a regex $k$-UCQ does not affect its enumeration complexity (in the sense that the complexity stays polynomial).
	\begin{corollary}
		\label{thm:enumKEqualitiesPolyDelayregexCQ}
		For all fixed $m$ and $k$, regex $k$-UCQs with up to $m$ string equalities can be evaluated  with polynomial delay.
	\end{corollary}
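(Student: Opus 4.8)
The plan is to compile the whole query into a single functional vset-automaton \emph{at runtime} and then invoke Theorem~\ref{thm:delayAlgorithm}; this is exactly the strategy behind Theorem~\ref{thm:k-ucq}, augmented with the string-equality automaton from the proof of Theorem~\ref{thm:enumKEqualitiesPolyDelay}. Write $q=\bigcup_{i=1}^{p} q_i$ with each $q_i=\proj_Y\, S_i\,(\alpha_{i,1}\join\cdots\join\alpha_{i,k_i})$, where $Y=\svars(q)$, each $k_i\le k$, and each $S_i=\sel_{x_1,y_1}\cdots\sel_{x_{l_i},y_{l_i}}$ is a sequence of $l_i\le m$ string equalities. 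Fix the input string $\strs$.

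I would proceed disjunct by disjunct. First, by Lemma~\ref{lem:regexToAutomaton} convert each atom $\alpha_{i,j}$ into an equivalent functional vset-automaton in linear time, and then apply Lemma~\ref{lem:joinVset} at most $k-1$ times to obtain a functional vset-automaton $B_i$ with $\spnr{B_i}=\spnr{\alpha_{i,1}\join\cdots\join\alpha_{i,k_i}}$; since $k$ is fixed, the resulting $n^{O(k)}$ blow-up is polynomial. Next, I would invoke the construction from the proof of Theorem~\ref{thm:enumKEqualitiesPolyDelay}: for the fixed bound $m$ and the concrete string $\strs$, it builds a functional vset-automaton $\Aeq$ with $\spnr{\Aeq}(\strs)$ equal to the set of $\strs$-tuples satisfying the equalities of $S_i$, and, via Lemma~\ref{lem:joinVset}, a single functional vset-automaton $\Ajoin^{(i)}$ of polynomial size with $\spnr{\Ajoin^{(i)}}=\spnr{B_i\join \Aeq}$, hence $\spnr{\Ajoin^{(i)}}(\strs)=\spnr{S_i(\alpha_{i,1}\join\cdots\join\alpha_{i,k_i})}(\strs)$ on this particular $\strs$.

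It then remains to handle projection and union. I would push $\proj_Y$ into each $\Ajoin^{(i)}$ using Lemma~\ref{lem:projection}, yielding a functional vset-automaton $D_i$ with $\svars(D_i)=Y$ and $\spnr{D_i}(\strs)=\spnr{q_i}(\strs)$. Since all $D_i$ share the variable set $Y=\svars(q)$, Lemma~\ref{lem:union} combines them in linear time into one functional vset-automaton $A$ with $\spnr{A}(\strs)=\bigcup_{i}\spnr{D_i}(\strs)=\spnr{q}(\strs)$. Finally, applying Theorem~\ref{thm:delayAlgorithm} enumerates $\spnr{A}(\strs)$ with polynomial delay; because $\spnr{A}(\strs)$ is a \emph{set}, every tuple is produced exactly once, which automatically eliminates the duplicates that projection and union might otherwise create. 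All construction steps run in polynomial time in $|q|+|\strs|$ for fixed $k$ and $m$, so they fold into the preprocessing and the overall enumeration has polynomial delay.

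The main obstacle is the string-equality step: I must confirm that the runtime automaton $\Aeq$ from Theorem~\ref{thm:enumKEqualitiesPolyDelay} is functional and of polynomial size when $m$ is fixed, and that a single application of Lemma~\ref{lem:joinVset} joins it with $B_i$ without a super-polynomial blow-up. Once the automata involved are functional and polynomially sized, the bounded join, the projection, and the union are routine, and the reduction of \emph{all} duplicate elimination to the duplicate-free enumeration of Theorem~\ref{thm:delayAlgorithm} is what makes the otherwise-problematic projection and union harmless.
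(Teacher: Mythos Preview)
Your proposal is correct and follows precisely the route the paper intends: compile each disjunct's regex atoms into a single functional vset-automaton via Lemmas~\ref{lem:regexToAutomaton} and~\ref{lem:joinVset}, attach the runtime string-equality automaton $\Aeq$ from the proof of Theorem~\ref{thm:enumKEqualitiesPolyDelay} by one more join, then apply projection (Lemma~\ref{lem:projection}), union (Lemma~\ref{lem:union}), and finally Theorem~\ref{thm:delayAlgorithm}. The paper states the corollary without a detailed proof, merely pointing to Theorem~\ref{thm:enumKEqualitiesPolyDelay}; your write-up is a faithful and complete fleshing-out of that sketch, including the correct handling of duplicates and the observation that all constructions are polynomial for fixed $k$ and $m$.
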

	In other words, Theorem~\ref{thm:k-ucq} can be extended to cover string equalities.

	
	\section{Concluding Remarks}\label{sec:conclusions}
	We have studied the combined complexity of evaluating regex CQs and
	regex UCQs. We showed that the complexity is not determined only by
	the structure of the CQ as a relational query; rather, complexity can
	go higher since an atomic regex formula can already define a relation
	that is exponentially larger than the combined size of the input and
	output. Our upper bounds are based on an algorithm for evaluating a
	vset-automaton with polynomial delay. These bounds are based on two
	altenative evaluation strategies---the {canonical relational}
	evaluation and query-to-automaton compilation. We conclude the paper
	by proposing several directions for future research.
	
	One direction is to generalize the compilation into a vset-automaton
	into a class of queries that is more general than regex $k$-UCQs with
	string equality. In particular, we would like to have a robust
	definition of a class of algebraic expressions that we can efficiently
	translate into a vset-automaton.
	
	Fagin et al.~\cite{fag:doc} have shown that regular spanners, or
	equivalently regex UCQs (as we established in
	Theorem~\ref{thm:UCQvsCore}), can be phrased as \e{Unions of
		Conjunctive Regular Path Queries}
	(UCRPQs)~\cite{Consens:1990:GVF:298514.298591,
		DBLP:conf/kr/CalvaneseGLV00,
		DBLP:conf/dbpl/DeutschT01,DBLP:journals/siamcomp/BarceloRV16}. However,
	their translation entails an exponential blowup. Moreover, even if the
	translation could be made efficiently, it is not at all clear that
	tractability properties of the regex UCQ (e.g., bounded number of
	atoms, or low hypertree width) would translate into tractable
	properties of UCRPQs. Importantly, in the UCRPQ every atom involves
	two variables, and so, the problem of intractable materialization of an
	atom (i.e., the main challenge we faced here) does not occur. So,
	another future direction is a deeper exploration of the relationship
	between the complexity results of the two frameworks.
	
	Last but not least, there is the crucial future direction of
	translating the upper bounds we presented into algorithms that
	substantially outperform the state of the art, at least when our
	tractability conditions hold. Beyond optimizing our translation and
	polynomial-delay algorithm, we would like to incorporate techniques of
	aggressive filtering for matching regular
	expressions~\cite{DBLP:journals/tods/YangQWZW016,DBLP:conf/sigmod/YangWQWL13}
	and parallelizing polynomial-delay
	enumeration~\cite{DBLP:conf/sigmod/YangWQWL13}.

	\newpage
	\onecolumn
	{
		
		\bibliographystyle{abbrv} 
		\bibliography{main}  
	}
	
	\newpage\nobalance
	\appendix

\section{Proofs}
\subsection{Proof of Theorem~\ref{thm:UCQvsCore}}

\begin{reptheorem}{\ref{thm:UCQvsCore}}
	\UCQvsCoretheorem
\end{reptheorem} 
\begin{proof}
	We begin with the first claim, that regex UCQs express exactly $\spnr{\RGX^{\puj}}$, the class of regular spanners. First, note that regex UCQs are a special case of regular spanner representations. For the other direction, 	the proof of Lemma~4.13 by Fagin et al.~\cite{fag:doc} shows that every  vset-automaton can be converted into an equivalent a union of projections over joins of regex formulas -- or, in our terminology, a UCQ.
	
	For the second claim, we also only need to proof that every core spanner can be expressed by a regex UCQ with string equalities. For this, we use the core-simplification lemma (Lemma 4.19 in~\cite{fag:doc}), which states that every core spanner $P$ can be expressed as $\proj_Y S A$, where $A$ is a vset-automaton, $S$ is a sequence of string equality selections over $V\df \svars(A)$, and $Y\subseteq V$. We then use the previous result to convert $A$ into an equivalent regex UCQ $q\df \bigcup_{i=1}q_i$, and define the regex UCQ with string equalites $q' \df \bigcup_{i=1} q'_i$, where each $q'_i$ is obtained by adding $\proj_Y$ and $S$ to $q_i$. 
	Obviously, $q'$ and $P$ are equivalent.
\end{proof}

\subsection{Proof of Theorem~\ref{thm:delayAlgorithm}}
\begin{reptheorem}{\ref{thm:delayAlgorithm}}
	\thmdelayAlgorithm
\end{reptheorem} 
\begin{proof}
	Let $\strs=\sigma_1\cdots \sigma_{\ell}$ with $\ell\df |\strs|$ and $\sigma_i\in\alphabet$.
	Let $A=(X,Q,\delta,q_0,q_f)$ be a functional vset-automaton.  Without loss of generality, we assume that each state is reachable from $q_0$, and $q_f$ can be reached from every state in $Q$.  Let $n\df |Q|$, $v\df |X|$, and let $m$ denote the number of transitions of $A$. 
	
	This proof is structured as follows: First, we consider some preliminary considerations. In particular, we discuss how every functional vset-automaton can be simulated using an NFA together with a function that represents how the vset-automaton operates on variables. Building on this, we then give the actual enumeration algorithm.
	
	\partitle{Prelimary Considerations: }
	As $A$ is functional, we know that for all $q\in Q$  and all ref-words $\strr_1,\strr_2$ with $q\in\delta(q_0,r_1)$ and $q\in\delta(q_0,r_2)$, $|\strr_1|_{a} = |\strr_2|_{a}$ holds for all $a\in\xalphabet_{X}=\{\vop{x},\vcl{x}\mid x\in X\}$. Furthermore, we know that each ref-word that is accepted by $A$ is also valid (i.e., for each $x\in X$, the ref-word contains each of $\vop{x}$ and each $\vcl{x}$ exactly once, and in the correct order). In other words, in each state $q\in Q$ and for each variable $x\in X$, it is well-defined whether $x$ has not been opened yet, or has been opened, but not closed, or has been opened and closed. We formalize this as follows: First, define the set  $\memstat$ of \emph{variable states} as $\memstat\df\{\mw,\mo, \mc\}$, where $\mw$ stands for \emph{waiting}, $\mo$ for \emph{open}, and $\mc$ for \emph{closed}. A \emph{variable configuration} is a function $\memf\colon X\to \memstat$, which maps variables to variable states. For each $q\in Q$, we choose any ref-word $\strr_q$ sucht that $q\in \delta(q_0,\strr_q)$, and use this to define $\memf_q\colon X\to \memstat$, the \emph{variable configuration of the state}~$q$, as follows:
	$$
	\memf_q(x)\df\begin{cases}
	\mw  & \text{if $\strr_q$ contains neither $\vop{x}$, nor $\vcl{x}$,}\\
	\mo  & \text{if $\strr_q$ contains  $\vop{x}$, but not $\vcl{x}$,}\\
	\mc & \text{if $\strr_q$ contains $\vop{x}$ and $\vcl{x}$}.
	\end{cases}
	$$
	Due to $A$ being functional (and due to our initial assumption that no state of $A$ is redundant), no other cases need to be considered, and $\memf_q$ is well-defined for every $q\in Q$.  We observe that $\memf_{q_0}(x)=\mw$ and $\memf_{q_f}(x)=\mc$ holds for all $x\in X$.
	
	The \emph{variable configuration of the automaton}~$A$ is the function $\Memf$ that maps each $q\in Q$ to its variable configuration $\memf_q$, which allows us to interpret $\Memf$ as a function $\Memf\colon Q\times X\to\memstat$ by $\Memf(q,x)=\memf_q(x)$. To see how $A$ operates on variables when changing from a state $p$ to a state $q$, it suffices to compare $\memf_{p}$ and $\memf_{q}$. We now use this to define an $\emptyword$-NFA $A_N$ over $\alphabet$ that, when combined with $\Memf$, simulates $A$. Let $A_N\df (Q,\delta_N,q_0,q_f)$, where $\delta_N$ is obtained from $\delta$ by defining, for all $q\in Q$,  $\delta_N(q,\sigma)\df \delta(q,\sigma)$ for all $\sigma\in\alphabet$, as well as 
	\begin{equation*}
		\delta_N(q,\emptyword)\df \delta(q,\emptyword)\cup \bigcup_{a\in\xalphabet_{X}}\delta(q,a).
	\end{equation*}
	In other words, $A_N$ is obtained from $A$ by replacing each transition that has a variable operation with an $\emptyword$-transition. Now, there is a one-to-one-correspondence between paths in $A_N$ and $A$. This also follows from the fact that $A$ is functional: In a general vset-automaton, it would be possible to have two states $p$ and $q$ such that there are two transitions with different variable operations from $p$ to $q$ (e.\,g. $\vop{x}$ and $\vcl{x}$, recall Example~\ref{ex:funNotFun}). But in functional automata, this is not possible (we already used this insight to define variable configurations).
	
	Furthermore, for every accepting run of $A_N$ on $\strs$, we can use $\Memf$ to derive a $\mu\in\spnr{A}(\strs)$ that corresponds to this run. Due to the $\emptyword$-transitions, this requires some additional effort. We begin with a definition: For each $q\in Q$, define $\eclos(q)$, the $\emptyword$-closure of $q$, as the set of all $p\in Q$ that can be reached from $q$ by using only $\emptyword$-transitions (including $q$ itself). Note that $\eclos$ can also be derived directly from $A$, by defining $\eclos(q)$ as the set of all $p$ that can be reached from $q$ by using only $\emptyword$-transitions and variable transitions. 
	
	Now, assume $\strs\in\lang(A_N)$, and recall that $\strs=\sigma_1 \cdots \sigma_{\ell}$ with $\ell\geq 0$. Then there exists a sequence of states $q_0, \hat{q}_0, \ldots, q_{\ell}, \hat{q}_{\ell}\in Q$ such that
	\begin{enumerate}
		\item $\hat{q}_i\in\eclos(q_i)$ for $0\leq i \leq \ell$,
		\item $q_{i+1}\in\delta_N(\hat{q}_i,\sigma_{i+1})$ for $0\leq i <\ell$,
		\item $\hat{q}_l=q_f$.
	\end{enumerate}
	In other words, each $q_i$ is the state that is entered immediately after reading $\strs_i$, and $\hat{q}_i$ is the state immediately before $\strs_{i+1}$ is read. Hence, $\hat{q}_{i+1}$ is reached from $\hat{q}_i$ solely using $\emptyword$-transitions (which correspond to variable operations or to $\emptyword$-transitions in $A$). Hence, in this point of view on runs, we only focus on the parts of an accepting run immediately before and after processing terminals. Take note that it is possible that $q_i=\hat{q}_i$, which corresponds to not operating on variables between reading $\sigma_i$ and $\sigma_{i+1}$. 
	
	We now discuss how this can be used to derive a $(X,\strs)$-tuple $\mu$. When introducing variable configurations, we already remarked that $\memf_{q_0}(x)=\mw$ and $\memf_{q_f}(x)=\mc$ must hold for all $x\in X$. The latter immediately implies $\memf_{\hat{q}_n}=\mc$. Furthermore, as terminal transitions cannot change variable configurations, $\memf_{\hat{q}_i}=\memf_{q_{i+1}}$ must hold as well. This allows us to define $\mu$ solely by considering the $\memf_{\hat{q}_i}$. As $A$ is functional, for each $x\in X$, there exist $i$ and $j$ with $0\leq i\leq j  \leq \ell$ such that $\memf_{\hat{q}_{i'}}=\mw$ for all $i'<i$ and $\memf_{\hat{q}_{j'}}=\mc$ for all $j'\geq j$. Less formally: In the variable  configurations of the states $\hat{q}_0, \hat{q}_1, \ldots, \hat{q}_n$, there is first a (possibly empty) block of states where $x$ is waiting, followed by a (possibly empty) block where $x$ is open, and finally a non-empty block where $x$ is closed. (The block of open configurations is empty if and only if $x$ is opened and closed without reading a terminal letter). Hence, for every $x\in X$, there exist
	\begin{enumerate}
		\item a minimal $i$, such that $\memf_{\hat{q}_i}\neq \mw$,
		\item a minimal $j$, such that $\memf_{\hat{q}_j} = \mc$. 
	\end{enumerate} 
	We then define $\mu(x)\df \spn{i+1,j+1}$. The intuition behind this is as follows: The choice of $i$ means that in $A$, the operation $\vop{x}$ occurs somewhere between $q_{i}$ and $\hat{q}_i$. Hence, the first position in $\mu(x)$ must belong to the next terminal, which is $\strs_{i+1}$. Likewise, the choice of $j$ means that in $A$, the operation $\vcl{x}$ occurs somewhere between $q_j$ and $\hat{q}_j$, which means that $\strs_{j+1}$ must be the first letter after $\mu(x)$.
	
	Thus, together with $\Memf$, every accepting run of $A_N$ for a string $\strs$ can be interpreted as a $\mu\in\spnr{A}(\strs)$. As every accepting run in $A$ corresponds to an accepting run of $A_N$, in order to enumerate $\spnr{A}(\strs)$, it suffices to enumerate all accepting runs of $A_N$. Take particular note that, instead of considering the states $\hat{q}_i$, it suffices to consider their variable configurations $\memf_{\hat{q}_i}$ in order to define $\mu$. Hence, a $(X,\strs)$-tuple $\mu$ can be derived from  a sequence of $|\strs|+1$ variable configurations. In fact, this approach also works in the reverse direction: Every $(X,\strs)$-tuple $\mu$ can be used to derive such a sequence, simply be examining how the variables are opened and closed with respect to the positions of $\strs$.
	
	\partitle{Enumeration Algorithm:} 
	The algorithm for the enumeration of $\spnr{A}(\strs)$ has two steps: First, we use the NFA $A_N$ to construct a directed acyclic graph $G$ that can be interpreted as an NFA $A_G$ such that $\lang(A_G)$ encodes exactly the accepting runs of $A$ on $\strs$. Second, we use the algorithm $\enumalgo$ (Algorithm~\ref{algo:enum} below) to enumerate all elements of $\lang(A_G)$, and thereby all of $\spnr{A}(\strs)$.  Take particular note that the alphabet of $A_G$ is the set of all $\memf_q$ with $q\in Q$, and not $\alphabet$.
	
	We begin with the first step, computing the function $\Memf$, which maps each $q\in Q$ to its variable configuration $\memf_{q}$. As $A$ is functional, all ref-words that label paths which lead from $q_0$ to a state $q$ contain exactly the same symbols from $\xalphabet_{X}$. Hence, $\Memf$ can be computed by  executing a breadth-first-search that, whenever it proceeds from a state $p$ to a state $q$, derives $\memf_{q}$ from $\memf_{p}$ (if the transition opens or closes a variable $x$, $\memf_{p}(x)$ is set to $\mo$ or $\mc$, respectively; otherwise, both variable configurations are identical). As breadth-first-search is possible in $O(m+n)$, this step runs in time $O(vm+vn)$. Note that this process can also be used to check whether $A$ is functional (this is the idea of the proof of Theorem~\ref{thm:funtest}, cf.~\cite{fre:log}).
	
	Next, we compute the $\emptyword$-closure as a function $\eclos\colon Q\to 2^{Q}$ directly from $A$. Using a standard transitive closure algorithm (see e.\,g.\ Skiena~\cite{ski:alg}),  $\eclos$ can be computed in time $O(n(n+m))$, which we can assume to be $O(mn)$. As $v \leq n \leq m$, this dominates the complexity of computing the variable configurations.
	
	Instead of directly computing $G$, we first construct a graph $G'\df(V',E')$ from $A_N$, where $V'\df \bigcup_{i=0}^{\ell} V'_i$ with
	\begin{align*}
		V'_0&\df \{(0,q)\mid q\in\eclos(q_0)\},\\
		V'_{i+1}&\df \{ (i+1,q) \mid q\in \eclos(\delta_N(p,\sigma_{i+1})) \text{ for some } (i,p)\in V'_i    \}
	\end{align*}
	for all $0\leq i < n$.
	In a slight abuse of notation, we restrict $V'_{\ell}$ to $V'_{\ell} \cap \{(\ell,q_f)\}$. Furthermore, without loss of generality, we can assume that $V'_{\ell}=\{(\ell,q_f)\}$, as otherwise, $\spnr{A}(\strs)=\emptyset$. Less formally explained, $V'_i$ contains all $(i,q)$ such that $q$ can act as $\hat{q}_i$ when using $A_N$ and $\Memf$ to simulate $A$ as explained above. 
	
	Following this idea, we define $E'$ to simulate these transitions; in other words, we define $E'\df \bigcup_{0\leq i < \ell} E_i$, where
	$$E'_i\df \{\bigl((i,p),(i+1,q)\bigr) \mid (i,p)\in V'_i \text{ and } q\in\eclos(\delta(p,\sigma_{i+1}))\}.$$
	Observe that $|V'|\leq \ell n +1$, as $|V'_i|\leq \ell$ for $0\leq i<\ell$. Also, as each node of some $V'_i$ with $i<\ell$ has at most $n$ outgoing edges, $|E'|\leq \ell n^2$.  Next, we obtain $G\df(V,E)$ by removing from $G'$ all nodes (and associated edges) from which $(n,q_f)$ cannot be reached. Using standard reachability algorithms, this is possible in time $O(|V'|+|E'|)=O(\ell n^2)$. For each $V'_i$, we define $V_i\df V'_i\cap V$.
	
	Now, every path $\pi$ from a node $(0,q)$ to $(\ell,q_f)$ in  $G$ corresponds to an accepting run of $A_N$ (and for every accepting run of $A_N$, there is a corresponding path in $G$).  Hence, if we consider a path $\pi= ( (0,\hat{q}_0), (1,\hat{q}_1),\ldots,(\ell,\hat{q}_{\ell}))$ in $G$ (which implies $\hat{q}_{\ell}=q_{\ell}$), we can use the sequence of variable configurations $\memf_{\hat{q}_0}, \memf_{\hat{q}_1},\ldots,\memf_{\hat{q}_{\ell}}$ to define a $\mu^{\pi}\in\spnr{A}(\strs)$. 
	
	While this allows us to enumerate all elements of $\spnr{A}(\strs)$, two different paths $\pi_1$ and $\pi_2$ in $G$ might lead to the same sequence of variable configurations (which would imply $\mu^{\pi_1}=\mu^{\pi_2}$), which means that $G$ alone cannot be used to compute $\spnr{A}(\strs)$ with polynomial delay. 
	
	The crucial part of the next step is interpreting the set of variable configurations of states as an alphabet $\Konfset=\{\memf_{q}\mid q\in Q\}$, and treating sequences of variable configurations as strings over $\Konfset$. Then $G$ defines the language of all variable configurations that correspond to an element of $\spnr{A}(\strs)$. By adding a starting state $q_0$, we turn $G$ into an NFA $A_{G}$ over $\Konfset$ that accepts exactly this language. More specifically, we define $A_G\df(Q_G,\delta_G,q_0,F_G)$, where
	\begin{itemize}
		\item $Q_G\df V \cup\{q_0\}$,
		\item $F_G\df V_{\ell} = \{(\ell,q_f)\}$,
		\item $\delta_G$ is defined in the following way:
		\begin{itemize}
			\item for each $(0,q)\in V_0$, $\delta_G$ has a transition from $q_0$ to $q$ with label $\memf_{q}$,
			\item for each transition $((i,p), (i+1,q))\in E$, $\delta_G$ has a transition from $p$ to $q$ with label $\memf_{q}$.
		\end{itemize}
	\end{itemize}
	We observe that all incoming transitions of a state $(i,q)$ are labeled with the same terminal letter $\memf_{q}$. Furthermore, although $|X|$ variables allow for $3^{|X|}$ distinct variable configurations, only at most $|Q|$ of these appear in $A_G$. 
	
	As there is a one-to-one correspondence between strings of $\lang(A_G)$ and elements of $\spnr{A}(\strs)$, we now want to enumerate all elements of $\lang(A_G)$. To do so, we use a variant of the general framework by Ackerman and Shallit~\cite{ack:eff} for algorithms that, given an NFA, enumerate all strings in its language that are of length~$\ell$. Note that the restricted form of $A_G$ allows us to ignore the intricacies of the algorithms that are compared in~\cite{ack:eff}, which also leads to a lower complexity.
	
	Before we proceed to the actual enumeration algorithm, we introduce another definition. 
	We fix a total order $\korder$ on $\Konfset$, and extend this to the \emph{radix order} $\rorder$ on $\Konfset^*$ as follows: Given $\str{u},\str{v}\in\Konfset^*$, we define $\str{u}\rorder \str{v}$ if $|\str{u}|<|\str{v}|$, or if $|\str{u}|=|\str{v}|$, and there exist $\str{p},\str{s}_\str{u},\str{s}_\str{v}\in \Konfset^*$ and $\kappa_{\str{u}}, \kappa_{\str{v}}\in\Konfset$ with $\str{u}= \str{p}\cdot \kappa_{\str{u}}\cdot \strs_{\str{u}}$, $\str{v}= \str{p} \cdot \kappa_{\str{v}}\cdot \strs_{\str{v}}$, and $\kappa_{\str{u}} \korder \kappa_{\str{v}}$.

	As all strings in $\lang(A_G)$ have length $\ell+1$, we write them as $\kappa_0 \kappa_1 \cdots \kappa_{\ell}$, with $\kappa_i \in \Konfset$. These indices correspond to the level numbers that are encoded in the states of $Q_G$, i.\,e., a letter $\kappa_i$ takes a state $(i-1,p)$ to a state $(i,q)$. Consequently, $\kappa_{\ell} = \memf_{q_f}$ for all strings of $\lang(A_G)$. 
	
	The enumeration algorithm uses a global stack that stores sets $S_i \subset Q_G$, which we call the \emph{state stack}. In fact, we shall see that  $S_{i+1} \subseteq \{(i,q) \mid q\in Q\}$ shall hold. Intuitively, if the algorithm has constructed a string $\kappa_0 \cdots \kappa_i$, the state stack shall store sets $S_0, \ldots, S_{i+1}$ such that each $S_{j+1}$ contains the states that can be reached from the states of $S_j$ by processing the letter $\kappa_j$. 
	As every state of $A_G$ is connected to the final state, and as $\kappa_{\ell}=\memf_{q_f}$ must hold, the algorithm does not put $S_{\ell-1}$ or $S_{\ell}$ on the state stack.
	
	Let $\undefined\notin\Konfset$ be a new letter, and define $\kappa\korder \undefined$ for all $\kappa\in \Konfset$. The enumeration algorithm uses the functions $\minlet\colon Q_G\to \Konfset$ and $\nextlet\colon Q_G \times \Konfset\to \Konfset\cup\{\undefined\}$ as subroutines, which we define as follows for all $q\in Q_G$ and $c\in\Konfset$: 
	\begin{itemize}
		\item $\minlet(q)$ is the $\korder$-smallest $\kappa\in \Konfset$ with $\delta_G(q,c)\neq \emptyset$,
		\item $\nextlet(q,\kappa)$ is the smallest $\kappa'\in\Konfset$ with $\kappa\korder \kappa'$, and $\delta_G(q,\kappa')\neq \emptyset$; or $\undefined$, if no such $\kappa'$ exists.
	\end{itemize}
	It is easily seen that these functions can be precomputed in time $O(n^2 \ell)$, ideally when computing $A_G$. The actual enumeration algorithm, $\enumalgo$, is given as Algorithm~\ref{algo:enum} below.  
	
	\begin{algorithm}[h!]
		\caption{$\enumalgo$}\label{algo:enum}
		\SetKw{KwEnumOut}{output} 
		$S_0 \df \{q_0\}$\;
		$\str{k} = \minword(0)$\;
		\While{$\str{k}\neq \undefined$}
		{
			$\KwEnumOut$ $\str{k}$\;
			$\str{k} = \nextword(\str{k})$\;
		}
	\end{algorithm}
	
	The main idea is very simple: $\enumalgo$ calls $\minword$  (Algorithm~\ref{algo:minword}) to construct the $\rorder$-smallest string of $\lang(A_G)$, using the $\minlet$ functions. Starting at $S_0=\{q_0\}$, $\minword$ determines $\kappa_0 \cdots \kappa_{\ell}$ by first chosing $\kappa_0$ as the $\korder$-smallest letter that can be read when in $q_0$, and then computing $S_1$ as all states that can be reached from there by reading $\kappa_0$. For each $S_i$, the algorithm then applies $\nextlet$ to each state of $S_i$, chooses $\kappa_i$ as the $\korder$-smallest of these letters, and puts all states that can be reached by reading $\kappa_i$ when in a state in $S_i$  into the set $S_{i+1}$. In other words, constructing the sets $S_i$ can be understood as an on-the-fly simulation of the power-set construction on $A_G$. 
	
	As all strings of $\lang(A_G)$ have length $\ell+1$, and the final state of $A_G$ is reachable from every state, this is sufficient to determine the $\rorder$-minimal string of the language. 
	
	To enumerate all further strings, $\enumalgo$ uses the subroutine $\nextword$ (Algorithm~\ref{algo:nextword} below) repeatedly. Given a string $\str{k}=\kappa_0 \cdots \kappa_{\ell}$, $\nextword$ finds the rightmost letter of $\str{k}$ that can be increased (according to $\korder$), and changes this $\kappa_i$ accordingly. It then calls $\minword$ to complete the string $\rorder$-minimally, by finding letters $\kappa_{i+1}, \ldots, \kappa_{\ell}$ that are $\korder$-minimal. While doing so, $\minword$ updates the state stack, according to the computed letters.

	\begin{algorithm}[h!]
		\caption{$\minword(l)$}\label{algo:minword}
		\SetKwInOut{Input}{Input}
		\SetKwInOut{Output}{Output}
		\SetKwInOut{Assumptions}{Assumptions}
		\SetKwInOut{Sides}{Side effects}
		\Input{an integer $l$ with $0\leq l \leq \ell$}
		\Assumptions{state stack contains $S_0, \ldots, S_{l}$ for some $\str{p}\in\Konfset^l$}
		\Output{the $\rorder$-smallest $\str{k}\in \Konfset^{\ell-l}$ that is accepted by $A_G$ when starting in a state of $S_l$}
		\Sides{updates the state stack to $S_0, \ldots, S_{\ell-1}$ for $\str{p}\cdot \str{k}$}
		\For{$i\df l$ \KwTo $\ell-1$}
		{
			find $q\in S_i$ such that $\minlet(q)$ is $\korder$-minimal\;
			$\kappa_i \df \minlet(q)$\;
			\If{$i<\ell -1$}
			{
				$S_{i+1} \df \bigcup_{p\in S_i} \delta_G(p,\kappa_i)$\;
				push $S_{i+1}$ on state stack\;
			}
		}
		$\kappa_{\ell} = \memf_{q_f}$\;
		\KwRet $\kappa_l \cdots \kappa_{\ell-1} \cdot \kappa_{\ell}$\;
	\end{algorithm}
	
	\begin{algorithm}[h!]
		\caption{$\nextword(w)$}\label{algo:nextword}
		\SetKwInOut{Input}{Input}
		\SetKwInOut{Output}{Output}
		\SetKwInOut{Assumptions}{Assumptions}
		\SetKwInOut{Sides}{Side effects}
		\Input{a string $\str{k}=\kappa_0 \cdots \kappa_{\ell}$, $\kappa_i\in \Konfset$}
		\Assumptions{state stack contains $S_0, \ldots, S_{\ell-1}$} for $\str{k}$
		\Output{the $\rorder$-smallest word $\str{k}^\prime\in\lang(A_G)$ with $\str{k}\rorder \str{k}^\prime$, or $\undefined$ if no such $\str{k}^{\prime}$ exists}
		\Sides{updates the state stack to  $S'_0, \ldots, S'_{\ell-1}$ for $\str{k'}$}
		
		\For{$i\df \ell-1$ \KwTo $0$}
		{
			let $\kappa_i$ be the $\korder$-minimal element of $\{\nextlet(q,\kappa_i)\mid q\in S_i\}$\;
			\lIf{$\kappa_i=\undefined$}{pop $S_i$ from the state stack}
			\Else{
				\If{$i<\ell -1$}
				{
					$S_{i+1} \df \bigcup_{p\in S_i} \delta_G(p,\kappa_i)$\;
					push $S_{i+1}$ on state stack\;
				}
				\KwRet $\kappa_0 \cdots \kappa_i \cdot \minword(i+1)$\;
			}
		}
		\KwRet $\undefined$\;
	\end{algorithm}
	
	We now determine the complexity of the sub-routines. We begin with $\minword$: The for-loop is executed $O(\ell)$ times. In each iteration of the loop, $q$ can be determined in time $O(n)$ by using the precomputed $\minlet$-function, and $\kappa_i$ can then be computed in $O(1)$. To build the set $S_{i+1}$, we need time $O(n^2)$. Hence, a call of $\minword$ takes at most time $O(\ell n^2)$.
	
	In each call of $\nextword$, the for-loop is executed $O(\ell)$ times. In each iteration, $\kappa_i$ can be found in time $O(n)$, as $\nextlet$ was precomputed. Furthermore, if $\kappa_i\neq \undefined$, the subroutine updates the state stack in $O(n^2)$ and terminates after a single call of $\minword$, which takes $O(\ell n^2)$. Hence, the total complexity of $\nextword$ is $O(\ell n + n^2+ \ell n^2 )$, or simply $O(\ell n^2)$.

	The total complexity of the preprocessing is obtained by adding up  the complexity of computing $\eclos$ (and $\Memf$) of $O(mn)$, constructing $G$ and $A_G$ in  $O(\ell n^2)$, computing $\minlet$ and $\nextlet$ in $O(\ell n^2)$, and a call of $\minword$ in  $O(\ell n^2)$. This adds up to  $O(\ell n^2+ mn)$ for the preprocessing and finding the first answer. Furthermore, as each call of $\nextword$ takes $O(\ell n^2)$, the delay is clearly polynomial.
	
	Recall that by Lemma~\ref{lem:regexToAutomaton}, $m$ is in $O(n)$ if we derived $A$ from a regex formula. 
	Hence, for regex formulas, the preprocessing is in  $O(\ell n^2)$ as well.
	
	Finally, note that if $A_G$ is deterministic, the complexity of the delay drops to $O(\ell)$: As each $S_i$ can contain at most state (which can have at most one letter), both $\minword$ and $\nextword$ can be computed in $O(\ell)$.
\end{proof}

\subsection{Examples for Theorem~\ref{thm:delayAlgorithm}}\label{app:delayExamples}
\begin{example}
	Consider the following vset-automaton $A$ for the functional regex formula $\ta^* \bind{x}{\ta^*} \ta^*$, where $q_0=0$ and $q_f=2$:	
	\begin{center}
		\begin{tikzpicture}[on grid, node distance =1.5cm,every loop/.style={shorten >=0pt}]
		\node[state,initial text=,initial by arrow] (q0) {$q_0$};
		\node[state,right=of q0] (q1) {$q_1$};
		\node[state,right=of q1,accepting] (q2) {$q_f$};
		\path[->]
		(q0) edge node[below] {$\vop{x}$} (q1)
		(q1) edge node[below] {$\vcl{x}$} (q2)
		(q0) edge[loop above] node {$\ta$} (q0) 
		(q1) edge[loop above] node {$\ta$} (q1) 
		(q2) edge[loop above] node {$\ta$} (q2) 
		;
		
		\end{tikzpicture}
	\end{center}
	
	Then $A$ is functional, and the variable configurations are defined as $\memf_{0}(x)=\mw$, $\memf_{1}(x)=\mo$, and $\memf_{2}(x)=\mc$. The NFA $A_N$ is obtained by replacing the each of the labels $\vop{x}$ and $\vcl{x}$ with $\emptyword$.  We now consider the input word $\strs=\ta\ta\ta$, and construct the corresponding graph $G$:
	\begin{center}
		\begin{tikzpicture}[on grid, node distance=15mm,every loop/.style={shorten >=0pt}]
		\node[rstate] (q00) {$(0,q_0)$};
		\node[rstate, below=of q00] (q01)  {$(0,q_1)$};
		\node[rstate, below=of q01] (q02) {$(0,q_f)$};
		\node[rstate, right=of q00, xshift=15mm] (q10) {$(1,q_0)$};
		\node[rstate, below=of q10] (q11)  {$(1,q_1)$};
		\node[rstate, below=of q11] (q12) {$(1,q_f)$};
		\node[rstate, right=of q10, xshift=15mm] (q20) {$(2,q_0)$};
		\node[rstate, below=of q20] (q21)  {$(2,q_1)$};
		\node[rstate, below=of q21] (q22) {$(2,q_f)$};
		\node[rstate, right=of q22, xshift=15mm] (q32) {$(3,q_f)$};
		
		\path[->]
		(q00.east) edge node {} (q10)
		(q00.east) edge node {} (q11)
		(q00.east) edge node {} (q12)
		(q01.east) edge node {} (q11)
		(q01.east) edge node {} (q12)
		(q02.east) edge node {} (q12)
		
		(q10.east) edge node {} (q20)
		(q10.east) edge node {} (q21)
		(q10.east) edge node {} (q22)	
		(q11.east) edge node {} (q21)
		(q11.east) edge node {} (q22)
		(q12.east) edge node {} (q22)
		
		(q20.east) edge node {} (q32)
		(q21.east) edge node {} (q32)
		(q22.east) edge node {} (q32)
		;
		
		\end{tikzpicture}
	\end{center}	
	This allows us to obtain the following list of all $\mu\in\spnr{A}(\strs)$:	
	\begin{center}
		\begin{tabular}{|l|c|c|c|}
			\hline 
			ref-word & $(\hat{q}_0,\hat{q}_1,\hat{q}_2,\hat{q}_3)$ & $(\memf_{\hat{q}_0}(x),\ldots,\memf_{\hat{q}_3}(x))$ & $\mu(x)$ \\ 
			\hline 
			$\vop{x} \vcl{x} \ta\ta \ta$& $(q_{f}, q_{f}, q_{f}, q_{f})$  & $(\mc,\mc,\mc,\mc)$ & $\spn{1,1}$ \\ 
			$\vop{x} \ta\vcl{x} \ta \ta$& $(q_{1}, q_{f}, q_{f}, q_{f})$  & $(\mo,\mc,\mc,\mc)$ & $\spn{1,2}$ \\ 
			$\vop{x}  \ta\ta\vcl{x} \ta$& $(q_{1}, q_{1}, q_{f}, q_{f})$ &  $(\mo,\mo,\mc,\mc)$ & $\spn{1,3}$ \\ 
			$\vop{x}  \ta\ta \ta\vcl{x}$& $(q_{1}, q_{1}, q_{1}, q_{f})$ & $(\mo,\mo,\mo,\mc)$ &  $\spn{1,4}$\\ 
			$ \ta\vop{x}\vcl{x} \ta \ta$& $(q_{0}, q_{f}, q_{f}, q_{f})$ &  $(\mw,\mc,\mc,\mc)$& $\spn{2,2}$ \\ 
			$  \ta\vop{x}\ta\vcl{x} \ta$& $(q_{0}, q_{1}, q_{f}, q_{f})$ &  $(\mw,\mo,\mc,\mc)$& $\spn{2,3}$ \\ 
			$  \ta\vop{x}\ta \ta\vcl{x}$& $(q_{0}, q_{1}, q_{1}, q_{f})$ &  $(\mw,\mo,\mo,\mc)$&  $\spn{2,4}$\\ 		
			$  \ta\ta\vop{x}\vcl{x} \ta$& $(q_{0}, q_{0}, q_{f}, q_{f})$ & $(\mw,\mw,\mc,\mc)$ & $\spn{3,3}$ \\ 
			$  \ta\ta \vop{x}\ta\vcl{x}$& $(q_{0}, q_{0}, q_{1}, q_{f})$ &  $(\mw,\mw,\mo,\mc)$&  $\spn{3,4}$\\ 	
			$  \ta\ta \ta\vop{x}\vcl{x}$& $(q_{0}, q_{0}, q_{0}, q_{f})$ & $(\mw,\mw,\mw,\mc)$ & $\spn{4,4}$ \\ 	
			\hline 
		\end{tabular} 
	\end{center}
	\pagebreak[4]
	The NFA $A_G$ looks as follows:
	\begin{center}
		\begin{tikzpicture}[on grid, node distance=15mm]
		\node[rstate] (q00) {$(0,q_0)$};
		\node[rstate, below=of q00] (q01)  {$(0,q_1)$};
		\node[rstate, below=of q01] (q02) {$(0,q_f)$};
		\node[state, left=of q01, xshift=-15mm,initial text=,initial by arrow] (q0) {$q_0$};
		\node[rstate, right=of q00, xshift=15mm] (q10) {$(1,q_0)$};
		\node[rstate, below=of q10] (q11)  {$(1,q_1)$};
		\node[rstate, below=of q11] (q12) {$(1,q_f)$};
		\node[rstate, right=of q10, xshift=15mm] (q20) {$(2,q_0)$};
		\node[rstate, below=of q20] (q21)  {$(2,q_1)$};
		\node[rstate, below=of q21] (q22) {$(2,q_f)$};
		\node[rstate, right=of q22, xshift=15mm,accepting] (q32) {$(3,q_f)$};
		
		\path[->]
		(q0.east) edge[above] node {$\memf_{0}$} (q00)
		(q0.east) edge[above] node {$\memf_{1}$} (q01)
		(q0.east) edge[above] node {$\memf_{2}$} (q02) 	 	
		(q00.east) edge[above] node {$\memf_{0}$} (q10)
		(q00.east) edge[above] node {$\memf_{1}$} (q11)
		(q00.east) edge[below, very near start] node {$\memf_{2}$} (q12)
		(q01.east) edge[above, very near start] node {$\memf_{1}$} (q11)
		(q01.east) edge[below, very near start] node {$\memf_{2}$} (q12)
		(q02.east) edge[above] node {$\memf_{2}$} (q12)
		
		(q10.east) edge[above] node {$\memf_{0}$} (q20)
		(q10.east) edge[above] node {$\memf_{1}$} (q21)
		(q10.east) edge[below, very near start] node {$\memf_{2}$} (q22)	
		(q11.east) edge[above, very near start] node {$\memf_{1}$} (q21)
		(q11.east) edge[below, very near start] node {$\memf_{2}$} (q22)
		(q12.east) edge[above] node {$\memf_{2}$} (q22)
		
		(q20.east) edge[above] node {$\memf_{2}$} (q32)
		(q21.east) edge[above] node {$\memf_{2}$} (q32)
		(q22.east) edge[above] node {$\memf_{2}$} (q32)
		;
		
		\end{tikzpicture}
	\end{center}
	As $\memf_{i} \neq \memf_{j}$ if $i\neq j$, this NFA could actually be interpreted as a DFA. As a result, the running time of the enumeration algorithm is lower than in the worst case.
\end{example}
\begin{example}
	In order to examine a case where constructing $A_D$ is necessary, consider the following functional vset-automaton~$A$:
	\begin{center}
		\begin{tikzpicture}[on grid, node distance =20mm,every loop/.style={shorten >=0pt}]
		\node[state,initial text=,initial by arrow] (q0) {$q_0$};
		\node[state,above right=of q0] (q1) {$q_1$};
		\node[state,below right=of q0] (q2) {$q_2$};
		\node[state,above right=of q2,accepting] (qf) {$q_f$};	
		\path[->]
		(q0) edge[left] node {$\vop{x}$} (q1)
		(q0) edge[left] node {$\vop{x}$} (q2)
		(q1) edge[right] node {$\vcl{x}$} (qf)
		(q2) edge[right] node {$\vcl{x}$} (qf)	
		(q1) edge[bend left=10,right] node {$\ta$} (q2)
		(q2) edge[bend left=10,left] node {$\ta$} (q1)
		(q1) edge[loop right] node {$\ta$} (q1)
		(q2) edge[loop right] node {$\ta$} (q2)
		;
		\end{tikzpicture}
	\end{center}
	It is easily seen that $A$ is functional, and the variable configurations are defined by $\memf_{q_0}(x)=\mw$, $\memf_{q_1}(x)=\memf_{q_2}(x)=\mo$, and $\memf_{q_f}(x)=\mc$. Furthermore, for every $\strs\in\ta^*$, $\spnr{A}(\strs)$ contains only a single $\mu$, with $\mu(x)=\spn{1,|\strs|+1}$, which corresponds to the ref-word $\vop{x} \strs \vcl{x}$. Nonetheless, for each such $\strs$, there are $2^{|\strs|}$ paths from $q_0$ to $q_f$ in $A$. Now consider the case of $\strs=\ta^3$. The corresponding graph $G$ looks as follows:
	\begin{center}
		\begin{tikzpicture}[on grid, node distance=15mm]
		\node[rstate] (n01) {$(0,q_1)$};
		\node[rstate, below=of n01] (n02) {$(0,q_2)$};
		
		\node[rstate, right=of n01, xshift=15mm] (n11) {$(1,q_1)$};
		\node[rstate, below=of n11] (n12) {$(1,q_2)$};
		\node[rstate, right=of n11,xshift=15mm] (n21) {$(2,q_1)$};
		\node[rstate, below=of n21] (n22) {$(2,q_2)$};
		
		\node[rstate, right= of n22,xshift=15mm] (n3f) {$(3,q_f)$};
		
		\path[->]
		(n01) edge[] node {} (n11)
		(n01) edge[] node {} (n12)
		(n02) edge[] node {} (n11)
		(n02) edge[] node {} (n12)
		
		(n11) edge[] node {} (n21)
		(n11) edge[] node {} (n22)
		(n12) edge[] node {} (n21)
		(n12) edge[] node {} (n22)
		
		(n21) edge[] node {} (n3f)
		(n22) edge[] node {} (n3f)
		;
		\end{tikzpicture}
	\end{center}
	By adding a new starting state $q_0$, we construct the NFA $A_G$:
	\begin{center}
		\begin{tikzpicture}[on grid, node distance=15mm]
		\node[rstate] (n01) {$(0,q_1)$};
		\node[state,initial text=,initial by arrow, left=of n01,xshift=-15mm] (q0) {$q_0$};
		\node[rstate, below=of n01] (n02) {$(0,q_2)$};
		
		\node[rstate, right=of n01, xshift=15mm] (n11) {$(1,q_1)$};
		\node[rstate, below=of n11] (n12) {$(1,q_2)$};
		\node[rstate, right=of n11,xshift=15mm] (n21) {$(2,q_1)$};
		\node[rstate, below=of n21] (n22) {$(2,q_2)$};
		
		\node[rstate, right= of n22,xshift=15mm,accepting] (n3f) {$(3,q_f)$};
		
		\path[->]
		(q0) edge[above] node {$\memf_{q_1}$} (n01)
		(q0) edge[below] node {$\memf_{q_2}$} (n02)	
		
		(n01) edge[pos=0.33,above] node {$\memf_{q_1}$} (n11)
		(n01) edge[pos=0.33,above] node {$\memf_{q_2}$} (n12)
		(n02) edge[pos=0.33,below] node {$\memf_{q_1}$} (n11)
		(n02) edge[pos=0.33,below] node {$\memf_{q_2}$} (n12)
		
		(n11) edge[pos=0.33,above] node {$\memf_{q_1}$} (n21)
		(n11) edge[pos=0.33,above] node {$\memf_{q_2}$} (n22)
		(n12) edge[pos=0.33,below] node {$\memf_{q_1}$} (n21)
		(n12) edge[pos=0.33,below] node {$\memf_{q_2}$} (n22)
		
		(n21) edge[above] node {$\memf_{q_f}$} (n3f)
		(n22) edge[below] node {$\memf_{q_f}$} (n3f)
		;
		\end{tikzpicture}
	\end{center}
	Now, note that $A_G$ is not deterministic, as $\memf_{q_1}=\memf_{q_2}$. In fact, $\lang(A_G)$ consists only of the word $(\memf_{q_1})^3 \memf_{q_f}$, which corresponds to $\mu$ with $\mu(x)=\spn{1,4}$, the only element of $\spnr{A}(\strs)$.
\end{example}

\subsection{Proof of Lemma~\ref{lem:regexToAutomaton}}
\begin{replemma}{\ref{lem:regexToAutomaton}}
	\lemmaregexToAutomaton
\end{replemma}
\begin{proof}
	Let $\alpha$ be a regex formula, and define $V\df
	\svars(\alpha)$. Assume that $\alpha$ is represented as its syntax
	tree. We first rewrite $\alpha$ into a regular expression
	$\hat{\alpha}$ over the alphabet $\alphabet\cup\xalphabet_{V}$ with
	$\lang(\hat{\alpha})=\rlang(\alpha)$. This is done by recursively
	replacing every node that represents a variable binding
	$\bind{x}{\beta}$ with the concatenation $\vop{x}\cdot \beta\cdot
	\vcl{x}$. Then the length of $\hat{\alpha}$ is linear in the length
	of $\alpha$; and the rewriting is possible in linear time as well.
	
	Next, we convert $\hat{\alpha}$ into an $\emptyword$-NFA $A$ with
	$\lang(A)=\lang(\hat{\alpha})$. Using the Thompson construction
	(cf.~e.\,g.\ \cite{hop:int}), this is possible in linear
	time. Furthermore, both the number of states and the number of
	transitions of $A$ are linear in the length of $\hat{\alpha}$ (and,
	hence, also in the length of $\alpha$). Finally, note that the
	construction ensures that $A$ has only one accepting state (recall
	that vset-automata are required to have a single accepting state).
	
	This allows us to interpret $A$ as a vset-automaton with variable
	set $V$, using $\rlang(A)=\lang(A)$. As
	$\lang(A)=\lang(\hat{\alpha})=\rlang(\alpha)$ holds by definition,
	we know that $\rlang(A)=\rlang(\alpha)$. Furthermore, as $\alpha$ is
	functional, $\rlang(\alpha)=\refl(\alpha)$ holds, which implies
	$\refl(A)=\rlang(A)=\refl(\alpha)$. This allows us to make two
	conclusions: Firstly, that $\refl(A)=\rlang(A)$, hence $A$ is
	functional. And, secondly, that $\spnr{A}=\spnr{\alpha}$, as
	$\refl(A)=\refl(\alpha)$ implies
	$\refl(A,\strs)=\refl(\alpha,\strs)$ for all $\strs\in\alphabet^*$.
\end{proof}

\subsection{Proof of Proposition~\ref{prop:keyProperty}}

\begin{repproposition}{\ref{prop:keyProperty}}
	\propkeyProperty
\end{repproposition}

\begin{proof}
	\newcommand{\Akey}{A_{x}}
	The proof uses a modification of the intersection construction for NFAs. Given a functional vset-automaton $A=(V,Q,q_0,q_f,\delta)$, our goal is to construct an NFA $\Akey$ such that $\lang(\Akey)$ is empty if and only if $x$ is not a key attribute. More specifically, the language consists of all $\strs$ for which there exist $\mu,\mu'\in \spnr{A}(\strs)$ such that $\mu(x)=\mu'(x)$ and $\mu(y)\neq \mu'(y)$ for some $y\in V$. Without loss of generality, we can assume that $A$ has at least two variables, and that we consider only non-empty strings (as $\spnr{A}(\emptyword)$ contains at most one element). For our complexity analysis, let $n\df |Q|$ and $v\df |V|$.
	
	Before we discuss the actual construction, we bring $A$ into a more convenient form. As in the proof of Theorem~\ref{thm:delayAlgorithm}, we use the fact that every state of $A$ has a uniquely defined variable configuration $\memf_q\colon X\to \memstat$ (also see Section~\ref{sec:delay}). Hence, the first step of the construction is computing the variable configurations of $A$. Recall that, if $A$ has $n$ states and $m$ transitions, this can be done in time $O(m+n)$. We can also assume that every state is reachable from $q_0$, and that $q_f$ is reachable from each state. We now replace all variable transitions with $\emptyword$-transitions, and obtain a new transition function $\delta'$ from $\delta$ by removing all $\emptyword$-transitions. Using the standard procedure, this takes time $O(n^3)$ (in contrast to previous proofs, we do not need to use  the tighter bound $O(mn)$).  
	
	The main idea of the construction is that $\Akey$ is an NFA over $\alphabet$ that simulates two copies of $A$ in parallel. Both copies have the same behavior on $x$, but different behavior on (at least) one witness variable $y$. In addition to this, $\Akey$ uses its states to keep track whether such a $y$ has been found. Hence, we define $\Akey\df (Q_x,q_{0,x},q_{f,x},\delta_x)$, where
	\begin{align*}
		Q_x \df&\: \{(0,q_1,q_2)\mid q_1,q_2\in Q, \memf_{q_1}=\memf_{q_2}\}\\
		\cup&\:  \{(1, q_1,q_2) \mid  q_1,q_2\in Q, \memf_{q_1}(x)=\memf_{q_2}(x)\}.
	\end{align*}
	Intuitively, each state triple contains the state of each of the two copies of $A$, and a bit that encodes with 1 or 0 whether a witness $y$ has been found  or not (respectively). Hence, we require for all state pairs that their variable configurations on $x$ are identical; but as long as no witness $y$ has been found, the variable configurations of all other variables  also have to be identical. Following this intuition, we define $q_{0,x}\df (0,q_0,q_0)$ and $(1,q_f,q_f)$. 
	Finally, we construct $\delta_x$ is  as follows: For each pair of transitions $q_1\in\delta'(p_1,\sigma)$ and $q_2\in\delta'(p_2,\sigma)$ that satisfies $\memf_{q_1}(x)=\memf_{q_2}(x)$, we first define 
	$(1,q_1,q_2)\in \delta_x((1,p_1,p_2),\sigma)$ (as the bit signifies that a witness $y$ has been found, we simply continue simulating the two copies of $A$).
	In addition to this,  if $\memf_{q_1}=\memf_{q_2}$, we also define 
	\begin{itemize}
		\item $(0,q_1,q_2)\in \delta_x((0,p_1,p_2),\sigma)$ if $\memf_{q_1}=\memf_{q_2}$, or
		\item $(1,q_1,q_2)\in \delta_x((0,p_1,p_2),\sigma)$ if $\memf_{q_1}\neq \memf_{q_2}$, i.\,e., there is a variable $y$ with $\memf_{q_1}(y)\neq \memf_{q_2}(y)$.		
	\end{itemize}
	In the first case, all variables have identical behavior, so we have not found a witness $y$, and do not change the bit. In the second case, there is a witness $y$, which allows us to change the bit. Note that we can precompute the sets of all $(q_1,q_2)\in Q$ with  $\memf_{q_1}=\memf_{q_2}$ or $\memf_{q_1}\neq \memf_{q_2}$ in time $O(vn^2)$.
	
	Now $\Akey$ describes exactly the language of strings $\strs$ for which there exist demonstrate that $x$ does not have the key property.  As $A'$ has $O(n^2)$ transitions, we can construct $\Akey$ in time $O(vn^2 + n^4)=O(n^4)$, and emptiness of $\lang(\Akey)$ can also be decided in time $O(n^4)$, by checking whether $\Akey$ contains a path from $(0,q_0,q_0)$ to $(1,q_f,q_f)$. Note that such a path also provides a witness input string $\strs$, and, by decoding the variable configurations of the states along the path, the corresponding $\mu,\mu'\in\spnr{A}(\strs)$ with $\mu(x)=\mu'(x)$ and $\mu\neq \mu'$. As a side-effect, this construction also shows that the shortest witness that $x$ does not have the key property is of length $O(n^2)$.
\end{proof}
\subsection{Proof of Lemma~\ref{lem:projection}}
\begin{replemma}{\ref{lem:projection}}
	\lemprojection
\end{replemma}
\begin{proof}
	Let $A$ be a functional vset-automaton, $V\df\svars(A)$, and
	$Y\subseteq V$. We define the morphism $h_Y\colon (\xalphabet_V\cup
	\alphabet)^* \to (\xalphabet_Y\cup\alphabet)^*$ for every $g\in
	(\xalphabet_V\cup\alphabet)$ by $h_Y(g)=\emptyword$ if
	$g\in(\xalphabet_V\setminus \xalphabet_Y)$, and $h_Y(g)\df g$ for all
	$g\in(\xalphabet_Y\cup\alphabet)$. In other words, $h_Y$ erases all
	$\vop{x}$ and $\vcl{x}$ with $x\notin Y$, and leaves all other
	symbols unchanged.
	
	We obtain $A_Y$ from $A$ by replacing the label $g$ of each
	transition with $h(g)$. In other words, for each $x\notin Y$, each
	transition with $\vop{x}$ or $\vcl{x}$ is replaced with an
	$\emptyword$-transition. Clearly, this is possible in linear time,
	and $\rlang(A_Y)= h_Y(\rlang(A))$. As $A$ is functional, this
	implies $\rlang(A_Y)=h_Y(\refl(A))$.
	
	Now assume that $A_Y$ is not functional, i.\,e., that there
	exists an $\strr\in (\refl(A_Y)\setminus \rlang(A_Y))$. Then $\strr$
	is not valid, which means that there is an $y\in Y$ such that
	$\vop{y}$ or $\vcl{y}$ does not occur exactly once, or both
	symbols occur in the wrong order. By definition of $A_Y$,
	there is an $\hat{\strr}\in\rlang(A)$ with $h_Y(\hat{\strr})=\strr$; and
	as $h_Y$ erases only symbols of $\xalphabet_V\setminus \xalphabet_Y$
	and leaves all other symbols unchanged, this means that
	$\hat{\strr}$ is also not valid, which contradicts our assumption
	that $A$ is functional.  Hence,
	$\refl(A_Y)=\rlang(A_Y)=h_Y(\refl(A))$ holds. This also
	implies $\spnr{A_Y}=\spnr{\pi_Y(A)}$.
\end{proof}

\subsection{Proof of Lemma~\ref{lem:union}}
\begin{replemma}
	{	\ref{lem:union}}
	\lemunion
\end{replemma}
\begin{proof}
	For $1\leq i \leq k$, let $n_i$ denote the number of states and
	$m_i$ denote the number of transitions of $A_i$. We can construct
	$A$ by using the standard construction for union: We add a new
	initial and a new final state, and connect these with
	$\emptyword$-transitions to the ``old'' inital and final states of
	the $A_i$. Clearly, $\rlang(A)= \bigcup_{i=1}^k \rlang(A_i)$, and as
	each $A_i$ is functional, $\refl(A)=\bigcup_{i=1}^k\refl(A_i)$
	follows. Hence, $A$ is functional, and $\spnr{A}=\bigcup_{i=1}^k
	\spnr{A_i} = \spnr{A_1\cup\cdots \cup A_k}$.  Adding the new states
	takes time $O(k)$, and outputting $A$ takes time $O(\sum_{i=1}^k
	(n_i + m_i))$. As this is the same size as the input of the
	algorithm, the construction runs in linear time.
	
	Now, let $\strs\in\alphabet^*$, and define $\ell\df|\strs|$. If we now
	construct the graph $G$ for $A$ and $\strs$ as in the proof of
	Theorem~\ref{thm:delayAlgorithm}, each $A_i$ leads to its own
	subgraph of $G$; and for distinct $A_i$, $A_j$, these subgraphs are
	disjoint. Hence, $G$ has $O(\ell \sum_{i=1}^k n_i)$ nodes, and as
	each node of the subgraph for $A_i$ can have at most $n_i$
	successors, $G$ has $O(\ell \sum_{i=1}^k n^2_i)$ edges. This allows
	us to compute $A_G$ in $O(\sum_{i=1}^k (\ell n^2_i + m_i n_i))$.
	
	Following the same reasoning, we can conclude that for $A_G$,
	$\minword$ runs in time $O(\ell \sum_{i=1}^k n^2_i)$, as in each of
	the $O(\ell)$ iterations of the for-loop, $S_i$ contains
	$O(\sum_{i=1}^k n_i)$ states, and each state that was derived from
	$A_i$ has $O(n_i)$ successors. As this determines the complexity of
	$\enumalgo$, we conclude that $\spnr{A}(\strs)$ can be enumerated
	with delay $O(\ell \sum_{i=1}^k n^2_i)$ after $O(\sum_{i=1}^k (\ell
	n^2_i + m_i n_i))$ preprocessing. As $n_i\leq n$ and $m_i\leq m$ for
	all $1\leq i \leq k$, this can be simplified to a delay of $O(k\ell
	n^2)$ after a preprocessing of $O(k\ell n^2+ kmn)$.
\end{proof}

\subsection{Proof of Lemma~\ref{lem:joinVset}}
\begin{replemma}{\ref{lem:joinVset}}
	\lemjoinVset
\end{replemma}
\begin{proof}
	\newcommand{\tclos}{\mathcal{T}}
	\newcommand{\veclos}{\mathcal{VE}}
	\newcommand{\Astrict}{A_{\mathsf{strict}}}
	Assume we are given two functional vset-automata $A_1$ and $A_2$, with  $A_i=(V_i,Q_i,q_{0,i},q_{f,i},\delta_i)$.  Let $v_i\df|V_i|$, $n_i \df |Q_i|$, and let $m_i$ denote the number of transitions of $A_i$.  Furthermore, let $V \df V_1 \cup V_2$ and define $v\df|V|$. We assume that in each $A_i$, all states are reachable from $q_{0,i}$, and $q_{f,i}$ can be reached from every state.
	In order to construct a functional vset-automaton $A$ with $\spnr{A}=\spnr{A_1\join A_2}$, we modify the product construction for the intersection of two NFAs. 
	
	Note that construction has to deal with the fact that the ref-words from $\rlang(A_1)$ and $\rlang(A_2)$ can use the variables in different orders. As a simple example, consider $\strr_1 \df \vop{x} \vop{y} \ta \vcl{y}\vcl{x}$ and $\strr_2 \df \vop{y} \vop{x} \ta \vcl{x}\vcl{y}$. Then $\mu^{\strr_1}=\mu^{\strr_2}$, although $\strr_1\neq \strr_2$. Hence, we cannot directly  apply the standard construction for NFA-intersection; and for complexity reasons, we do not rewrite the automata to impose an ordering on successive variable operations. Instead, we use the variable configurations of $A_1$ and $A_2$.
	
	\partitle{Construction:} In order to simplify the construction, we slightly abuse the definition, and label the variable transitions of $A$ with sets of variable transitions, instead of single transitions. We discuss the technical aspects of this  of this at the end of the proof (in particular its effect on the complexity). This decision allows us to restrict the number of states in $A$, which in turn simplifies the construction: 
	Like the standard construction of the intersection, the main idea of the proof is that  $A$ simulates each of $A_1$ and $A_2$ in parallel. Hence, its state set is a  subset of $Q_1\times Q_2$. But as we shall see, this construction uses the variable configurations $\Memf_i$ of $A_i$ to determine how $A$ should act on the variables (instead of the variable transitions). In particular, we shall require that all states of $A$ are consistent, where $(q_1,q_2)\in Q_1\times Q_2$ is \emph{consistent} if $\Memf_1(q_1)(x)=\Memf_2(q_2)(x)$ for all $x\in V_1\cap V_2$. In other words, the variable configurations of $q_1$ and $q_2$ agree on the common variables of $A_1$ and $A_2$. 
	
	By using sets of variable operations, we can disregard the order in which $A_1$ and $A_2$ process the common variables; and as variable transitions connect only consistent states of $A$ that encode consistent pairs of states, the fact that $A_1$ and $A_2$ are functional shall ensure that $A$ is also functional. 
	
	Before the algorithm constructs $A$, it computes the following sets and functions:
	\begin{enumerate}
		\item the variable configurations function $\Memf_i$ for $A_i$,
		\item the set $Q$ of all consistent $(q_1,q_2)\in Q_1\times Q_2$, 
		\item the sets $Q^{\equiv}_i$ of all $(p,q)\in Q_i\times Q_i$ with $\Memf_i(p)=\Memf_i(q)$,
		\item the $\emptyword$-closures $\eclos_i\colon Q_i \to 2^{Q_i}$, where for each $p\in Q_i$, $\eclos_i(p)$ contains all $q\in Q_i$ that can be reached from $p$ by using only $\emptyword$-transitions,
		\item the \emph{variable-$\emptyword$-closures} $\veclos_i\colon Q_i \to 2^{Q_i}$, where for each $p\in Q_i$, $\veclos_i(p)$ contains all $q\in Q_i$ that can be reached from $p$ by using only transitions from  $\{\emptyword\}\cup\xalphabet_{V_i}$, 
		\item for each $\sigma\in\alphabet$, a function $\tclos^{\sigma}_i\colon Q_i \to 2^{Q_i}$, that is defined by  $\tclos^{\sigma}_i(p)\df \bigcup_{q\in\delta_i(p,\sigma)} \eclos_i(q)$ for each $p\in Q_i$.
	\end{enumerate}
	In other words,  $\veclos_i$ can be understood as replacing all variable transitions in $A_i$ with $\emptyword$-transitions, and then computing the $\emptyword$-closure; while  $\tclos^{\sigma}_i(q)$ contains all states that can be reached by processing exactly one terminal transition, and then arbitrarily many $\emptyword$-transitions. 
	
	As mentioned in the proof of Theorem~\ref{thm:delayAlgorithm}, the variable configurations $\Memf_i$ can be computed in $O(v_im_i)$. These can then be used to compute $Q$ in time $O(|V_1\cap V_2| n_1 n_2)$, and each $Q^{\equiv}_i$ in time $O(v_i n_i^2)$. 
	By using the standard algorithm for transitive closures in directed graphs (see e.\,g.\ Skiena~\cite{ski:alg}), each $\eclos_i$ and $\veclos_i$ can be computed in time $O(n_i m_i)$. 
	Finally, each of the $\tclos^{\sigma}_{i}$ can the be computed in time $O(m n_i)$, by processing each transition of $A_i$ and then using $\eclos_i$.  
	
	Now,  $O(v_i n_i^2)$ dominates $O(v_i m_i)$. Furthermore, for at least one $i$, $O(v_i n_i^2)$ also dominates $O(|V_1\cap V_2| n_1 n_2)$, as $|V_1\cap V_2|\leq \min\{v_1, v_2\}$. 
	Hence, the total running time of this precomputations adds up to $O(v_1 n_1^2 + m_1 n_1 + v_2 n_2^2 + m_2 n_2)$. 
	
	We are now ready to define $A\df (V,Q,q_0,q_f,\delta)$, where  $Q$ is the set we defined above, $q_0\df (q_{0,1}, q_{0,2})$, and $q_f \df (q_{f,1},q_{f,2})$. Before we define $\delta$, note that $q_0$ and $q_f$ are always consistent---as $A_1$ and $A_2$ are functional, the variable configurations of the initial and final states map all variables to $\mo$ and $\mc$, respectively.  For the same reason, for all $p\in Q_i$, all $\sigma\in\alphabet$,  $\Memf_i(q)=\Memf_i(p)$ must hold for  all $q\in \tclos^{\sigma}_i(p)$ or $q\in \eclos_i(p)$.
	
	We now define $\delta$ as follows:
	\begin{enumerate}
		\item For every pair of states $(q_1,q_2)$ with $q_i\in\eclos_i(q_{0,i})$, we add an $\emptyword$-transition from $(q_{0,1},q_{0,2})$ to $(q_1,q_2)$. 
		\item For every pair of states $(p_1,p_2)\in Q$ and every $\sigma\in \alphabet$, we compute all $(q_1, q_2) \in \tclos_1^{\sigma}(p_1)\times \tclos^{\sigma}_2(p_2)$. To each such $(q_1,q_2)$, we add a transition with label $\sigma$ from $(p_1,p_2)$. 
		\item For every pair of states $(p_1,p_2)\in Q$, we compute all $(q_1,q_2)\in \veclos_1(p_1)\times \veclos_2(p_2)$. If $(q_1,q_2)\in Q$, and $\Memf_i(q_i)\neq \Memf_i(p_i)$, we add a variable transition from $(p_1,p_2)$ to $(q_1,q_2)$ that contains exactly the operations that map each $\Memf_i(p_i)$ to $\Memf_i(q_i)$. 
	\end{enumerate}
	Each of these three rules guarantees that the destination $(q_1,q_2)$ of each transition is indeed a consistent pair of states (the third rule requires this explicitly, while the other two use the properties of $\eclos_i$ and $\tclos^{\sigma}_i$ mentioned above together with the fact that $(p_1,p_2)$ is consistent). 
	
	\partitle{Correctness and complexity:} To see why this construction is correct, consider the following: The basic idea is that $A$ simulates $A_1$ and $A_2$ in parallel. The actual work of the simulation is performed by the transitions that were derived from the second or the third rule. The former simulate the behavior of $A_1$ and $A_2$ on terminal letters, while the letter simulate sequences of variable actions. In particular, as we use the variable-$\emptyword$-closures, a transition of $A$ can simulate all sequences of variable actions that lead to a consistent state pair. Hence, it does not matter in which order $A_1$ or $A_2$ would process variables, the fact that the state of $A$ is a consistent pair means that both automata agree on their common variables.
	
	In order to keep the second and third rule simple, we also include the first rule: The effect of these transitions is that $A$ can simulate that $A_1$ or $A_2$ changes states via $\emptyword$-transitions before it starts processing terminals or variables. Finally, to see that $A$ is functional, assume that it is not. Then there is a ref-word $\strr\in\rlang(A)\setminus \refl(A)$ that is not valid. Let $\strr_i$ be the result of projecting $\strr$ to $\alphabet\cup \xalphabet_{V_i}$ (i.\,e., removing all elements of $\xalphabet_{V_j}$ with $j\neq i$). Then at least one of $\strr_1$ or $\strr_2$ is also not valid; we assume $\strr_1$. But according to the  definition of $A$, this means that $\strr_1\in\rlang(A_1)$, which implies $\rlang(A_1)\neq\refl(A_1)$, and contradicts our assumption that $A_1$ is functional. Hence, $A$ must be functional.
	
	Regarding the complexity of the construction, note that $A$ has $O(n_1 n_2)$ states and $O(n_1^2 n_2^2)$ transitions, and can be constructed in time $O(n_1^2 n_2^2)$. All transitions that follow from the first rule can obviously be computed in $O(n_1 n_2)$.
	For the second rule, for each of $O(n_1n_2)$ many pairs $(p_1,p_2)\in Q$, we enumerate $O(n_1 n_2)$ many pairs $(q_1, q_2) \in \tclos(p_1)_1^{\sigma}\times \tclos(p_2)^{\sigma}_2$, and check whether $(q_1,q_2)\in Q$. As we precomputed $Q$, this check is possible in $O(1)$, which means that these transitions can be constructed in time $O(n_1^2 n_2^2)$. Likewise, for the third rule, we use that we precomputed the sets $Q^{\equiv}_i$, which allows us to check $\Memf_i(p)\neq \Memf_i(q)$ in time $O(1)$ as well.
	
	For the third rule, we need to take into consideration that the sets of variable operations have to be computed. Doing this na\"{i}vely would bring the complexity to $O((v_1+v_2)n_1^2 n_2^2)$, but using the right representation, these sets can be precomputed independently for each $A_i$ in time $O(v_im_i)$, which allows us to leave the time for the variable transitions unchanged. 
	
	Up to this point, the complexity of the construction with precomputations is $O(v_1 n_1^2 + m_1 n_1 + v_2 n_2^2 + m_2 n_2 + n_1^2 n_2^2)$. For $v,m,n$ with $v_i\in O(v)$, $m_i\in O(m)$, $n\in O(n)$, this becomes 
	$O(v n^2 + m n  + n^4)$, which we can simplify to $O(n^4)$, as $v\leq n$, and $m\leq n^2$. Note that this is the same complexity as constructing the product automaton for the intersection of two NFAs.
	
	If we want to strictly adhere to the definition of vset-automata, we can rewrite $A$ into an equivalent vset-automaton $\Astrict$, by taking each variable transition that is labeled with a set $S$, and replacing it with a sequence of states and transitions that enumerates the elements of $S$. This would add $O((v_1+v_2)n_1^2 n_2^2)$ states and the same number of transitions, and increase the complexity of the construction by $O((v_1+v_2)n_1^2 n_2^2)$, netting a total complexity of $O(m_1 n_1 + m_2 n_2 + (v_1+v_2)n_1^2 n_2^2)$. Under the assumptions from the last paragraph, this can be simplified to $O(v n^4)$.
	
	While a more detailed analysis or a more refined construction might avoid this increase, it might be more advantageous to generalize the definition of vset-automata to allow sets of variable operations on transitions, as the complexities of computing the variable configurations and of Theorem~\ref{thm:delayAlgorithm} generalize to this model (as long as the number of variables is linear in the number of states).
	
	In particular, if our actual goal is evaluating $A_1\join A_2$ by using Theorem~\ref{thm:delayAlgorithm}, we can skip the step of rewriting $A$ into $\Astrict$, and directly construct $G$ from $A$ (or, without explicitly constructing $A$, from the $\eclos_i$ and $\tclos^{\sigma}_i$ functions). The total time for preprocessing is then $O(v_1 n_1^2 + m_1 n_1 + v_2 n_2^2 + m_2 n_2 + \ell n_1^2 n_2^2)$, or, if simplified, $O(\ell n^4)$. This is also the complexity of the delay.
	
	Naturally, all these constructions can be extended to the join of $k$ vset-automata, where the complexities become $O(n^{2k})$ for the construction of the vset-automaton $A$ with sets of variable operations, as well as $O(vn^{2k})$ for the construction of $\Astrict$ or $O(\ell n^{2k})$ for the preprocessing and the delay of the enumeration algorithm.
\end{proof}

\subsection{Proof of Theorem~\ref{thm:equalitiesNPW1hard}}
\begin{reptheorem}{\ref{thm:equalitiesNPW1hard}}
	\thmequalitiesNPWonehard
\end{reptheorem}
\begin{proof}
	We prove the claim by modifying the proof of Theorem~\ref{thm:aacyclicW1}:
	Let $\strs$ and $\gamma$ be defined as in that proof. 
	However, we do not use regex formulas  $\delta_l$ to ensure that all variables $y_{i,l}$ and $x_{l,j}$ with $1\leq i < l < j \leq k$  map to the same substring $\str{v}_l$. Instead, for each $1\leq l \leq k$,  we  define a sequence $S_l$ of $k-2$ binary string equality selections that is equivalent to the $(k-1)$-ary string equality selection on the variables $y_{1,l}$ to $y_{l-1,l}$ and $x_{l,l+1}$ to $x_{l,k}$. 
	We then define our query $q$ as 
	$$
	q\df \pi_{\emptyset}
	S_{1} \cdots S_{k} \gamma.
	$$
	Note that 
	being able to use string equality predicates, we do not need to iterate through all of the possible $\str{v}$ as in the $\delta_l$. 
	Therefore the proof of correctness of the reduction used in the proof of Theorem~\ref{thm:aacyclicW1} can be simply adapted to show correctness of this reduction.
	Observe that this is an FPT reduction with parameter $|q|$ since $|q|$ is of size $O(k^2)$  (i.e., the number of string equality predicates we use depends only on the clique size). 
	Finally, we remark that like the proof of Theorem~\ref{thm:aacyclicW1}, this proof can be adapted to a binary alphabet by using the standard coding techniques.
\end{proof}

\subsection{Proof of Theorem~\ref{thm:enumKEqualitiesPolyDelay}}
\begin{reptheorem}
	{\ref{thm:enumKEqualitiesPolyDelay}}
	\thmenumKEqualitiesPolyDelay
\end{reptheorem}

\begin{proof}
	\newcommand{\qeq}{q_{\mathsf{eq}}}
	Let $A$ be a functional vset-automaton with $n$ states and $m$ transitions, $\str s$ an input string, $k$ a fixed integer, and $x_i,y_i \subseteq \svars(A)$ for $1\leq i \leq k$. Let $\ell\df |\strs|$, and let $v\df \svars(A)$.
	We describe an algorithm for enumerating the tuples in 
	$\spnr{\zeta^=_{x_1,y_1}\cdots \zeta^=_{x_k,y_k} A}(\str s)$ in polynomial delay.
	
	
	\partitle{Algorithm:} Iterating over all of the possible couples of spans of $\str s$, we can obtain all pairs $s_1, s_2$ of spans of $\strs$
	that refer to equal substrings; i.\,e., $\strs_{s_1}=\strs_{s_2}$. This can be done na\"ively in  time $O(\ell^4)$. Using these pairs, we construct a functional vset-automata $\Aeq $ with $\svars(A^=)=\{x_i,y_i\mid 1\leq i \leq k\}$ such that $\mu\in\spnr{\Aeq }$ if and only if $\strs_{\mu(x_i)}=\strs_{\mu(y_i)}$ holds for all $1\leq i \leq k$. In other words, $\Aeq $ describes exactly those spans of $\strs$ that satisfy the string equality selections. 
	
	We ensure that $\Aeq $ has $O(\ell^{3k+1})$ states, and each state except the initial state and the final state has exactly one outgoing transition. More specifically, $\Aeq $ encodes each possible $\mu\in\spnr{A}(\strs)$ in a path of states. Each of these $\mu$ is characterized by selecting for each $1\leq i \leq k$ the length of $\strs_{\mu(x_i)}$ (and, hence, $\strs_{\mu(y_i)}$), as well as the starting positions of $\mu(x_i)$ and $\mu_(y_i)$. Hence, there are $O(\ell^{3k})$ possible $\mu$, and each can be encoded in a path of $O(\ell+k)$ successive states if we take variable operations into account. If we allow multiple variable actions on a single transition (see the Proof of Lemma~\ref{lem:joinVset}), this is possible in $O(\ell)$ states; but even if we do not, we can safely assume that $k\leq \ell$, which also gives $O(\ell)$ states.
	
	The initial state of $\Aeq $ non-deterministically choses with an $\emptyword$-transition which of these paths to take, which means that $\Aeq $ can be constructed with $O(\ell^{3k+1})$ states and transitions. Moreover, we can build $\Aeq $ while enumerating the pairs $s_1,s_2$ as described above, which means that the construction takes time $O(\ell^{3k+1})$. Note that if two selections have a common variable, we can lower the exponent by two, as we only need to account once for the length and the starting position.  For example, $\Aeq $ for $\sel_{x,y}\sel_{y,z} A$
	only needs $O(\ell^5)$ states, instead of $O(\ell^7)$. 
	
	We can now use Lemma~\ref{lem:joinVset} to construct a functional vset-automaton $\Ajoin$ with  $\spnr{\Ajoin}=\spnr{A \join \Aeq }$. As $\spnr{\Ajoin}(\strs)=\spnr{\sel_{x_1,y_1}\cdots\sel_{x_k,y_k}A}(\strs)$ holds, this solves our problem. 
	
	\partitle{Complexity analysis:} If we directly use Lemma~\ref{lem:joinVset} together with Theorem~\ref{thm:delayAlgorithm} na\"ively, preprocessing and delay each have a complexity of $O(\ell (n \ell^{3k+1})^2)=O(\ell^{6k+3}n^2)$. While this is polynomial in $n$ and $\ell$, we can lower the degree by exploiting the structure of $\Aeq$.
	
	Take particular note of the structural similarity between $\Aeq$ and the graphs that are used in the proof of Theorem~\ref{thm:delayAlgorithm} when matching on the string $\strs$: Both have $\ell$ levels, where a level $i$ represents that $i$ letters of $\ell$ have been processed.
	Hence, if we wanted to enumerate $\spnr{\Aeq}(\strs)$ as described in the proof of Theorem~\ref{thm:delayAlgorithm}, the constructed graph would only need to have $O(\ell^{3k+1})$ nodes, with exactly one outgoing edge for all nodes (except the initial and the final node). Consequently, the resulting NFA that is used for $\enumalgo$ would only have $O(\ell^{3k+1})$ nodes and $O(\ell^{3k+1})$ transitions.
	
	We now examine $\Ajoin$. Each of its states $(q,\qeq)$ encodes a state $q$ of $A$ and a state $\qeq$ of $\Aeq$. The state $q$ has $O(n)$ successor states in $A$, and $\qeq$ has at most one successor state in $\Aeq$ (unless it is the initial state; but as this state cannot be reached after it has been left, the number of transitions from it are dominated by the number of the other transitions). Hence, $(q,\qeq)$ has at most $O(n)$ successors in $\Ajoin$. In total, $\Ajoin$ has $O(\ell^{3k+1}n)$ states, and $O(\ell^{3k+1}n^2)$ transitions. By combining these observations with the construction from the proof of Lemma~\ref{lem:joinVset}, we see that $\Ajoin$ can be constructed in time $O(vn^2 + mn + \ell^{3k+1} n^2)$.
	
	Moreover, $\Ajoin$ exhibits the same level structure as $\Aeq$. Hence, in order to enumerate $\spnr{\Ajoin}(\strs)$, we can directly derive $G$ and $A_G$ from $\Ajoin$, without needing the extra factor of $O(\ell)$ states to store how much of $\strs$ has been processed. 
	
	Thus,  $A_G$ has $O(\ell^{3k+1}n)$ states and $O(\ell^{3k+1}n^2)$ transitions. In time $O(\ell^{3k+1} n^2)$, we can construct $A_G$ from $\Ajoin$, precompute its functions $\minlet$ and $\nextlet$, and ensure that the final state is reachable from every state, and that each state is reachable. Finally, note that the alphabet $\Konfset$ of $A_G$ is the set of all variable configurations of $A$, as $\svars(\Aeq)\subseteq \svars(A)$.
	
	All that remains is calling $\enumalgo$ on $A_G$. As in the proof of Theorem~\ref{thm:delayAlgorithm}, its complexity is determined by the complexity of $\minword$: The for-loop is executed $O(\ell)$ times. As each level of $A_G$ contains $O(\ell^{3k}n)$ nodes, the same bound holds for each $S_i$. Furthermore, as each of these nodes has $O(n)$ successors, each $S_{i+1}$ can be computed in time $O(\ell^{3k}n^2)$. Hence, executing $\minword$ requires time $O(\ell^{3k+1}n^2)$. 
	
	Hence, we can enumerate $\spnr{\Ajoin}(\strs)$ with a delay of $O(\ell^{3k+1}n^2)$. Including the construction of $\Ajoin$, the preprocessing takes $O(vn^2 + mn + \ell^{3k+1}n^2)$, which we can simplify to $O(n^3 + \ell^{3k+1}n^2)$. Except for rather pathological cases, we can assume this to be $O(\ell^{3k+1}n^2)$. 	
\end{proof}	
	
\end{document}